\numberwithin{equation}{section}
\newtheorem{Theorem}{Theorem}[section]
\newtheorem{Lemma}[Theorem]{Lemma}
\newtheorem{Proposition}[Theorem]{Proposition}
 { \theoremstyle{definition}
\newtheorem{Definition}[Theorem]{Definition}
\newtheorem{Remark}[Theorem]{Remark}
\newtheorem*{Question}{Question}
}
\begin{document}


\newcommand{\arXivNumber}{1611.02560}

\renewcommand{\PaperNumber}{013}

\FirstPageHeading

\ShortArticleName{B\^ocher and Abstract Contractions of 2nd Order Quadratic Algebras}

\ArticleName{B\^ocher and Abstract Contractions\\ of 2nd Order Quadratic Algebras}

\Author{Mauricio A.~ESCOBAR RUIZ~$^{\dag^1\dag^2}$, Ernest G. KALNINS~$^{\dag^3}$, Willard MILLER Jr.~$^{\dag^2}$\\ and Eyal SUBAG~$^{\dag^4}$}

\AuthorNameForHeading{M.A.~Escobar-Ruiz, E.G.~Kalnins, W.~Miller Jr.\ and E.~Subag}

\Address{$^{\dag^1}$~Instituto de Ciencias Nucleares, UNAM, Apartado Postal 70-543, 04510 Mexico D.F.\ Mexico}
\EmailDD{\href{mailto:mauricio.escobar@nucleares.unam.mx}{mauricio.escobar@nucleares.unam.mx}}

\Address{$^{\dag^2}$~School of Mathematics, University of Minnesota, Minneapolis, Minnesota, 55455, USA}
\EmailDD{\href{mailto:miller@ima.umn.edu}{miller@ima.umn.edu}}
\URLaddressDD{\url{https://www.ima.umn.edu/~miller/}}

\Address{$^{\dag^3}$~Department of Mathematics, University of Waikato, Hamilton, New Zealand}
\EmailDD{\href{mailto:math0236@waikato.ac.nz}{math0236@waikato.ac.nz}}

\Address{$^{\dag^4}$~Department of Mathematics, Pennsylvania State University, State College,\\
\hphantom{$^{\dag^4}$}~Pennsylvania, 16802, USA}
\EmailDD{\href{mailto:eyalsubag@gmail.com}{eyalsubag@gmail.com}}

\ArticleDates{Received November 19, 2016, in f\/inal form February 27, 2017; Published online March 06, 2017}

\Abstract{Quadratic algebras are generalizations of Lie algebras which include the symmetry algebras of 2nd order superintegrable systems in 2~dimensions as special cases. The superintegrable systems are exactly solvable physical systems in classical and quantum mechanics. Distinct superintegrable systems and their quadratic algebras can be related by geometric contractions, induced by B\^ocher contractions of the conformal Lie algebra ${\mathfrak{so}}(4,\mathbb {C})$ to itself. In this paper we give a precise def\/inition of B\^ocher contractions and show how they can be classif\/ied. They subsume well known contractions of ${\mathfrak{e}}(2,\mathbb {C})$ and ${\mathfrak{so}}(3,\mathbb {C})$ and have important physical and geometric meanings, such as the derivation of the Askey scheme for obtaining all hypergeometric orthogonal polynomials as limits of Racah/Wilson polynomials. We also classify abstract nondegenerate quadratic algebras in terms of an invariant that we call a canonical form. We describe an algorithm for f\/inding the canonical form of such algebras. We calculate explicitly all canonical forms arising from quadratic algebras of 2D nondegenerate superintegrable systems on constant curvature spaces and Darboux spaces. We further discuss contraction of quadratic algebras, focusing on those coming from superintegrable systems.}

\Keywords{contractions; quadratic algebras; superintegrable systems; conformal superintegrability}

\Classification{22E70; 16G99; 37J35; 37K10; 33C45; 17B60; 81R05; 33C45}

\section{Introduction}\label{int1}
Second order 2D superintegrable systems and their associated quadratic symmetry algebras are basic in mathematical physics. Among the simplest such solvable systems are the 2D Kepler and hydrogen atom and the isotropic and Higgs oscillators~\cite{MPW2013, SCQS}. All the systems are multiseparable, with the quantum separable solutions characterized as eigenfunctions of commuting operators in the quadratic algebras. The separation equations are the Gaussian hypergeometric equation and its various conf\/luent forms in full generality, as well as the Heun equation and its conf\/luent forms in full generality~\cite{RKM2016}. Solutions of the hypergeometric and Heun equations are linked through their solution of the same superintegrable system. The conf\/luences are related to B\^ocher contractions of the conformal algebra ${\mathfrak{so}}(4,\mathbb {C})$ to itself~\cite{KMS2016}. The interbasis expansion coef\/f\/icients relating distinct separable systems lead to other special functions, several of them functions of discrete variables, such as the Racah, Wilson and Hahn polynomials in full genera\-li\-ty~\cite{KMP2014}. The contractions also allow the derivation of the Askey scheme for the classif\/ication of hypergeometric orthogonal polynomials. The classif\/ication of quasi-exactly solvable (QES) systems based on the Heun operator coincide exactly with QES separation equations for these superintegrable systems~\cite{TurbinerHeun,Turbiner2016}.

In short, the structure and classif\/ication of these quadratic algebras and their relations via contractions are matters of considerable signif\/icance in mathematical physics. Historically, the superintegrable systems have been classif\/ied and their associated quadratic algebras then computed. Here we are reversing the process: we f\/irst classify abstract quadratic algebras and then determine which of these correspond to 2nd order superintegrable systems. Also we determine how the abstract quadratic algebras are related via contractions and examine which of these contractions can be realized geometrically as B\^ocher contractions. The eventual goal is to isolate the algebras and contractions that do not correspond to geometrical superintegrable systems and to determine their signif\/icance.

\looseness=-1 B\^ocher invented a recipe for a limit procedure which showed how to f\/ind what we now know are all $R$-separable coordinate systems for free Laplace and wave equations in~$n$ dimensions~\cite{Bocher}. We have recently recognized that these limits can be interpreted as contractions of ${\mathfrak{so}}(n+2,\mathbb {C})$ to itself and classif\/ied; we call them B\^ocher contractions. In this paper we give for the f\/irst time the precise def\/inition of these contractions and their properties and classif\/ication for the case $n=2$.

We start with some basic facts. We def\/ine a quantum (Helmholtz) superintegrable system as an integrable Hamiltonian system on an $n$-dimensional pseudo-Riemannian manifold with potential: $H=\Delta_n+V$ that admits $2n-1$ algebraically independent partial dif\/ferential opera\-tors~$L_j$ commuting with $H$, the maximum possible: $ [H,L_j]=0$, $j=1,2,\dots, 2n-1$. Similarly a classical superintegrable on such a manifold, with Hamiltonian ${\mathcal H}=\sum g^{ij}p_ip_j+V$, is an integrable system that admits $2n-1$ functionally independent constants of the motion~${\mathcal L}_j$, polynomial in the momenta, in involution with $\mathcal H$, the maximum possible. Superintegrability captures the properties of quantum Hamiltonian systems that allow the Schr\"odinger eigenvalue problem (or Helmholtz equation) $H\Psi=E\Psi$ to be solved exactly, analytically and algebraically~\cite{EVAN, FORDY, MPW2013,TTW2001, SCQS} and the classical trajectories to be computed algebraically. A system is of order $K$ if the maximum order of the symmetry operators (or the polynomial order of the classical constants of the motion), other than $H$, is~$K$. For $n=2$, $K=1,2$ all systems are known, e.g.,~\cite{DASK2007, KKM20041,KKM20041-II,KKM20041-III,KKM20041-IV,KKM20041-V,KKM20041-VI}. For $K=1$ the symmetry algebras are just Lie algebras.

We review brief\/ly the facts for {\it free} 2nd order superintegrable systems (i.e., no potential, $K=2$) in the case $n=2, 2n-1=3$. The complex spaces with Laplace--Beltrami operators admitting at least three 2nd order symmetries were classif\/ied by Koenigs (1896)~\cite{Koenigs}. They are: the two constant curvature spaces (f\/lat space and the complex sphere), the four Darboux spaces (one of which, D4, contains a~parameter)~\cite{KKMW}, and 5 families of 4-parameter Koenigs spaces, see Section~\ref{Helmholtzclasses}. For 2nd order systems with non-constant potential the generating symmetry operators of each system close under commutation (or via Poisson brackets in the classical case) to determine a
quadratic algebra, and the irreducible representations of the quantum algebra determine the eigenvalues of $H$ and their multiplicities. More precisely, in the classical case, closedness means that the Poisson algebra generated by the constants of motion is f\/initely generated as an associative algebra. The quantum case is def\/ined analogously. Here we consider only the {nondegenerate superintegrable systems}: Those with 4-parameter potentials (including the additive constant) (the maximum possible):
\begin{gather}\label{nondegpot}V({\bf x})= a_1V_{(1)}({\bf x})+a_2V_{(2)}({\bf x})+a_3V_{(3)}({\bf x})+a_4,\end{gather}
where $\{V_{(1)}({\bf x}), V_{(2)}({\bf x}), V_{(3)}({\bf x}), 1\}$ is a linearly independent set. Here the possible classical and quantum
potentials are identical and there is a 1-1 relationship between classical and quantum systems. The classical constants of the motion determine the quantum symmetry operators, modulo symmetrization. The classical symmetry algebra generated by ${\mathcal H}$, ${\mathcal L}_1$, ${\mathcal L}_2$ always closes under commutation and gives the following nondegenerate quadratic algebra structure:

\begin{Definition}\label{def1}
An abstract {\it nondegenerate $($classical$)$ quadratic algebra} is a Poisson algebra with functionally independent generators ${\mathcal H}$, ${\mathcal L}_1$, ${\mathcal L}_2$, and parameters $a_1$, $a_2$, $a_3$, $a_4$, such that all generators are in involution with ${\mathcal H}$ and the following relations hold:
\begin{gather*}
\{{\mathcal L}_j,{\mathcal R}\}=\sum_{0\leq e_1+e_2+e_3\leq 2} M^{(j)}_{e_1,e_2,e_3} {\mathcal L}_1^{e_1}{\mathcal L}_2^{e_2}
{\mathcal H}^{e_3}, \qquad e_k\geq 0,\qquad {\mathcal L}_k^0=1,\\
{\mathcal R}^2={\mathcal F}\equiv\sum_{0\leq e_1+e_2+e_3\leq 3} N_{e_1,e_2,e_3} {\mathcal L}_1^{e_1}{\mathcal L}_2^{e_2} {\mathcal H}^{e_3}.
\end{gather*}
Here, ${\mathcal R}\equiv \{{\mathcal L}_1,{\mathcal L}_2\}$. In both equations the constants $M^{(j)}_{e_1, e_2, e_3}$ and $N_{e_1, e_2, e_3}$ are polynomials in the parameters $a_1$, $a_2$, $a_3$ of degree $2-e_1-e_2-e_3$ and $3-e_1-e_2-e_3$, respectively. The symmetry algebras obeyed by the quantum superintegrable systems have a similar structure, slightly more complicated due to the need for symmetrization of the noncommuting operators. In the case $a_1=a_2=a_3=a_4=0$, the corresponding quadratics algebras are called {\it free}.
\end{Definition}

Note that we can think of a nondegenerate (classical or quantum) quadratic algebra as a~family of algebras parametrized by the constants~$a_i$. The algebra is called quadratic because the Poisson brackets $\{{\mathcal L}_j,{\mathcal R}\}$ are 2nd order polynomials in the generators ${\mathcal L}_i$, ${\mathcal H}$, whereas for a Lie algebra they are 1st order. Nondegenerate 2D superintegrable systems always have a quadratic algebra structure in which the parameters~$a_j$ are those of the potential; we call these quadratic algebras {\it geometrical}.

Although the full sets of classical structure equations can be rather complicated, the function~$\mathcal F$ contains all of the structure information for nondegenerate systems. In particular, it is easy to show that, e.g.,~\cite{KM2014},
\begin{gather}\label{R2eqn}\{{\mathcal L}_1,{\mathcal R}\}=\frac12\frac{\partial {\mathcal F}}{\partial {\mathcal L}_2},
\qquad \{{\mathcal L}_2,{\mathcal R}\}=-\frac12\frac{\partial {\mathcal F}}{\partial {\mathcal L}_1},\end{gather} for any algebra satisfying Def\/inition~\ref{def1}, so $\mathcal F$ determines the structure equations explicitly.

For a nondegenerate superintegrable system with potential (\ref{nondegpot}) the structure equations are determined by ${\mathcal F}( {\mathcal H}, {\mathcal L}_1,{\mathcal L}_2,a_1,a_2,a_3,a_4)$ as def\/ined above. The ef\/fect of a St\"ackel transform~\cite{KMP2010} generated by the specif\/ic special choice of the potential function, say $V_{(3)}$ is to determine a new superintegrable system with Casimir ${\tilde{\mathcal R}}^2={\mathcal F}( -a_3, {\mathcal L}_1,{\mathcal L}_2,a_1,a_2,-{\mathcal H},a_4)$. Of course, the switch of~$a_3$ and~$\mathcal H$ is only for illustration; there is a St\"ackel transform that replaces any~$a_j$ by~$-\mathcal H$ and~$\mathcal H$ by~$-a_j$ and similar transforms that apply to any basis that we choose for the potential space.

If we consider the free systems (zero potential which is the case with all parameters equal zero) on the spaces classif\/ied by Koenigs, then the vector space of 2nd order symmetries may be larger than 3: 6-dimensional for constant curvature spaces, 4-dimensional for Darboux spaces, and 3-dimensional for Koenigs spaces. In general the Poisson algebras generated by taking Poisson brackets of these 2nd order elements are inf\/inite-dimensional; they do not close (in the sense that was explained above). However, in \cite{KM2014}, the possible 3-dimensional subspaces of 2nd order free symmetries that generate quadratic algebras were classif\/ied, up to conjugacy by symmetry groups of these spaces: ${\mathfrak{e}}(2,\mathbb {C})$ for f\/lat space, ${\mathfrak{o}}(3,\mathbb {C})$ for nonzero constant curvature spaces, and a 1-dimensional translation subalgebra for Darboux spaces. For Koenigs spaces the f\/irst order symmetry algebra is 0-dimensional and the space of 2nd order symmetries is 3-dimensional which always generates a unique quadratic algebra.

\begin{Theorem} For each of the spaces classified by Koenigs, there is a bijection between free quadratic algebras of $2$nd order symmetries, classified up to conjugacy, and $2$nd order nondege\-ne\-rate superintegrable systems on these spaces.
 \end{Theorem}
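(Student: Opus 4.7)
The plan is to construct an explicit map in each direction and verify bijectivity, leveraging the classifications already in the literature for both sides. Define a forward map $\Phi$ from nondegenerate superintegrable systems on a Koenigs space to free quadratic algebras by sending a system with potential $V=a_1V_{(1)}+a_2V_{(2)}+a_3V_{(3)}+a_4$ and generators $({\mathcal H},{\mathcal L}_1,{\mathcal L}_2)$ to the triple obtained by setting $a_1=a_2=a_3=a_4=0$, i.e., the triple $(\Delta,{\mathcal L}_1^0,{\mathcal L}_2^0)$ consisting of the Laplace--Beltrami operator and the pure 2nd-order parts of the ${\mathcal L}_i$. The first step is to check that this triple spans a 3-dimensional subspace of the space of free 2nd-order symmetries (by functional independence of the original generators) and that its Poisson brackets close to a quadratic algebra in the sense of Definition~\ref{def1} with $a_i=0$; this follows immediately because the original structure relations are polynomial in the parameters $a_i$, so setting them to zero gives a valid closed system. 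Since conjugation by the relevant symmetry group (${\mathfrak e}(2,\mathbb C)$, ${\mathfrak o}(3,\mathbb C)$, the 1-dim translation subalgebra, or the trivial group) acts naturally on both potentials and free symmetries, $\Phi$ descends to conjugacy classes.

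For injectivity, suppose two nondegenerate superintegrable systems $(V,{\mathcal L}_1,{\mathcal L}_2)$ and $(V',{\mathcal L}'_1,{\mathcal L}'_2)$ have conjugate images under $\Phi$. After applying the symmetry-group conjugation, the free triples coincide. Then the 4-parameter potentials and the parameter-dependent parts of the ${\mathcal L}_j$ are determined by the Bertrand--Darboux integrability conditions associated with the free 2nd-order symmetries (this is the standard fact that on a Koenigs-classified space a nondegenerate superintegrable extension of a given free symmetry system is unique up to the choice of basis for the 4-parameter potential space), so the two systems are conjugate as superintegrable systems. For surjectivity, given a free quadratic algebra generated by a 3-dimensional subspace of free 2nd-order symmetries, one must produce a nondegenerate superintegrable extension with 4-parameter potential. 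The key mechanism is that the free structure equations, together with the vanishing Poisson brackets $\{{\mathcal L}_j^0,\Delta\}=0$, impose on the candidate potential $V$ a consistent overdetermined linear system (the Bertrand--Darboux conditions) whose solution space is 4-dimensional precisely when the free quadratic algebra arises from a 3-dimensional subspace of the type classified in \cite{KM2014}.

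Concretely, for each of the three space types one proceeds by cases. On the constant curvature spaces, the classification in \cite{KM2014} of the ${\mathfrak e}(2,\mathbb C)$- or ${\mathfrak o}(3,\mathbb C)$-conjugacy classes of 3-dimensional subspaces of the 6-dimensional symmetry space produces exactly the list of known nondegenerate superintegrable systems on those spaces \cite{KKM20041,KKM20041-II,KKM20041-III,KKM20041-IV,KKM20041-V,KKM20041-VI}, and a direct match establishes the bijection. On the Darboux spaces, the 3-dimensional subspaces of the 4-dimensional free symmetry space, modulo the 1-dimensional translation, are matched against the Darboux-space nondegenerate systems of \cite{KKMW}. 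For Koenigs spaces, the space of free 2nd-order symmetries is 3-dimensional, the symmetry group is trivial, and there is a unique free quadratic algebra, which by \cite{Koenigs} corresponds to a unique nondegenerate system with 4-parameter potential.

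The main obstacle is the surjectivity step: demonstrating that the Bertrand--Darboux integrability conditions associated to a free quadratic algebra always admit a 4-parameter family of solutions. In principle this is a cohomological/rank computation on an overdetermined PDE system, but in practice the cleanest route is the case-by-case verification sketched above, since the classifications on both sides are already complete; the essential content of the theorem is then that the two enumerations are in natural bijection via $\Phi$.
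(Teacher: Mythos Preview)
Your proposal is correct and aligns with the paper's approach in its essential mechanism: both identify the Bertrand--Darboux conditions as the bridge between free quadratic algebras and nondegenerate potentials, and both defer the details to \cite{KM2014}. The paper does not give a self-contained proof here either; it simply states that the argument is constructive and that the quadratic algebra closure condition guarantees the Bertrand--Darboux equations for the potential are satisfied identically.

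Where you diverge slightly is in emphasis. You frame surjectivity as the main obstacle and propose resolving it by a case-by-case match of the two existing classifications (free quadratic algebras on one side, nondegenerate systems on the other). The paper's sketch, following \cite{KM2014}, is more direct: it asserts that closure of the free quadratic algebra is \emph{equivalent} to the Bertrand--Darboux integrability conditions holding identically, so that from any free quadratic algebra one can explicitly compute the 4-parameter potential without first knowing the list of nondegenerate systems. Your enumerative route certainly works (since both classifications are complete), but it obscures the conceptual point that the free algebras ``know'' their potentials a priori; the paper's constructive route makes the bijection canonical rather than a coincidence of two lists.
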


The proof of this theorem is constructive \cite{KM2014}. Given a free quadratic algebra $\tilde Q$ one can compute the potential~$V$ and
the symmetries of the quadratic algebra~$Q$ of the nondegenerate superintegrable system. (The quadratic algebra structure guarantees that the
Bertrand--Darboux equations for the potential are satisf\/ied identically. In this sense the free systems ``know'' the possible nondegenerate superintegrable systems they can support. Since there is a 1-1 relationship between quantum and classical nondegenerate systems, the information about all of these systems is encoded in the free quadratic algebras generated by 2nd order constants of the motion (Killing tensors) of constant curvature, Darboux and Koenigs spaces. Note that for f\/lat space the generators for the free quadratic algebras can be expressed as 2nd order elements in the universal enveloping algebra of ${\mathfrak{e}}(2,\mathbb {C})$, and for nonzero constant curvature spaces the generators for the free quadratic algebras can be expressed as 2nd order elements in the universal enveloping algebra of ${\mathfrak{so}}(3,\mathbb {C})$~\cite{KM2014}.

All 2nd order 2D superintegrable systems with potential and their quadratic algebras are known. There are 33 nondegenerate systems, on a variety of manifolds classif\/ied up to conjugacy, see Section~\ref{Helmholtzclasses} where the numbering for constant curvature systems is taken from~\cite{KKMP}, (the numbers are not always consecutive because the lists in~\cite{KKMP} also include degenerate systems) and the numbering for Darboux spaces is taken from~\cite{KKMW}. For each system we give the 4-parameter potential and the abstract free structure equation ${\mathcal R}^2-{\mathcal F}=0$. Note that many of the abstract structure equations for the superintegrable systems are identical, even for superintegrable systems on dif\/ferent manifolds. Of course the geometrical structure equations are distinct because the generators ${\mathcal L}_1$, ${\mathcal L}_2$, ${\mathcal H}$ are distinct for each geometrical superintegrable system.

Under the St\"ackel transform (we discuss this in Section~\ref{2.1}) these systems divide into 6 equivalence classes with representatives on f\/lat space and the 2-sphere, see \cite{Kress2007} and Section~\ref{3.2}.

\subsection{The Helmholtz nondegenerate superintegrable systems} \label{Helmholtzclasses}

{\bf Flat space systems}: ${\mathcal H}\equiv p_x^2+p_y^2+V=E$.
\begin{enumerate}\itemsep=0pt

\item $E1$: $V=\alpha\big(x^2+y^2\big)+\frac{\beta}{x^2}+\frac{\gamma}{y^2}$, ${\mathcal R}^2={\mathcal L}_1{\mathcal L}_2({\mathcal H}+{\mathcal L}_2)$,
\item $E2$: $V=\alpha\big(4x^2+y^2\big)+\beta x+\frac{\gamma}{ y^2}$, ${\mathcal R}^2={\mathcal L}_1^2({\mathcal H}+{\mathcal L}_1)$,

\item $E3'$: $V=\alpha\big(x^2+y^2\big)+\beta x+\gamma y$,\quad ${\mathcal R}^2=0$,

\item $ E7$: $V=\frac{\alpha(x+iy)}{\sqrt{(x+iy)^2-b}}+\frac{\beta (x-iy)}{\sqrt{(x+iy)^2-b} \big(x+iy+\sqrt{(x+iy)^2-b}\big)^2} +\gamma \big(x^2+y^2\big)$, ${\mathcal R}^2={\mathcal L}_1{\mathcal L}_2^2+b{\mathcal L}_2{\mathcal H}^2$,

\item $E8$: $V=\frac{\alpha (x-iy) }{(x+iy)^3}+\frac{\beta}{(x+iy)^2}+\gamma\big(x^2+y^2\big)$, ${\mathcal R}^2={\mathcal L}_1{\mathcal L}_2^2$,

\item $E9$: $V=\frac{\alpha}{\sqrt{x+iy}}+\beta y+\frac{\gamma (x+2iy)}{\sqrt{x+iy}}$, ${\mathcal R}^2={\mathcal L}_1({\mathcal L}_1+{\mathcal H})^2$,

\item $E10$: $V=\alpha(x-iy)+\beta \big(x+iy-\frac32(x-iy)^2\big)+\gamma\big(x^2+y^2-\frac12(x-iy)^3\big)$, ${\mathcal R}^2={\mathcal L}_1^3$,

\item $ E11$: $V=\alpha(x-iy)+\frac{\beta (x-iy)}{\sqrt{x+iy}}+\frac{\gamma }{\sqrt{x+iy}}$, ${\mathcal R}^2={\mathcal L}_1{\mathcal H}^2$,

\item ${ E15}$: $V=f(x-iy)$, where $f$ is arbitrary, ${\mathcal R}^2={\mathcal L}_1^3$ (the exceptional case, characterized by the fact that the symmetry generators are functionally linearly dependent \cite{KKM20041,KKM20041-II,KKM20041-III,KKM20041-IV,KKM20041-V,KKM20041-VI, KKMP}),

 \item $E16$: $V=\frac{1}{\sqrt{x^2+y^2}}\Big(\alpha+\frac{\beta}{y+\sqrt{x^2+y^2}}+\frac{\gamma}{y-\sqrt{x^2+y^2}}\Big)$, ${\mathcal R}^2={\mathcal L}_1\big({\mathcal L}_1{\mathcal H}+{\mathcal L}_2^2\big)$,

\item $E17$: $V=\frac{\alpha}{\sqrt{x^2+y^2}}+\frac{\beta}{(x+iy)^2}+\frac{\gamma}{(x+iy)\sqrt{x^2+y^2}}$, ${\mathcal R}^2={\mathcal L}_1{\mathcal L}_2^2$,

 \item $ E19$: $V=\frac{\alpha(x+iy)}{\sqrt{(x+iy)^2-4}}+\frac{\beta}{\sqrt{(x-iy)(x+iy+2)}}+\frac{\gamma}{\sqrt{(x-iy)(x+iy-2)}}$, ${\mathcal R}^2={\mathcal L}_1\big({\mathcal L}_2^2+{\mathcal H}^2\big)$,

\item $E20$: $V=\frac{1}{\sqrt{x^2+y^2}}\Big(\alpha+\beta \sqrt{x+\sqrt{x^2+y^2}}+\gamma \sqrt{x-\sqrt{x^2+y^2}}\Big)$, ${\mathcal R}^2={\mathcal H}\big({\mathcal L}_1^2+{\mathcal L}_2^2\big)$.
\end{enumerate}

{\bf Systems on the complex 2-sphere}: ${\mathcal H}\equiv {\mathcal J}_{23}^2+{\mathcal J}_{13}^2+{\mathcal J}_{12}^2+V=E$. Here, ${\mathcal J}_{k\ell}=s_k p_{s_\ell}-s_\ell p_{s_k}$ and $ s_1^2+s_2^2+s_3^2=1$.

\begin{enumerate}\itemsep=0pt

\item $S1$: $V=\frac{\alpha }{(s_1+is_2)^2}+\frac{\beta s_3}{(s_1+is_2)^2}+\frac{\gamma(1-4s_3^2)}{(s_1+is_2)^4}$, ${\mathcal R}^2={\mathcal L}_1^3$,

\item $S2$: $V=\frac{\alpha }{s_3^2}+\frac{\beta }{(s_1+is_2)^2}+\frac{\gamma(s_1-is_2)}{(s_1+is_2)^3}$, ${\mathcal R}^2={\mathcal L}_1{\mathcal L}_2^2$,

\item $S4$: $V=\frac{\alpha}{(s_1+is_2)^2}+\frac{\beta s_3}{\sqrt{s_1^2+s_2^2}}+\frac{\gamma}{(s_1+is_2)\sqrt{s_1^2+s_2^2}}$, ${\mathcal R}^2={\mathcal L}_1{\mathcal L}_2^2$,

\item $ S7$: $V=\frac{\alpha s_3}{\sqrt{s_1^2+s_2^2}}+\frac{\beta s_1}{s_2^2\sqrt{s_1^2+s_2^2}}+\frac{\gamma}{s_2^2}$, ${\mathcal R}^2={\mathcal L}_1^2{\mathcal L}_2+{\mathcal L}_2^2{\mathcal L}_1 -\frac{1}{16}{\mathcal L}_1^2{\mathcal H}$,

\item $S8$: $V=\frac{\alpha s_2}{\sqrt{s_1^2+s_3^2}}+\frac{\beta (s_2+is_1+s_3)}{\sqrt{(s_2+is_1)(s_3+is_1)}} +\frac{\gamma(s_2+is_1-s_3)}{\sqrt{(s_2+is_1)(s_3-is_1)}}$, ${\mathcal R}^2={\mathcal L}_1^2{\mathcal L}_2+{\mathcal L}_1{\mathcal L}_2^2-\frac14{\mathcal L}_1{\mathcal L}_2{\mathcal H}$,

\item $S9$: $V=\frac{\alpha}{s_1^2}+\frac{\beta}{s_2^2}+\frac{\gamma}{s_3^2}$, ${\mathcal R}^2={\mathcal L}_1^2{\mathcal L}_2+{\mathcal L}_1{\mathcal L}_2^2+\frac{1}{16}{\mathcal L}_1{\mathcal L}_2{\mathcal H}$.
\end{enumerate}

{\bf Darboux 1 systems}: ${\mathcal H}\equiv\frac{1}{4x}\big(p_x^2+p_y^2\big)+V=E$.
\begin{enumerate}\itemsep=0pt
\item ${D1A}$: $ V=\frac{b_1(2x-2b+iy)}{x\sqrt{x-b+iy}}+\frac{b_2}{x\sqrt{x-b+iy}}+\frac{b_3}{x}+b_4$, ${\mathcal R}^2={\mathcal L}_1^3+{\mathcal L}_2{\mathcal L}_1{\mathcal H} -b{\mathcal L}_1^2{\mathcal H}-2ib{\mathcal H}^2{\mathcal L}_2$,

\item ${ D1B}$: $V=\frac{b_1(4x^2+y^2)}{x}+\frac{b_2}{x}+\frac{b_3}{xy^2}+b_4$, ${\mathcal R}^2={\mathcal L}_1^3+{\mathcal L}_2{\mathcal L}_1{\mathcal H}$,

 \item ${ D1C}$ $V=\frac{b_1(x^2+y^2)}{x}+\frac{b_2}{x}+\frac{b_3y}{x}+b_4$, ${\mathcal R}^2={\mathcal L}_2{\mathcal H}^2$.
\end{enumerate}

{\bf Darboux 2 systems}: ${\mathcal H}\equiv\frac{x^2}{x^2+1}\big(p_x^2+p_y^2\big)+V=E$.
\begin{enumerate}\itemsep=0pt
\item ${D2A}$: $V=\frac{x^2}{x^2+1}\big(b_1\big(x^2+4y^2\big)+\frac{b_2}{x^2}+b_3y\big)+b_4$, ${\mathcal R}^2={\mathcal L}_1^3+{\mathcal L}_1^2{\mathcal H} +\frac14{\mathcal L}_1{\mathcal H}^2$,
\item ${D2B}$: $V=\frac{x^2}{x^2+1}\big(b_1\big(x^2+y^2\big)+\frac{b_2}{x^2}+\frac{b_3}{y^2}\big)+b_4$, ${\mathcal R}^2={\mathcal L}_1{\mathcal L}_2^2 +{\mathcal L}_1{\mathcal L}_2{\mathcal H}-\frac{1}{16}{\mathcal L}_2{\mathcal H}^2$,
 \item ${D2C}$: $V=\frac{x^2}{\sqrt{x^2+y^2}(x^2+1)}\Big(b_1+\frac{b_2}{y+\sqrt{x^2+y^2}}+\frac{b_3}{y-\sqrt{x^2+y^2}}\Big)+b_4$, ${\mathcal R}^2={\mathcal L}_1{\mathcal L}_2^2+{\mathcal L}_1^2{\mathcal H}-\frac14{\mathcal L}_1{\mathcal H}^2$.
\end{enumerate}

{\bf Darboux 3 systems}: ${\mathcal H}\equiv\frac12\frac{e^{2x}}{e^x+1}\big(p_x^2+p_y^2\big)+V=E$.
\begin{enumerate}\itemsep=0pt
 \item ${ D3A}$: $ V=\frac{b_1}{1+e^x}+\frac{b_2e^x}{\sqrt{1+2e^{x+iy}}(1+e^x)}+\frac{b_3e^{x+iy}}{\sqrt{1+2e^{x+iy}}(1+e^x)}+b_4$,
 ${\mathcal R}^2={\mathcal H}\big({\mathcal L}_1^2+{\mathcal L}_2^2-{\mathcal H}^2\big)$,
\item ${ D3B}$: $V=\frac{e^x}{e^x+1}\big(b_1+e^{-\frac{x}{2}}\big(b_2\cos\frac{y}{2}+b_3\sin\frac{y}{2}\big)\big)+b_4$, ${\mathcal R}^2={\mathcal L}_1{\mathcal L}_2^2+{\mathcal H}{\mathcal L}_1^2-\frac14{\mathcal H}^2{\mathcal L}_1$,
 \item ${ D3C}$: $V= \frac{e^x}{e^x+1}\Big(b_1+e^x(\frac{b_2}{\cos^2\frac{y}{2}}+\frac{b_3}{\sin^2\frac{y}{2}})\Big)+b_4$, ${\mathcal R}^2={\mathcal L}_1{\mathcal L}_2^2+{\mathcal L}_1^2{\mathcal H}-\frac18{\mathcal L}_1{\mathcal H}^2$,
 \item ${ D3D}$: $V=\frac{e^{2x}}{1+e^x}\big(b_1 e^{-iy}+b_2 e^{-2iy}\big)+\frac{b_3}{1+e^x}+b_4$, ${\mathcal R}^2={\mathcal L}_1{\mathcal L}_2^2
 +{\mathcal L}_1{\mathcal L}_2{\mathcal H}+{\mathcal L}_2{\mathcal H}^2-{\mathcal H}^3$.
\end{enumerate}

{\bf Darboux 4 systems}: ${\mathcal H}\equiv -\frac{\sin^2 2x}{2\cos 2x+b}\big(p_x^2+p_y^2\big)+V=E$.
\begin{enumerate}\itemsep=0pt

\item ${ D4(b)A}$: $ V=\frac{\sin^2 2 x}{2 \cos 2 x+b}\Big(\frac{b_1}{\sinh^2 y}+\frac{b_2}{\sinh^2 2 y}\Big)+\frac{b_3}{2 \cos 2 x+b}+b_4$, ${\mathcal R}^2={\mathcal L}_1{\mathcal L}_2^2$,
\item $ D4(b)B$: $V=\frac{\sin^2 2x}{2 \cos 2x +b}\Big(\frac{b_1}{\sin^2 2x}+b_2e^{4y}+b_3e^{2y}\Big)+b_4$, ${\mathcal R}^2={\mathcal L}_1{\mathcal L}_2^2+{\mathcal L}_1^2{\mathcal L}_2+{b}{\mathcal H}{\mathcal L}_2^2-4{\mathcal H}^2{\mathcal L}_2$,
\item ${ D4(b)C}$: $V=\frac{e^{2y}}{\frac{b+2}{\sin^2 x}+\frac{b-2}{\cos^2 x}}\Big(\frac{b_1}{Z+(1-e^{2y}) \sqrt{Z}}+\frac{b_2}{Z+(1+e^{2y})\sqrt{Z}} +\frac{b_3\ e^{-2y}}{\cos^2 x}\Big) +b_4$,\\ ${\mathcal R}^2= -\frac{b}{16^3}{\mathcal H}^3+{\mathcal L}_1^2{\mathcal L}_2+{\mathcal L}_1{\mathcal L}_2^2-\frac{b}{16}{\mathcal L}_1{\mathcal L}_2{\mathcal H}-\frac{b}{16}{\mathcal L}_2^2{\mathcal H} +\frac{1}{256}{\mathcal L}_1{\mathcal H}^2$.
\end{enumerate}
\emph{Note:} Systems $D4(b)A$, $D4(b)B$, $D4(b)C$ are in fact families of distinct systems parametrized by~$b$, and $E15$ is a family of systems parametrized by the function~$f$. The parameters~$b$ can be normalized away in systems $E7$, $D1A$, but it is convenient to keep them.

{\bf Generic Koenigs spaces}: (We do not list the relatively unenlightening expressions of ${\mathcal R}^2$ for the Koenigs spaces.
Each involves 4 arbitrary parameters obtained via a generic St\"ackel transformation from a constant curvature system.)
\begin{enumerate}\itemsep=0pt
\item ${K[1,1,1,1]}$:
${\mathcal H}\equiv \frac{1}{V(b_1,b_2,b_3,b_4)}\big(p_x^2+p_y^2 +V(a_1,a_2,a_3,a_4)\big)=E$,\\
$V(a_1,a_2,a_3,a_4)=\frac{a_1}{x^2}+\frac{a_2}{y^2}+\frac{4a_3}{(x^2+y^2-1)^2}-\frac{4a_4}{(x^2+y^2+1)^2}$,
 \item ${K[2,1,1]}$: ${\mathcal H}\equiv \frac{1}{V(b_1,b_2,b_3,b_4)}\big(p_x^2+p_y^2 +V(a_1,a_2,a_3,a_4)\big)=E$,\\
 $V(a_1,a_2,a_3,a_4)=\frac{a_1}{x^2}+\frac{a_2}{y^2}-a_3\big(x^2+y^2\big)+a_4$,
 \item ${K[2,2]}$: ${\mathcal H}\equiv\frac{1}{V(b_1,b_2,b_3,b_4)}\big(p_x^2+p_y^2 +V(a_1,a_2,a_3,a_4)\big)=E$,\\
$V(a_1,a_2,a_3,a_4)= \frac{a_1}{(x+iy)^2}+\frac{a_2(x-iy)}{(x+iy)^3} +a_3-a_4\big(x^2+y^2\big)$,
\item ${K[3,1]}$: ${\mathcal H}\equiv\frac{1}{V(b_1,b_2,b_3,b_4)}\big(p_x^2+p_y^2 +V(a_1,a_2,a_3,a_4)\big)=E$,\\
$V(a_1,a_2,a_3,a_4)= a_1-a_2x +a_3\big(4x^2+{y}^2\big)+\frac{a_4}{{y}^2}$,
\item ${K[4]}$: ${\mathcal H}\equiv =\frac{1}{V(b_1,b_2,b_3,b_4)}\big(p_x^2+p_y^2 +V(a_1,a_2,a_3,a_4)\big)=E$,\\
$V(a_1,a_2,a_3,a_4)= a_1-a_2(x+iy) +a_3\big(3(x+iy)^2+2(x-iy)\big) -a_4\big(4\big(x^2+y^2\big)+2(x+iy)^3\big)$,
\item ${K[0]}$: ${\mathcal H}\equiv =\frac{1}{V(b_1,b_2,b_3,b_4)}\big(p_x^2+p_y^2 +V(a_1,a_2,a_3,a_4)\big)=E$,\\
$V(a_1,a_2,a_3,a_4)=a_1-(a_2x+a_3y)+a_4\big(x^2+y^2\big)$.
\end{enumerate}

\subsection{Contractions}\label{1.2}

In \cite{KM2014} it has been shown that all the 2nd order superintegrable systems are obtained by taking coordinate limits of the generic system~$S_9$~\cite{KKMP}, or are obtained from these limits by a St\"ackel transform (an invertible structure preserving mapping of superintegrable systems~\cite{KKM20041,KKM20041-II,KKM20041-III,KKM20041-IV,KKM20041-V, KKM20041-VI}). Analogously all quadratic symmetry algebras of these systems are limits of that of~$ S_9$. These coordinate limits induce limit relations between the special functions associated as eigenfunctions of the quantum superintegrable systems. The limits also induce contractions of the associated quadratic algebras, and via the models of the irreducible representations of these algebras, limit relations between the associated special functions. The Askey scheme for ortho\-gonal functions of hypergeometric type is an example of this~\cite{KMP2014}. For constant curvature systems the required limits are all induced by In\"on\"u--Wigner-type Lie algebra contractions of~${\mathfrak{o}}(3,\mathbb {C})$ and~${\mathfrak{e}}(2,\mathbb {C})$~\cite{Wigner,NP,WW}. In\"on\"u--Wigner-type Lie algebra contractions have long been applied to relate separable coordinate systems and their associated special functions, see, e.g.,~\cite{Pog96,Pog01} for some more recent examples, but the application to quadratic algebras is due to the authors and their collaborators.

Recall the def\/inition of (natural) {\it Lie algebra contractions}: Let $(A; [\, ; \,]_A)$, $(B; [\, ;\, ]_B)$ be two complex Lie algebras. We say that $B$ is a {\it contraction} of $A$ if for every $\epsilon\in (0,1]$ there exists a~linear invertible map $t_\epsilon \colon B\to A$ such that for every $X, Y\in B$, $ \lim\limits_{\epsilon\to 0}t_\epsilon^{-1}[t_\epsilon X,t_\epsilon Y]_A = [X, Y ]_B$. Thus, as $\epsilon\to 0$ the 1-parameter family of basis transformations can become singular but the structure constants of the Lie algebra go to a f\/inite limit, necessarily that of another Lie algebra. The contractions of the symmetry algebras of 2D constant curvature spaces have long since been classif\/ied~\cite{KM2014}. There are 6~nontrivial contractions of ${\mathfrak{e}}(2,\mathbb {C})$ and 4 of ${\mathfrak{o}}(3,\mathbb {C})$. They are each induced by coordinate limits. Just as for Lie algebras we can def\/ine a contraction of a quadratic algebra in terms of 1-parameter families of basis changes in the algebra. As $\epsilon\to 0$ the 1-parameter family of basis transformations becomes singular but the structure constants go to a~f\/inite limit~\cite{KM2014}.

\begin{Theorem} Every Lie algebra contraction of $A={\mathfrak{e}}(2,\mathbb {C})$ or $A={\mathfrak{o}}(3,\mathbb {C})$ induces a contraction of a free $($zero potential$)$ quadratic algebra $\tilde Q$ based on $A$, which in turn induces a contraction of the quadratic algebra $Q$ with potential. This is true for both classical and quantum algebras.
\end{Theorem}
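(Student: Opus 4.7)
The plan is to lift the given Lie algebra contraction $t_\epsilon:B\to A$ to the natural home of the free quadratic algebra generators and then transport the full (with-potential) structure by invoking the preceding existence theorem.

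\textbf{Step 1 (Lifting $t_\epsilon$ to degree-two elements).} As recalled just before the theorem, for flat space the generators of $\tilde Q$ lie in the degree-two part of $U(\mathfrak{e}(2,\mathbb{C}))$, and for nonzero constant curvature in the degree-two part of $U(\mathfrak{so}(3,\mathbb{C}))$. I would therefore start by extending the invertible linear map $t_\epsilon:B\to A$ to a linear isomorphism $T_\epsilon$ of the degree-two subspaces (the symmetric square in the classical case, the symmetrized degree-two part of the enveloping algebra in the quantum case), defined on monomials by $T_\epsilon(XY)=t_\epsilon(X)t_\epsilon(Y)$ and extended linearly. Because $t_\epsilon$ is invertible for $\epsilon\in(0,1]$, so is $T_\epsilon$. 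Let $\{\tilde{\mathcal H}(\epsilon),\tilde{\mathcal L}_1(\epsilon),\tilde{\mathcal L}_2(\epsilon)\}:=T_\epsilon^{-1}\{\mathcal H,\mathcal L_1,\mathcal L_2\}$ be the pulled-back generators in degree two of $U(B)$ (resp.\ $\mathrm{Sym}^2 B$).

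\textbf{Step 2 (Contracting the free structure).} The Poisson brackets (or commutators) of the pulled-back generators are expressible through iterated applications of $[\,\cdot\,,\,\cdot\,]_A$ composed with multiplication, which by the defining property of a Lie contraction satisfy $\lim_{\epsilon\to 0}t_\epsilon^{-1}[t_\epsilon X,t_\epsilon Y]_A=[X,Y]_B$. Consequently each structure coefficient $M^{(j)}_{e_1,e_2,e_3}(\epsilon)$ and $N_{e_1,e_2,e_3}(\epsilon)$ of the free algebra expressed in the $T_\epsilon^{-1}$ basis is a polynomial in the structure constants of $A$ in these coordinates, and therefore has a finite limit as $\epsilon\to 0$. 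The limiting coefficients satisfy the defining relations of Definition~\ref{def1} by continuity (Jacobi, antisymmetry, closure on degree-two elements), and the set $\{\tilde{\mathcal H}(0),\tilde{\mathcal L}_1(0),\tilde{\mathcal L}_2(0)\}$ spans a free quadratic algebra $\tilde Q'$ based on $B$. This is the induced contraction of $\tilde Q$.

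\textbf{Step 3 (Transporting potentials and closing the with-potential algebra).} Here I would use the previous theorem: each free quadratic algebra on a Koenigs-classified space determines, via the Bertrand--Darboux integrability conditions and an explicit constructive algorithm, the potential $V(\mathbf{x};a_1,a_2,a_3,a_4)$ and the generators $\mathcal L_j=\tilde{\mathcal L}_j+W_j(\mathbf{x};a)$ of a full nondegenerate quadratic algebra $Q$. This algorithm is algebraic and rational in the coefficients of the free algebra, so applying it to the one-parameter family $\tilde Q(\epsilon)=T_\epsilon^{-1}\tilde Q$ produces a one-parameter family of potentials $V(\epsilon)$ and generators $\mathcal L_j(\epsilon)=\tilde{\mathcal L}_j(\epsilon)+W_j(\epsilon)$ whose structure coefficients depend continuously on $\epsilon$. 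Taking the limit $\epsilon\to 0$ after a rescaling of the parameters $a_1,a_2,a_3,a_4$ (as is needed already in the contractions recorded in Section~\ref{1.2}, e.g.\ for the In\"on\"u--Wigner contractions) yields a with-potential quadratic algebra on $B$, which is by definition a contraction of $Q$.

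\textbf{Main obstacle.} Steps 1 and 2 are essentially formal once one accepts the functorial behavior of contractions on symmetric/universal enveloping squares; the substantive content lies in Step 3. The difficulty is controlling the rescaling of the parameter vector $(a_1,a_2,a_3,a_4)$ so that the nontrivial part of the potential survives the limit: naive limits may send $V$ to a constant. I would address this by examining the constructive algorithm of Theorem~1.1 to identify, for each contraction, the unique (up to the obvious reparametrization freedom) rescaling $a_i\mapsto a_i(\epsilon)$ that keeps $W_j(\epsilon)$ bounded and nondegenerate as $\epsilon\to 0$; this is exactly where the choice is made that produces a specific superintegrable system as the contraction limit. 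The quantum case is handled identically after replacing ordinary products by symmetrized products and Poisson brackets by commutators, since the symmetrization is compatible with $T_\epsilon$.
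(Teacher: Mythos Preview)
Your argument has a genuine gap in Step~2. You define the pulled-back generators $\tilde{\mathcal L}_j(\epsilon)=T_\epsilon^{-1}\mathcal L_j$ and then assert that the structure coefficients ``have a finite limit as $\epsilon\to 0$'' and that ``the set $\{\tilde{\mathcal H}(0),\tilde{\mathcal L}_1(0),\tilde{\mathcal L}_2(0)\}$ spans a free quadratic algebra''. But $T_\epsilon^{-1}$ is an algebra isomorphism for each $\epsilon>0$, so the Casimir polynomial $\mathcal F$ is carried to itself with \emph{identical} coefficients; nothing has been contracted at the level of the quadratic algebra. What fails is the existence of the limits $\tilde{\mathcal L}_j(0)$ themselves: by the very definition of a contraction, $t_\epsilon$ becomes singular as $\epsilon\to 0$, hence $t_\epsilon^{-1}$ is singular on a nontrivial subspace of~$A$, and the degree-two elements $T_\epsilon^{-1}\mathcal L_j$ typically blow up. (Already in the basic $\mathfrak{so}(3,\mathbb C)\to\mathfrak e(2,\mathbb C)$ example one has $t_\epsilon^{-1}(J_{23})=\epsilon^{-1}P'_1$.) A genuine contraction of $\tilde Q$ requires, in addition to $T_\epsilon^{-1}$, an $\epsilon$-dependent change of basis \emph{inside the three-dimensional span of $\mathcal H,\mathcal L_1,\mathcal L_2$} (an element of the group $G$ of Section~5) chosen so that both the rescaled generators and the transformed Casimir converge nontrivially. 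Your argument supplies no mechanism for producing this rescaling, and the claim that the quadratic-algebra structure coefficients are ``polynomials in the Lie structure constants'' conflates the Lie contraction with this missing quadratic-algebra basis change.

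The paper does not give a self-contained proof of the theorem; it invokes the constructive results of \cite{KM2014}. The key point stated just before the theorem is that the relevant Lie algebra contractions are \emph{finitely many} (six of $\mathfrak e(2,\mathbb C)$ and four of $\mathfrak o(3,\mathbb C)$) and that each is realized by an explicit coordinate limit. The proof then proceeds case by case: for each of these ten coordinate limits one writes down the induced action on the generators $\mathcal H,\mathcal L_1,\mathcal L_2$ of every free quadratic algebra $\tilde Q$ based on~$A$, finds the correct $\epsilon$-dependent linear combinations (and the corresponding rescalings of the potential parameters $a_j$) that have finite nontrivial limits, and checks that the limiting structure closes. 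Your Step~3 and ``Main obstacle'' already concede that the parameter rescaling must be found contraction by contraction; the same is true of the generator rescaling in Step~2, and once one accepts this the argument is no longer the uniform functorial one you outline but precisely the case-by-case verification the paper relies on.
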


Similarly the coordinate limit associated with each contraction takes $H$ to a new superintegrable system with the contracted quadratic algebra. This relationship between coordinate limits, Lie algebra contractions and quadratic algebra contractions for superintegrable systems on constant curvature spaces breaks down for Darboux and Koenigs spaces. For Darboux spaces the Lie symmetry algebra is only 1-dimensional, and there is no Lie symmetry algebra at all for Koenigs spaces. Furthermore, there is the issue of f\/inding a more systematic way of classifying the 44 distinct Helmholtz superintegrable systems on dif\/ferent manifolds, and their relations. These issues can be clarif\/ied by considering the Helmholtz systems as Laplace equations (with potential) on f\/lat space. As announced in~\cite{KMS2016}, the proper object to study is the conformal symmetry algebra ${\mathfrak{so}}(4,\mathbb {C})$ of the f\/lat space Laplacian and its contractions. The basic idea is that families of (St\"ackel-equivalent) Helmholtz superintegrable systems on a variety of manifolds correspond to a single conformally superintegrable Laplace equation on f\/lat space. We exploit this here in the case $n=2$, but it generalizes easily to all dimensions $n\ge 2$. The conformal symmetry algebra for Laplace equations with
constant potential on f\/lat space is the conformal algebra ${\mathfrak{so}}(n+2,\mathbb {C})$.

In his 1894 thesis \cite{Bocher} B\^ocher introduced a limit procedure based on the roots of quadratic forms to f\/ind families of $R$-separable solutions of the ordinary (zero potential) f\/lat space Laplace equation in $n$ dimensions. An important feature of his work was the introduction of special projective coordinates in which the action of the conformal group ${\mathfrak{so}}(n+2,\mathbb {C})$ on solutions of the Laplace equation can be linearized. For $n=2$ these are tetraspherical coordinates. In Sections~\ref{Tc} and~\ref{Bc} we describe in detail the Laplace equation mechanism and how it can be applied to systematize the classif\/ication of Helmholtz superintegrable systems and their relations via limits. We show that B\^ocher's limit procedure can be interpreted as constructing generalized In\"on\"u--Wigner Lie algebra contractions of ${\mathfrak{so}}(4,\mathbb {C})$ to itself. We call these B\^ocher contractions and show that they induce contractions of the conformal quadratic algebras associated with Laplace superintegrable systems. All of the limits of the Helmholtz systems classif\/ied before for $n=2$ \cite{HKMS2015, KM2014} are induced by the larger class of B\^ocher contractions~\cite{KMS2016}. In this paper we replace B\^ocher's prescription by a~precise def\/inition of B\^ocher contractions and introduce special B\^ocher contractions, which are simpler and more easily classif\/ied.

\section{2D conformal superintegrability of the 2nd order} \label{2}

Classical nD systems of Laplace type are of the form
 \begin{gather*}
 {\mathcal H}\equiv \sum_{i,j=1}^ng^{ij}p_ip_j+V=0.\end{gather*}
A conformal symmetry of this equation is a function $ {\mathcal S}({\bf x},{\bf p})$ in the variables ${\bf x}=(x_1,\dots,x_n)$, polynomial in the momenta ${\bf p}=(p_1,\dots,p_n)$, such that $\{ {\mathcal S}, {\mathcal H}\}={\mathcal R}_{ {\mathcal S}} {\mathcal H}$ for some function ${\mathcal R}_{{\mathcal S}}({\bf x},{\bf p})$, polynomial in the momenta. Two conformal symmetries ${ {\mathcal S}}, {{\mathcal S}}'$ are identif\/ied if ${\mathcal S}={\mathcal S}'+{\mathcal R}{\mathcal H}$ for some function ${\mathcal R}({\bf x},{\bf p})$, polynomial in the momenta. (For short we will say that ${\mathcal S}={\mathcal S}'$, $\operatorname{mod} \, {\mathcal H}$ and that $\mathcal S$ is a conformal constant of the motion (or conformal symmetry) if $\{{\mathcal S},{\mathcal H}\}=0,\mod({\mathcal H})$.) The system is {\it conformally superintegrable} for $n>2$ if there are $2n-1$ functionally independent conformal symmetries, ${ \mathcal S}_1,\dots,{\mathcal S}_{2n-1}$ with ${ \mathcal S}_1={\mathcal H}$. It is {\it second order conformally superintegrable} if each symmetry~${\mathcal S}_i$ can be chosen to be a polynomial of at most second order in the momenta. There are obvious operator counterparts to these def\/initions for the operator Laplace equation ${ H\Psi}\equiv(\Delta_n +V)\psi=0$.

For $n=2$ the def\/inition must be restricted, since for a potential $V=0$ there will be an inf\/inite-dimensional space of conformal symmetries. We assume $V\ne 0$, possibly a constant.

Every $2D$ Riemannian manifold is conformally f\/lat, so we can always f\/ind a Cartesian-like coordinate system with coordinates ${\bf x}\equiv
(x,y)\equiv (x_1,x_2)$ such that the Laplace equation takes the form
\begin{gather}\label{Laplace4} {\tilde {\mathcal H}}=\frac{1}{\lambda(x,y)}\big(p_x^2+p_y^2\big)+{\tilde V}({\bf x})=0.\end{gather}
However, this equation is equivalent to the f\/lat space equation
\begin{gather}\label{Laplace5}{\mathcal H}\equiv p_x^2+p_y^2+ V({\bf x})=0,\qquad V({\bf x})=\lambda({\bf x}){\tilde V}({\bf x}).\end{gather}
In particular, the conformal symmetries of (\ref{Laplace4}) are identical with the conformal symmetries of~(\ref{Laplace5}). Thus without loss of generality we can assume the manifold is f\/lat space with $\lambda\equiv 1$.

In general the space of 2nd order conformal symmetries could be inf\/inite-dimensional. However, the requirement that~$H$ have a multiparameter potential reduces the possible symmetries to a f\/inite-dimensional space. The result, from the Bertrand--Darboux conditions, is that the pure 2nd order polynomial terms in conformal symmetries belong to the space spanned by symmetrized products of the conformal Killing vectors
\begin{gather} P_1=p_x,\qquad P_2=p_y,\qquad J=x p_y-y p_x,\qquad D=x p_x+y p_y,\nonumber\\
K_1=\big(x^2-y^2\big)p_x +2xyp_y,\qquad K_2=\big(y^2-x^2\big)p_y+2xyp_x.\label{conformalKV}
\end{gather}
For a given multiparameter potential only a subspace of these conformal tensors occurs.

\subsection{The conformal St\"ackel transform} \label{2.1}

We review brief\/ly the concept of the conformal St\"ackel transform \cite{KMP2010}. Suppose we have a second order {\it conformal} superintegrable system
\begin{gather}\label{confl} {\mathcal H}\equiv\frac{1}{\lambda(x,y)}\big(p^2_x+p_y^2\big)+V(x,y)=0,\qquad {\mathcal H}\equiv {\mathcal H}_0+V
\end{gather}
with $V$ the general potential solution for this system, and suppose $U(x,y) $ is a particular potential solution, nonzero in an open set. The {\it conformal St\"ackel transform} induced by~$U$ is the system
\begin{gather}\label{helms} {\tilde { \mathcal H}}=E,\qquad {\tilde { \mathcal H}}\equiv\frac{1}{{\tilde \lambda}}\big(p_{x}^2+p_{y}^2\big)+{\tilde V},
\end{gather}
where ${\tilde \lambda}=\lambda U$, ${\tilde V}=\frac{V}{U}$. In \cite{Laplace2011,KMS2016} we proved

\begin{Theorem}\label{stackelt}
The transformed $($Helmholtz$)$ system $\tilde {\mathcal H}$ is superintegrable $($in the nonconformal sense$)$.
\end{Theorem}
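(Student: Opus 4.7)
The plan is to construct, for each of the $2n-1$ functionally independent conformal symmetries $\mathcal{S}_i$ of the original system $\mathcal{H} = \mathcal{H}_0 + V$, an honest Helmholtz constant of motion $\tilde{\mathcal{S}}_i$ for the rescaled Hamiltonian $\tilde{\mathcal{H}} = \mathcal{H}/U$. Establishing $\{\tilde{\mathcal{S}}_i, \tilde{\mathcal{H}}\} = 0$ identically, together with functional independence, exhibits $\tilde{\mathcal{H}}$ as a second-order Helmholtz superintegrable system in the nonconformal sense. The input is already available: conformal superintegrability of $\mathcal{H}$ supplies the $\mathcal{S}_i$ with relations $\{\mathcal{S}_i, \mathcal{H}\} = \mathcal{R}_i \mathcal{H}$, and the multiparameter structure of $V$ constrains the scalar parts of the $\mathcal{S}_i$.

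Concretely, split each $\mathcal{S}_i = \mathcal{S}_i^{0} + W_i$ into pure-momentum and scalar parts. The multiparameter structure $V = \sum_j a_j V_{(j)}$ forces $W_i = \sum_j a_j W_i^{(j)}$, with each $W_i^{(j)}$ paired to the basis potential $V_{(j)}$. Writing $U = \sum_j u_j V_{(j)}$ and setting $W_i^U := \sum_j u_j W_i^{(j)}$, I propose the candidate
$$\tilde{\mathcal{S}}_i := \mathcal{S}_i - W_i^U\,\tilde{\mathcal{H}} = \mathcal{S}_i - \frac{W_i^U}{U}\,\mathcal{H}.$$
The $W_i^U$ correction is precisely the ``scalar companion'' of $\mathcal{S}_i$ corresponding to the special potential $U$.

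Next I would verify $\{\tilde{\mathcal{S}}_i, \tilde{\mathcal{H}}\} = 0$ by expanding the bracket with the Leibniz rule, using the standing identities $\{\mathcal{S}_i, \mathcal{H}\} = \mathcal{R}_i \mathcal{H}$ and $\{V_{(j)}, V_{(k)}\} = 0$. A key auxiliary identity, obtained by differentiating the conformal symmetry relation with respect to $a_j$ and contracting with the $u_j$, is
$$\{\mathcal{S}_i, U\} = \mathcal{R}_i U - \{W_i^U, \mathcal{H}_0\}.$$
After substituting this, the naive conformal defect $\tilde{\mathcal{H}}\bigl[\mathcal{R}_i - U^{-1}\{\mathcal{S}_i, U\}\bigr]$ collapses to $U^{-1}\tilde{\mathcal{H}}\{W_i^U, \mathcal{H}_0\}$, and this is exactly cancelled by the bracket $-\tilde{\mathcal{H}}\{W_i^U/U, \mathcal{H}_0\}$ coming from expanding the correction term, together with the $W_i^U \tilde{\mathcal{H}}\{\mathcal{H}_0, U\}/U^2$ contribution from $\{1/U, \mathcal{H}_0\}$. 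All pieces cancel identically on phase space, not merely on the shell $\tilde{\mathcal{H}} = E$.

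Functional independence of $\{\tilde{\mathcal{S}}_i\}_{i=1}^{2n-1}$ together with $\tilde{\mathcal{H}}$ follows from that of the original $\mathcal{S}_i$ and $\mathcal{H}$, since $\tilde{\mathcal{S}}_i - \mathcal{S}_i$ is a function-valued multiple of $\tilde{\mathcal{H}} = \mathcal{H}/U$ with $U$ nonvanishing on an open set. The main technical obstacle is the auxiliary identity $\{\mathcal{S}_i, U\} = \mathcal{R}_i U - \{W_i^U, \mathcal{H}_0\}$: this needs the fact that the conformal factor $\mathcal{R}_i$ is determined by the pure-momentum part $\mathcal{S}_i^0$ alone and is independent of the potential parameters $a_j$, a fact read off from matching the degree-three-in-momenta component of $\{\mathcal{S}_i, \mathcal{H}\} = \mathcal{R}_i \mathcal{H}$. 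Once this structural point is isolated, everything reduces to a clean (if slightly tedious) Poisson-bracket computation.
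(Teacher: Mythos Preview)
The paper does not actually prove this theorem in the text; it merely states the result and cites \cite{Laplace2011,KMS2016} for the proof. Your argument is correct and is precisely the standard conformal St\"ackel transform construction carried out in those references: one corrects each conformal symmetry $\mathcal{S}_i$ by subtracting $W_i^U\,\tilde{\mathcal{H}}$, where $W_i^U$ is the scalar part of $\mathcal{S}_i$ associated to the special potential $U$, and the key cancellation is driven by the degree-one identity $\{\mathcal{S}_i^0,U\}=\mathcal{R}_iU-\{W_i^U,\mathcal{H}_0\}$ obtained by specializing the parameters. One small point worth making explicit: when $\mathcal{S}_1=\mathcal{H}$ the construction gives $\tilde{\mathcal{S}}_1=0$ trivially (since $W_1^U=U$), so the first generator of the Helmholtz system is $\tilde{\mathcal{H}}$ itself rather than $\tilde{\mathcal{S}}_1$; the functional independence argument then concerns $\{\tilde{\mathcal{H}},\tilde{\mathcal{S}}_2,\dots,\tilde{\mathcal{S}}_{2n-1}\}$, which follows from that of $\{\mathcal{H},\mathcal{S}_2,\dots,\mathcal{S}_{2n-1}\}$ exactly as you indicate.
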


This result shows that any second order conformal Laplace superintegrable system admitting a nonconstant potential $U$ can be St\"ackel transformed to a Helmholtz superintegrable system. This operation is invertible, but the inverse is not a St\"ackel transform. By choosing all possible special potentials $U$ associated with the f\/ixed Laplace system (\ref{confl}) we generate the equivalence class of all Helmholtz
superintegrable systems (\ref{helms}) obtainable through this process. As is easy to check, any two Helmholtz superintegrable systems lie in the
same equivalence class if and only if they are St\"ackel equivalent in the standard sense, see~\cite[Theorem 4]{KMS2016}. All Helmholtz superintegrable systems are related to conformal Laplace systems in this way, so the study of all Helmholtz superintegrability on conformally f\/lat manifolds can be reduced to the study of all conformal Laplace superintegrable systems on f\/lat space. All of these results have direct analogs for operator Laplace systems.

The basic structure of quadratic algebras for nondegenerate Helmholtz superintegrable systems is preserved under the transformation to Laplace equations, except that all identities hold $\operatorname{mod}\, {\mathcal H}$:

\begin{Theorem}[\cite{KMS2016}] The symmetries ${\mathcal S}_1$, ${\mathcal S}_2$ of the $2D$ nondegenerate conformal superintegrable Hamiltonian $\mathcal H$ generate a~quadratic algebra
 \begin{gather*}
 \{{\mathcal R},{\mathcal S}_1\}=f^{(1)}({\mathcal S}_1,\mathcal{S}_2,\alpha_1,\alpha_2,\alpha_3,\alpha_4),\qquad \{{\mathcal R},
 {\mathcal S}_2\}=f^{(2)} ({\mathcal S}_1,{\mathcal S}_2,\alpha_1,\alpha_2,\alpha_3,\alpha_4),\\
{\mathcal R}^2=f^{(3)}({\mathcal S}_1,\mathcal{S}_2,\alpha_1,\alpha_2,\alpha_3,\alpha_4),
\end{gather*}
where $\mathcal{R}=\{{\mathcal S}_1,\mathcal{S}_2\}$ and all identities hold $\operatorname{mod}\, {\mathcal H}$. Here the~$\alpha_j$ are the parameters in the nondegenerate potential.
\end{Theorem}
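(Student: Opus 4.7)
The strategy is to deduce the conformal quadratic algebra from the Helmholtz quadratic algebra of Definition~\ref{def1} via the conformal St\"ackel transform of Section~\ref{2.1}. First, pick any basis potential $U = V_{(k)}$ from the 4-parameter family; by Theorem~\ref{stackelt} the Hamiltonian $\tilde{\mathcal{H}} = \mathcal{H}/U$ defines a nondegenerate Helmholtz superintegrable system. Its second-order constants $\tilde{\mathcal{S}}_1, \tilde{\mathcal{S}}_2$, obtained from $\mathcal{S}_1, \mathcal{S}_2$ by the explicit St\"ackel modification $\tilde{\mathcal{S}}_j = \mathcal{S}_j - (W_j/U)\mathcal{H}$ (where $W_j$ denotes the scalar piece of $\mathcal{S}_j$), therefore satisfy the quadratic algebra of Definition~\ref{def1}: $\{\tilde{\mathcal{S}}_j, \tilde{\mathcal{R}}\}$ and $\tilde{\mathcal{R}}^2$ are expressed as the prescribed polynomials in $\tilde{\mathcal{S}}_1, \tilde{\mathcal{S}}_2, \tilde{\mathcal{H}}$, with coefficients polynomial in the transformed parameters~$a_i$.

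Second, I would transport these identities back to the conformal system by reducing modulo the principal ideal $(\mathcal{H})$. The modifications $\tilde{\mathcal{S}}_j - \mathcal{S}_j = -(W_j/U)\mathcal{H}$ and $\tilde{\mathcal{H}} = (1/U)\mathcal{H}$ both lie in $(\mathcal{H})$, and a short Leibniz-rule check using the conformal condition $\{\mathcal{S}_j, \mathcal{H}\} \in (\mathcal{H})$ shows that Poisson brackets descend cleanly: $\tilde{\mathcal{R}} = \{\tilde{\mathcal{S}}_1, \tilde{\mathcal{S}}_2\} \equiv \{\mathcal{S}_1, \mathcal{S}_2\} = \mathcal{R} \pmod{\mathcal{H}}$. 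Substituting and setting $\tilde{\mathcal{H}} \equiv 0$, every monomial containing $\tilde{\mathcal{H}}^{e_3}$ with $e_3 \ge 1$ drops out, yielding identities of the required form $\{\mathcal{S}_j, \mathcal{R}\} \equiv f^{(j)}(\mathcal{S}_1, \mathcal{S}_2, \alpha_1, \ldots, \alpha_4)$ and $\mathcal{R}^2 \equiv f^{(3)}(\mathcal{S}_1, \mathcal{S}_2, \alpha_1, \ldots, \alpha_4)$ modulo $\mathcal{H}$, with the Helmholtz parameters~$a_i$ being polynomial expressions in the conformal parameters~$\alpha_i$ under the St\"ackel choice of $U$ (the parameter $\alpha_k$ becoming the Helmholtz energy and the remaining $\alpha_i$ together with the additive constant supplying the $a_i$).

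The main obstacle is the bookkeeping in step two, specifically the clean verification that the correction terms $\{\mathcal{S}_j, -(W_k/U)\mathcal{H}\}$ lie in $(\mathcal{H})$: although $\{\mathcal{S}_j, \mathcal{H}\}$ is not identically zero but only proportional to $\mathcal{H}$, one must check that this proportionality combined with Leibniz's rule keeps all error terms within $(\mathcal{H})$; a related subtlety is confirming that the resulting $f^{(j)}$ is independent, up to reparametrization of the $\alpha_i$, of the choice of $U$. An alternative, self-contained approach would argue directly from the Bertrand--Darboux integrability conditions for nondegenerate conformal potentials, which force the 4th- and 6th-order momentum expressions $\{\mathcal{S}_j, \mathcal{R}\}$ and $\mathcal{R}^2$ to lie, modulo $\mathcal{H}$, in the finite-dimensional subspace of the associated graded algebra spanned by the prescribed monomials in $\mathcal{S}_1, \mathcal{S}_2$ with coefficients polynomial in the $\alpha_i$.
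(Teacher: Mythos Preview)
The paper does not contain its own proof of this theorem; it is stated with the attribution \cite{KMS2016} and no argument is given here. So there is nothing in the present paper to compare your proposal against line by line.

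That said, your strategy is the natural one given the surrounding text: the paper has just set up the conformal St\"ackel transform (Theorem~\ref{stackelt}) precisely so that every Laplace conformal superintegrable system is seen as a St\"ackel image of a Helmholtz nondegenerate system, whose quadratic algebra structure is the content of Definition~\ref{def1}. Pulling those Helmholtz relations back modulo $\mathcal{H}$ is exactly how the cited reference proceeds, and your Leibniz-rule check is correct: since $\{\mathcal{S}_j,\mathcal{H}\}=R_j\mathcal{H}$ for some polynomial $R_j$, one has $\{\mathcal{S}_j,\,\phi\,\mathcal{H}\}=\{\mathcal{S}_j,\phi\}\mathcal{H}+\phi R_j\mathcal{H}\in(\mathcal{H})$ for any function $\phi$, so all correction terms stay in the ideal. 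The independence of the resulting $f^{(j)}$ from the choice of $U$ (up to the obvious reparametrization swapping one $\alpha_k$ with the Helmholtz energy) is likewise handled in \cite{KMS2016} by observing that two such choices of $U$ differ by an ordinary St\"ackel transform of Helmholtz systems, under which the Casimir $\mathcal{F}$ changes only by the substitution $\mathcal{H}\leftrightarrow -a_k$ described just after Definition~\ref{def1}. Your proposal is therefore sound and aligned with the intended argument; the ``alternative'' Bertrand--Darboux route you mention is also viable but is more work than is needed once Theorem~\ref{stackelt} is in hand.
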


A crucial observation now is that the free parts (those parts that one obtains by setting all the $a_i$ to zero) of the generators for 2nd order
conformal superintegrable systems lie in the universal enveloping algebra of the conformal Lie algebra, $\operatorname{mod}\, {\mathcal H}$. Thus for the 2D case it follows that contractions of ${\mathfrak{so}}(4,\mathbb {C})$ induce contractions of the conformal quadratic algebras of 2nd order superintegrable systems with nondegenerate potentials, and contractions of one system into another. In~\cite{KMS2016} it is shown how these Laplace contractions then induce contractions of Helmholtz superintegrable systems.

\section{Tetraspherical coordinates and Laplace systems} \label{Tc}
As already mentioned, the free parts of the 2nd order conformal symmetries of the Laplace equation ${\mathcal H}\equiv p_x^2+p_y^2+ V({\bf x})=0$
lie in the universal enveloping algebra of ${\mathfrak{so}}(4,\mathbb {C})$ with generators~(\ref{conformalKV}). To linearize the action of these ${\mathfrak{so}}(n+2,\mathbb {C})$ operators on Laplace equations in $n$ dimensions, B\^ocher introduced a family of projective coordinates on the null cone in $n+2$ dimensions. In our case $n=2$ these are the tetraspherical coordinates $(x_1,\dots,x_4)$. They satisfy $x_1^2+x_2^2+x_3^2+x_4^2=0$ (the null cone) and $\sum\limits_{k=1}^4x_k\partial_{x_k}=0$. They are projective coordinates on the null cone and have 2 degrees of freedom. Their principal advantage over f\/lat space Cartesian coordinates is that the action of the conformal algebra~(\ref{conformalKV}) and of the conformal group ${\rm SO}(4,\mathbb {C})$ is linearized in tetraspherical coordinates.

\subsection[Relation to Cartesian coordinates $(x,y)$ and coordinates on the 2-sphere $(s_1,s_2,s_3)$]{Relation to Cartesian coordinates $\boldsymbol{(x,y)}$ and coordinates\\ on the 2-sphere $\boldsymbol{(s_1,s_2,s_3)}$}
\begin{gather*} x_1=2XT,\qquad x_2=2YT,\qquad x_3=X^2+Y^2-T^2,\qquad x_4=i\big(X^2+Y^2+T^2\big),\\
 x=\frac{X}{T}=-\frac{x_1}{x_3+ix_4},\qquad y=\frac{Y}{T}=-\frac{x_2}{x_3+ix_4},\qquad
 x=\frac{s_1}{1+s_3},\qquad y=\frac{s_2}{1+s_3}.\end{gather*}
 The projective variables $X$, $Y$, $T$ are def\/ined by these relations
\begin{gather*} s_1=\frac{2x}{x^2+y^2+1},\qquad s_2=\frac{2y}{x^2+y^2+1},\qquad s_3=\frac{1-x^2-y^2}{x^2+y^2+1},\\
 {\mathcal H}\equiv p_x^2+p_{y}^2+{\tilde V}=(x_3+ix_4)^2\left(\sum_{k=1}^4p_{x_k}^2+V\right)
=(1+s_3)^2\left(\sum_{j=1}^3p_{s_j}^2+V\right),\\
 {\tilde V}=(x_3+ix_4)^2V, \!\!\qquad (1+s_3)=-i\frac{(x_3+ix_4)}{x_4},\!\!\qquad s_1=\frac{ix_1}{x_4},\!\!\qquad s_2=\frac{ix_2}{x_4},\!\!\qquad s_3=-\frac{ix_3}{x_4}.\end{gather*}
Thus the Laplace equation $ {\mathcal H}\equiv p_x^2+p_{y}^2+{\tilde V}$ in Cartesian coordinates becomes $\sum\limits_{k=1}^4p_{x_k}^2+V=0$ in tetraspherical coordinates.

\subsection[Relation to f\/lat space and 2-sphere 1st order conformal constants of the motion]{Relation to f\/lat space and 2-sphere 1st order conformal constants\\ of the motion}
We def\/ine
\begin{gather*} L_{jk}=x_j\partial_{x_k}-x_k \partial_{x_j}, \qquad 1\le j,k\le 4,\qquad j\ne k,\end{gather*}
where $L_{jk}=-L_{kj}$. The generators for f\/lat space conformal symmetries~(\ref{conformalKV}) are related to these via
\begin{gather*}
P_1=L_{13}+iL_{14},\qquad P_2=L_{23}+iL_{24},\qquad D=iL_{34},\qquad J=L_{12},\\
 K_j=L_{j3}-iL_{j4},\qquad j=1,2.\end{gather*}

The generators for $2$-sphere conformal symmetries are related to the $L_{jk}$ via
\begin{gather} L_{12}=J_{12}=s_1p_{s_2}-s_2p_{s_1},\qquad L_{31}=J_{31},\qquad L_{23}=J_{23},\nonumber\\
 L_{j4}=-ip_{s_j},\qquad j=1,2,3.\label{2sphere}\end{gather}
In identifying tetraspherical coordinates we can always permute the parameters $1,\dots,4$. Also, we can apply an arbitrary ${\rm SO}(4,\mathbb {C})$ transformation to the tetraspherical coordinates, so the above relations between Euclidean and tetraspherical coordinates are far from being unique.

\subsection{The 6 Laplace superintegrable systems with nondegenerate potentials}\label{3.2}

The systems are all of the form
\begin{gather*}
\left(\sum_{j=1}^4\partial_{x_j}^2+V({\bf x})\right)\Psi=0
\end{gather*}
in tetraspherical coordinates, or $\big(\partial_x^2+\partial_y^2+{\tilde V}\big)\Psi=0$ as a f\/lat space system in Cartesian coordinates. Each Laplace system is an equivalence class of St\"ackel equivalent Helmholtz systems. In each case the expression for ${\mathcal R}^2$ in the conformal symmetry algebra can be put in a normal form which is a polynomial in~${\mathcal L}_j$, $a_k$~of order $\le 3$. We show the terms of order $\ge 2$ in the~${\mathcal L}_j$ alone. The parameter~$\alpha$ is linear in the~$a_j$. The remaining terms are of lower order in the ${\mathcal L}_j$:~${\rm LOT}$. The potentials are
\begin{gather*}
V_{[1,1,1,1]}=\frac{a_1}{x_1^2}+\frac{a_2}{x_2^2}+\frac{a_3}{x_3^2}+\frac{a_4}{x_4^2},\\
{\tilde V}_{[1,1,1,1]}=\frac{a_1}{x^2}+\frac{a_2}{y^2}+\frac{4a_3}{(x^2+y^2-1)^2}-\frac{4a_4}{(x^2+y^2+1)^2},\\
{\mathcal R}^2={\mathcal L}_1{\mathcal L}_2({\mathcal L}_1+{\mathcal L}_2)+\alpha{\mathcal L}_1{\mathcal L}_2+{\rm LOT}.
\end{gather*}
St\"ackel equivalent systems: $S9$, $S8$, $S7$, $D4B(b)$, $D4C(b)$, $K[1,1,1,1]$.
\begin{gather*}
 V_{[2,1,1]}=\frac{a_1}{x_1^2}+\frac{a_2}{x_2^2}+\frac{a_3(x_3-ix_4)}{(x_3+ix_4)^3}+\frac{a_4}{(x_3+ix_4)^2},\\
{\tilde V}_{[2,1,1]}=\frac{a_1}{x^2}+\frac{a_2}{y^2}-a_3\big(x^2+y^2\big)+a_4,\\
 {\mathcal R}^2={\mathcal L}_1^2{\mathcal L}_2+\alpha{\mathcal L}_2^2+{\rm LOT}.
 \end{gather*}
St\"ackel equivalent systems: $ S4$, $S2$, $E1$, $E16$, $D4A(b)$, $D3B$, $D2B$, $D2C$, $K[2,1,1]$.
\begin{gather}\label{V[22norm']} V_{[2,2]}=\frac{a_1}{(x_1+ix_2)^2}+\frac{a_2(x_1-ix_2)}{(x_1+ix_2)^3}
+\frac{a_3}{(x_3+ix_4)^2}+\frac{a_4(x_3-ix_4)}{(x_3+ix_4)^3},\\
{\tilde V}_{[2,2]}=\frac{a_1}{(x+iy)^2}+\frac{a_2(x-iy)}{(x+iy)^3}+a_3-a_4\big(x^2+y^2\big),\nonumber\\
 {\mathcal R}^2={\mathcal L}_1^2{\mathcal L}_2+{\rm LOT}.\nonumber\end{gather}
St\"ackel equivalent systems: $E8$, $E17$, $E7$, $E19$, $D3C$, $D3D$, $K[2,2]$.
\begin{gather*}
V_{[3,1]}=\frac{a_1}{(x_3+ix_4)^2}+\frac{a_2x_1}{(x_3+ix_4)^3} +\frac{a_3(4{x_1}^2+{x_2}^2)}{(x_3+ix_4)^4}+\frac{a_4}{{x_2}^2},\\
 {\tilde V}_{[3,1]}=a_1-a_2x +a_3\big(4x^2+{y}^2\big)+\frac{a_4}{{y}^2},\\
{\mathcal R}^2={\mathcal L}_1^3+\alpha{\mathcal L}_2^2+{\rm LOT}.\end{gather*}
St\"ackel equivalent systems: $S1$, $E2$, $D1B$, $D2A$, $K[3,1]$.
\begin{gather*}
V_{[4]}=\frac{a_1}{(x_3+ix_4)^2}+a_2\frac{x_1+ix_2}{(x_3+ix_4)^3} +a_3\frac{3(x_1+ix_2)^2-2(x_3+ix_4)(x_1-ix_2)}{(x_3+ix_4)^4}\\
\hphantom{V_{[4]}=}{} +a_4\ \frac{4(x_3+ix_4)(x_3^2+x_4^2)+2(x_1+ix_2)^3}{(x_3+ix_4)^5},\\
 {\tilde V}_{[4]}=a_1-a_2(x+iy) +a_3\big(3(x+iy)^2+2(x-iy)\big) -a_4\big(4\big(x^2+y^2\big)+2(x+iy)^3\big),\\
{\mathcal R}^2={\mathcal L}_1^3+\alpha{\mathcal L}_1{\mathcal L}_2+{\rm LOT}.\end{gather*}
St\"ackel equivalent systems: $E10$, $E9$, $D1A$, $K[4]$.
\begin{gather*}
V_{[0]}=\frac{a_1}{(x_3+ix_4)^2}+\frac{a_2x_1+a_3x_2}{(x_3+ix_4)^3}+a_4\frac{x_1^2+x_2^2}{(x_3+ix_4)^4},\\
 {\tilde V}_{[0]}=a_1-(a_2x+a_3y)+a_4\big(x^2+y^2\big),\\
 {\mathcal R}^2=\alpha{\mathcal L}_1{\mathcal L}_2+{\rm LOT}.\end{gather*}
St\"ackel equivalent systems: $ E20$, $E11$, $E3'$, $D1C$, $D3A$, $K[0]$.

\section{Def\/inition and composition of B\^ocher contractions} \label{Bc}

Before introducing precise def\/initions, let us note that all geometrical contractions of ${\mathfrak{e}}(2,\mathbb {C})\to {\mathfrak{e}}(2,\mathbb {C})$ and ${\mathfrak{so}}(3,\mathbb {C})\to {\mathfrak{so}}(3,\mathbb {C}), {\mathfrak{e}}(2,\mathbb {C})$, i.e., pointwise coordinate limits of functions on f\/lat space or the sphere as classif\/ied in~\cite{KM2014}, induce geometrical contractions of ${\mathfrak{so}}(4,\mathbb {C})\to {\mathfrak{so}}(4,\mathbb {C})$. Recall that a basis for ${\mathfrak{so}}(4,\mathbb {C})$ is~(\ref{conformalKV}) where the subset $P_1$, $P_2$, $J$ forms a basis for ${\mathfrak{e}}(2,\mathbb {C})$. As an example, consider the coordinate limit $x=\epsilon x'$, $y=\epsilon y'$. This induces the contraction $\epsilon P_1=P'_1$, $\epsilon P_2=P'_2$, $J=J'$ of ${\mathfrak{e}}(2,\mathbb {C})$ and, further, the contraction $D=D'$, $K_1=\epsilon K_1'$, $K_2=\epsilon K_2'$ of~${\mathfrak{so}}(4,\mathcal{C})$. The other contractions of ${\mathfrak{e}}(2,\mathbb {C})$ work similarly.

For ${\mathfrak{so}}(3,\mathbb {C})$ we have the basis $J_{23}$, $J_{31}$, $J_{12}$, where
\begin{gather*} s_1^2+s_2^2+s_3^2=1,\qquad s_1p_{s_1}+s_2p_{s_2}+s_3p_{s_3}=0.\end{gather*}
The generators for the conformal symmetry algebra of the ${\mathfrak{so}}(3,\mathbb {C})$ Laplace equation are related to the $L_{jk}$ basis for
${\mathfrak{so}}(4,\mathbb {C})$ via~(\ref{2sphere}). Now consider the example limit $s_1=\epsilon x'$, $s_2=\epsilon y'$. It induces the contraction
\begin{gather*}\epsilon J_{23}= -p_{y'},\qquad \epsilon J_{31}=p_{x'},\qquad J_{12}=x'p_{y'}-y'p_{x'}\end{gather*}
of ${\mathfrak{so}}(3,\mathbb {C})$ to ${\mathfrak{e}}(2,\mathbb {C})$ and the contraction
\begin{gather*} L_{12}=x'p_{y'}-y'p_{x'}=J',\qquad i\epsilon L_{14}=p_{x'}=P_1',\qquad i\epsilon L_{24}=p_{y'}=P_2',\\
-\frac{2}{\epsilon}(iL_{14}+ L_{13})=\big({x'}^2-{y'}^2\big)p_{x'}+2x'y'p_{y'}+O(\epsilon)=K_1'+O(\epsilon),\\
-\frac{2}{\epsilon}(iL_{24}+ L_{23})=\big({y'}^2-{x'}^2\big)p_{y'}+2x'y'p_{x'}+O(\epsilon)=K_2'+O(\epsilon),\end{gather*}
of ${\mathfrak{so}}(4,\mathbb {C})$ to itself. The other contractions of ${\mathfrak{so}}(3,\mathbb {C})$ work similarly.

We now present a general def\/inition of B\^ocher contractions of ${\mathfrak{so}}(4,\mathbb {C})$ to itself and demonstrate that the above induced contractions can be reformulated as B\^ocher contractions. Let ${\bf x}={\bf A}(\epsilon){\bf y}$, and ${\bf x}=(x_1,\dots,x_4)$, ${\bf y}=(y_1,\dots,y_4)$ be column vectors, and ${\bf A}=(A_{jk}(\epsilon))$, be a~$4\times 4$ matrix with matrix elements
\begin{gather}\label{Amatrix} A_{kj}(\epsilon)=\sum_{\ell=-N}^Na^\ell_{kj}\epsilon^\ell,\end{gather} where $N$ is a nonnegative integer and the $a^\ell_{kj}$ are complex constants. (Here, $N$ can be arbitrarily large, but it must be f\/inite in any particular case.) We say that the matrix $\bf A$ def\/ines a {\it B\^ocher contraction} of the conformal algebra ${\mathfrak{so}}(4,\mathbb {C})$ to itself provided
\begin{gather}\label{condition1} 1) \quad {\det} ({\bf A})=\pm 1, \ {\rm constant\ for\ all\ }\epsilon\ne 0,\\
\label{Bocher2} 2) \quad {\bf x}\cdot{\bf x}\equiv \sum_{j=1}^4x_i(\epsilon)^2={\bf y}\cdot{\bf y}+O(\epsilon).
\end{gather}
If, in addition, ${\bf A}\in O(4,\mathbb {C})$ for all $\epsilon\ne 0$ the matrix $\bf A$ def\/ines a {\it special B\^ocher contraction}. For a special B\^ocher contraction ${\bf x}\cdot{\bf x}={\bf y}\cdot{\bf y}$, with no error term.

We explain why this is a contraction in the generalized In\"on\"u--Wigner sense. Let $L_{ts}=x_t\partial_{x_s}-x_s\partial_{x_t}$, ${s\ne t}$
be a generator of ${\mathfrak{so}}(4,\mathbb {C})$ and ${\bf \tilde A}(\epsilon)={\bf A}^{-1}(\epsilon)$ be the matrix inverse. (Note that $\bf \tilde A$ also has an expansion of the form~(\ref{Amatrix}) in~$\epsilon$.) We have the expansion
\begin{gather}\label{L2} L_{ts}=\sum_{k,\ell}(A_{tk}{\tilde A}_{\ell s}-A_{sk}{\tilde A}_{\ell t})y_k\partial_{y_\ell}
=\epsilon^{\alpha_{ts}}\left(\sum_{k\ell}F_{k\ell}\ y_k\partial_{y_\ell}+O(\epsilon)\right),\end{gather}
where $\bf F$ is a constant nonzero matrix. Thus the integer $\alpha_{ts}$ is the smallest power of~$\epsilon$ occurring in the expansion of $L_{ts}$. Now consider the product $L_{ts}( {\bf x}\cdot {\bf x})$. On one hand it is obvious that $L_{ts}({\bf x}\cdot {\bf x})\equiv 0$, but on the other hand the expansions~(\ref{Bocher2}),~(\ref{L2}) yield
\begin{gather*} L_{ts}( {\bf x}\cdot {\bf x})=\epsilon^{\alpha_{ts}}\left(\sum_{k\ell}F_{k\ell}\ y_k\partial_{y_\ell}\right)\left(\sum_{j}y_j^2\right) +O\big(\epsilon^{\alpha_{ts}}\big).\end{gather*}
Thus, $\big(\sum_{k\ell}F_{k\ell}\ y_k\partial_{y_\ell}\big)\big(\sum_{j}y_j^2\big)\equiv 0$ for $\bf F$ a constant nonzero matrix. However, the only dif\/ferential operators of the form $\sum_{k\ell}F_{k\ell} y_k\partial_{y_\ell}$ that map ${\bf y}\cdot {\bf y}$ to zero are elements of ${\mathfrak{so}}(4,\mathbb {C})$:
\begin{gather*}\sum_{k\ell}F_{k\ell} y_k\partial_{y_\ell}=\sum_{j>k}b_{jk}L'_{jk},\qquad L'_{jk}=y_j\partial_{y_k}-y_k\partial_{y_j}.\end{gather*}
Thus
\begin{gather}\label{Bocherbasis1} \lim_{\epsilon \longrightarrow 0}\epsilon^{-\alpha_{ts}}L_{ts}=\sum_{j>k}b_{jk}L'_{jk}\equiv L'\end{gather}
and this determines a limit of $L_{ts}$ to~$L'$. Similarly, if we apply this same procedure to the operator $L=\sum_{t>s}c(\epsilon)_{ts}L_{ts}$ for any rational polynomials $c_{ts}(\epsilon)$ we will obtain an operator $L'=\sum_{j>k}b_{jk}L'_{jk}$ in the limit. Further, due to condition~(\ref{condition1}), by choosing the $c(\epsilon)_{ts}$ appropriately we can obtain any $L'\in {\mathfrak{so}}(4,\mathbb {C})$ in the limit. (Indeed, modulo rational functions of $\epsilon$, this is just the adjoint action of $O(4,\mathbb {C})$ on ${\mathfrak{so}}(4,\mathbb {C})$. In this sense the mapping $L\to L'$ is onto.)

\begin{Theorem} Suppose the matrix ${\bf A}(\epsilon)$ defines a B\^ocher contraction of ${\mathfrak{so}}(4,\mathbb {C})$. Let $\{L_{t_is_i},$ $i=1,\dots, 6\}$ be an ordered linearly independent for ${\mathfrak{so}}(4,\mathbb {C})$ such that $\alpha_{t_1s_1}\le \alpha_{t_2s_2}\le \cdots \le \alpha_{t_6s_6}$. Then there is an ordered linearly independent set $\{ L_j,\, j=1,\dots,6\}$ for ${\mathfrak{so}}(4,\mathbb {C})$ such that
\begin{enumerate}\itemsep=0pt
\item[$1)$] $L_j\in {\rm span}\{L_{t_is_i},\, i=1,\dots, j\}$,
\item[$2)$] there are integers $\alpha_1\le \alpha_2\le\cdots\le\alpha_6$ such that
\begin{gather*}\lim_{\epsilon \to 0}\frac{L_j}{\epsilon^{\alpha_j}}=L_j',\qquad 1\le\ j\le 6,\end{gather*}
and $\{L'_j,\, j=1,\dots,6\}$ forms a basis for ${\mathfrak{so}}(4,\mathbb {C})$ in the $y_k$ variables.
\end{enumerate}
\end{Theorem}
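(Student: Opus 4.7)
I will construct the $L_j$ by induction on $j=1,\dots,6$. The crucial fact, already used in the discussion immediately preceding the theorem, is that for any nonzero $\mathbb{C}(\epsilon)$-linear combination $L=\sum_i c_i(\epsilon) L_{t_i s_i}$ the expansion of $L$ in the $y$-basis has a well-defined lowest power $v(L)$ of $\epsilon$, and the symbol $\sigma(L):=\lim_{\epsilon\to 0}\epsilon^{-v(L)}L$ necessarily lies in $\mathfrak{so}(4,\mathbb{C})$. The condition $\det\mathbf{A}(\epsilon)=\pm 1$ forces $\{L_{t_i s_i}\}$ to be linearly independent over $\mathbb{C}(\epsilon)$, hence over $\mathbb{C}((\epsilon))$, so the flag $W_j:=\mathrm{span}_{\mathbb{C}(\epsilon)}\{L_{t_i s_i}\colon i\le j\}$ satisfies $\dim_{\mathbb{C}(\epsilon)}W_j=j$.

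The base case sets $L_1:=L_{t_1 s_1}$, $\alpha_1:=\alpha_{t_1 s_1}$, and takes $L'_1$ to be its nonzero leading coefficient. For the inductive step, assume $L_1,\dots,L_{j-1}$ have been produced, with ordered valuations $\alpha_1\le\cdots\le\alpha_{j-1}$ and linearly independent symbols $L'_1,\dots,L'_{j-1}\in\mathfrak{so}(4,\mathbb{C})$. Starting from the candidate $\tilde L:=L_{t_j s_j}$, apply the following elimination loop: while $\sigma(\tilde L)$ lies in the $\mathbb{C}$-linear span of $\{L'_1,\dots,L'_{j-1}\}$, write $\sigma(\tilde L)=\sum_{i<j}c_i L'_i$ with $c_i\in\mathbb{C}$ and replace $\tilde L$ by $\tilde L-\sum_{i<j}c_i\epsilon^{v(\tilde L)-\alpha_i}L_i\in W_j$. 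Each replacement strictly raises $v(\tilde L)$. Termination in finitely many steps is guaranteed because at every stage $\tilde L$ retains the summand $L_{t_j s_j}$ with unit coefficient, and by the $\mathbb{C}((\epsilon))$-linear independence of $\{L_{t_i s_i}\}_{i\le j}$ the coset $L_{t_j s_j}+U$, where $U$ is the $\mathbb{C}((\epsilon))$-span of $\{L_{t_i s_i}\colon i<j\}$, lies at bounded positive $\epsilon$-adic distance from $0$; hence $v(\tilde L)$ cannot grow without bound. When the loop exits, the output $\hat L_j$ has symbol $\hat L'_j$ independent of $L'_1,\dots,L'_{j-1}$ at some valuation $\hat\alpha_j:=v(\hat L_j)$.

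Monotonicity $\alpha_j\ge\alpha_{j-1}$ is then achieved by a cosmetic adjustment. If $\hat\alpha_j\ge\alpha_{j-1}$, set $L_j:=\hat L_j$, $\alpha_j:=\hat\alpha_j$, $L'_j:=\hat L'_j$; otherwise set $L_j:=\epsilon^{\alpha_{j-1}-\hat\alpha_j}\hat L_j\in W_j$, $\alpha_j:=\alpha_{j-1}$, $L'_j:=\hat L'_j$, since multiplying by a positive power of $\epsilon$ keeps $L_j$ in $W_j$ without changing its symbol. After six applications of the inductive step, the six symbols $L'_j$ are linearly independent in the six-dimensional space $\mathfrak{so}(4,\mathbb{C})$ and therefore form a basis. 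The main obstacle, and where I would put the most care, is the termination argument for the elimination loop, which relies on the $\epsilon$-adic closedness of the finite-dimensional $\mathbb{C}((\epsilon))$-subspace $U$; the monotonicity step is then immediate.
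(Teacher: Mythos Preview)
Your proof is correct and follows the same inductive leading-term-elimination strategy as the paper's proof. You supply details the paper leaves implicit, namely the termination of the elimination loop (via $\epsilon$-adic closedness of the finite-dimensional $\mathbb{C}((\epsilon))$-subspace $U$) and the power-of-$\epsilon$ rescaling used to enforce $\alpha_j\ge\alpha_{j-1}$.
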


\begin{proof}The proof is by induction on $j$. For $j=1$ the result follows from (\ref{Bocherbasis1}). Assume the assertion is true for $j\le j_0<6$. Then, due to the nonsingularity condition (\ref{condition1}), we can always f\/ind polynomials in $\epsilon$, $\{a_1(\epsilon),a_2(\epsilon),\dots, a_{j_0}(\epsilon)\}$ such that
\begin{gather*} L_{j_0+1}=L_{t_{j_0+1},s_{j_0+1}}-\sum_{i=1}^{j_0} a_iL_i=\epsilon^{\alpha_{j_0+1}}L'_{j_0+1}+O\big(\epsilon^{\alpha_{j_0+2}}\big),\end{gather*}
where $L'_{j_0+1}$ is linearly independent of $\{L_i',\, 1\le i\le j_0\}$ and $\alpha_{j_0+1}\ge \alpha_{j_0}$.
\end{proof}

In \cite{KMS2016} we have used this theorem to compute explicitly the bases for the basic B\^ocher contractions.

\subsection{Composition of B\^ocher contractions}
Let $\bf A$ and $\bf B$ def\/ine B\^ocher contractions of ${\mathfrak{so}}(4,\mathbb {C})$ to itself. Thus there exist expansions
\begin{gather*} {\bf x}(\epsilon_1)\cdot {\bf x}(\epsilon_1)= {\bf y}\cdot {\bf y}+ O\big(\epsilon_1^a\big),\qquad {\bf y}(\epsilon_2)\cdot {\bf y}(\epsilon_2)= {\bf z}\cdot{\bf z}+ O\big(\epsilon_2^b\big),\end{gather*}
where
\begin{gather*} {\bf x}= {\bf A}(\epsilon_1){\bf y},\qquad {\bf y}(\epsilon_2)= {\bf B}(\epsilon_2) {\bf z}.\end{gather*}
 Now let
 \begin{gather*}{\bf x}(\epsilon_1,\epsilon_2)= {\bf A}(\epsilon_1) {\bf y}(\epsilon_2)={\bf A}(\epsilon_1){\bf B}(\epsilon_2){\bf z}.\end{gather*} Then
 \begin{gather*} {\bf x}(\epsilon_1,\epsilon_2)\cdot {\bf x}(\epsilon_1,\epsilon_2)={\bf y}(\epsilon_2)\cdot {\bf y}(\epsilon_2)
 +O_{\epsilon_2}\big(\epsilon_1^a\big)={\bf z}\cdot{\bf z}+O\big(\epsilon_2^b\big)+\epsilon_1^af(\epsilon_1,\epsilon_2,{\bf y}).\end{gather*}

Now set $\epsilon_1=\epsilon^m$, $\epsilon_2=\epsilon$. It follows from these expansions that we can always f\/ind an $m>0$
such that
\begin{gather*}
{\bf x}\big(\epsilon^m,\epsilon\big)\cdot {\bf x}\big(\epsilon^m,\epsilon\big)={\bf z}\cdot{\bf z}+O\big(\epsilon^q\big)\end{gather*}
and
\begin{gather*}\lim_{\epsilon\to 0} \epsilon^{-\alpha_{ts}}L_{ts}=\sum_{j>k}c_{jk}L''_{jk}\equiv L''\end{gather*}
for some $q>0$, with $L''$ in the ${\mathfrak{so}}(4,\mathbb {C})$ Lie algebra of operators such that $L''({\bf z}\cdot {\bf z})=0$. Thus this composition of the $A$ and $B$ contractions yields a new B\^ocher contraction. For special B\^ocher contractions the composition is def\/ined without restriction and the resulting contraction is uniquely determined for $\epsilon_1$, $\epsilon_2$ going to~0 independently. However, if we set $\epsilon_2=\epsilon_1^m$, in general the resulting contraction will depend on~$m$.

 \subsection{Special B\^ocher contractions}
Special B\^ocher contractions are much easier to understand and manipulate than general B\^ocher contractions: composition is merely matrix
multiplication. The contractions that arise from the B\^ocher recipe are not ``special''. However, we shall show that we can
associate a special B\^ocher contraction with each contraction obtained from B\^ocher's recipe, such that the special
contraction contains the same basic geometrical information. The (projective) tetraspherical coordinates are associated with points $(x,y)$
in 2D f\/lat space via the relation
\begin{gather} \label{tetrflatrelations1} (x,y)\equiv (x_1,x_2,x_3,x_4) =[x_3+ix_4]\left(-x, -y, \frac12\big(1-x^2-y^2\big), -\frac{i}{2}\big(1+x^2+y^2\big)\right).\end{gather}
In particular,
\begin{gather} \label{tetrflatrelations2} x=-\frac{x_1}{x_3+ix_4},\qquad y=-\frac{x_2}{x_3+ix_4},\qquad \frac{x_3+ix_4}{x_3-ix_4}=\frac{-1}{x^2+y^2}.\end{gather}
For coordinates on the 2-sphere we have
\begin{gather*} 
(s_1,s_2,s_3)\equiv (x_1,x_2,x_3,x_4)=x_4(-is_1, -is_2, is_3, 1).\end{gather*}

The action of B\^ocher contractions on the f\/lat space coordinates $(x,y)$ is an af\/f\/ine mapping and this af\/f\/ine action carries all of the geometrical information about the contraction. For example, the $[1,1,1,1]\downarrow [2,1,1]$ contraction
\begin{gather*}x_3=-\frac{i}{\sqrt{2}\ \epsilon}x_3'-\frac{i}{\sqrt{2} \epsilon}x_4',\qquad x_4=\frac{i}{\sqrt{2}}\left(\frac{1}{\epsilon}-\epsilon\right)x_3'
-\frac{1}{\sqrt{2}}\left(\frac{1}{\epsilon}+\epsilon\right)x_4',\end{gather*}
and $x_1=x'_1$, $x_2=x'_2$, gives
\begin{gather*} x=-\frac{x_1}{x_3+ix_4}=\frac{\epsilon x_1'}{ \sqrt{2}(x_3'+ix_4')}+O\big(\epsilon^2\big)=\epsilon' x' +O\big({\epsilon'}^2\big),\qquad y=\epsilon' y'+O\big({\epsilon'}^2\big),\end{gather*}
for $\epsilon'=\epsilon/\sqrt(2)$. Thus the geometric content of the action of this contraction in f\/lat space is $x=\epsilon' x'$, $y=\epsilon'y'$. The terms of order ${\epsilon'}^2$ disappear in the limit. On the complex sphere we have
\begin{gather*} s_1=\frac{ix_1}{x_4}=-\frac{\sqrt{2}i \epsilon x_1'}{x_3'+ix_4'}+O\big(\epsilon^2\big)=\epsilon' x'+O\big({\epsilon'}^2\big),\qquad s_2=\epsilon' y'+O\big({\epsilon'}^2\big),\\
s_3=-\frac{ix_3}{x_4}=1+O\big(\epsilon^2\big),\end{gather*}
where $\epsilon'=\sqrt{2} i \epsilon$ and $x'$, $y'$ are f\/lat space coordinates. Thus the geometric content of the action of this contraction on the 2-sphere is $s_1=\epsilon' x'$, $s_2=\epsilon'y'$. Note that distinct contractions on f\/lat space and the sphere are induced by the same B\^ocher contraction.

Using the fact that the contraction limits are completely determined by the geometric limits, we can derive special B\^ocher contractions that
produce the same geometric limits. We again consider the example discussed above. We will design a special B\^ocher contraction with the property $x=\epsilon x'$, $y=\epsilon y'$ such that equations (\ref{tetrflatrelations1}),~(\ref{tetrflatrelations2}) hold. In this case we require $x=x_1/(x_3+ix_4)=\epsilon x'=\epsilon\ x_1'/(x_3'+ix_4')$,
$y=x_2/(x_3+ix_4)=x_2'/(x_3'+ix_4')$. The solution is, essentially unique up to conformal transformation:
\begin{gather*}x_1= x_1',\qquad x_3=x_3'(\epsilon+1/\epsilon)/2+ix_4'(-\epsilon+1/\epsilon)/2,\\
x_2=x_2',\qquad x_4=ix_3'(\epsilon-1/\epsilon)/2+x_4'(\epsilon+1/\epsilon)/2.\end{gather*}
This contraction satisf\/ies $x_1^2+x_2^2+x_3^2+x_4^2={x_1'}^2+{x'}_2^2+{x_3'}^2+{x_4'}^2$ and agrees with $[1,1,1,1]\downarrow [2,1,1]$ on Laplace equations.

Similarly we can use each of the geometric contractions of f\/lat space and the 2-sphere as classif\/ied in \cite{KM2014}, to construct
special B\^ocher contractions that take $V_{[1,1,1,1]}$ to each of $V_{[2,1,1]}$, $V_{[2,2]}$, $V_{[3,1]}$, $V_{[4]}$. For example
\begin{alignat*}{3} 
& V_{[1,1,1,1]}\to V_{[3,1]}\colon \quad && x_1=x'_1+\frac{x'_3}{\epsilon}+\frac{ix'_4}{\epsilon},\qquad x_3=-\frac{x'_1}{\epsilon}+x'_3\left(1-\frac{1}{2\epsilon^2}\right)-\frac{ix_4'}{2\epsilon^2},& \\
&&& x_2=x'_2,\qquad x_4=-\frac{ix'_1}{\epsilon}-\frac{ix'_3}{2\epsilon^2}+x'_4\left(1+\frac{1}{2\epsilon^2}\right).\end{alignat*}

A more general way to construct special B\^ocher contractions is to make use of the normal forms for conjugacy classes of ${\mathfrak{so}}(4,\mathbb {C})$ under the adjoint action of ${\rm SO}(4,\mathbb {C})$. They are derived in \cite{Gant}:
\begin{gather*}
C_1= \left(\begin{matrix} 0&\lambda&0&0 \\ -\lambda&0&0&0\\ 0&0&0&0\\0&0&0&0\end{matrix}\right),\qquad C_2=
\left(\begin{matrix} 0&\lambda&0&0 \\ -\lambda&0&0&0\\ 0&0&0&\mu\\0&0&-\mu&0\end{matrix}\right), \\
C_3= \left(\begin{matrix} 0&1+i&0&0 \\ -1-i&0&-1+i&0\\ 0&1-i&0&0\\0&0&0&0\end{matrix}\right),\qquad C_4=
\frac12\left(\begin{matrix} 0&1&i&2\lambda \\ -1&0&2\lambda&i\\ -i&-2\lambda&0&-1\\-2\lambda&-i&1&0\end{matrix}\right). \end{gather*}
Every 1-parameter subgroup ${\bf A}(t)$ of ${\rm SO}(4,\mathbb {C})$ (i.e., ${\bf A}(t_1+t_2)={\bf A}(t_1){\bf A}(t_2)$), is conjugate to one of the forms ${\bf A}_j(t)=\exp(tC_j)$, $j=1,2,3,4$. By making an appropriate change of complex coordinate $t=t(\epsilon)$ we can obtain a special B\^ocher contraction matrix
\begin{gather}\label{form1} {\bf A}_1(t)= \frac12\left(\begin{matrix} \frac{\epsilon^2+1}{\epsilon}& -\frac{i(\epsilon^2-1)}{\epsilon}&0&0 \\
\frac{i(\epsilon^2-1)}{\epsilon}& \frac{\epsilon^2+1}{\epsilon}&0&0\\ 0&0&0&0\\0&0&0&0\end{matrix}\right),\qquad \epsilon =e^{i\lambda t},\\
\label{form2} {\bf A}_2(t)= \frac12\left(\begin{matrix} \frac{\epsilon_1^2+1}{\epsilon_1}& -\frac{i(\epsilon_1^2-1)}{\epsilon_1}&0&0 \\
\frac{i(\epsilon_1^2-1)}{\epsilon_1}& \frac{\epsilon_1^2+1}{\epsilon_1}&0&0\\0&0& \frac{\epsilon_2^2+1}{\epsilon_2}&
-\frac{i(\epsilon_2^2-1)}{\epsilon_2}\\0&0&\frac{i(\epsilon_2^2-1)}{\epsilon_2}& \frac{\epsilon_2^2+1}{\epsilon_2}\end{matrix}\right),
\qquad \epsilon_1 =e^{i\lambda t},\qquad \epsilon_2 =e^{i\mu t},\\
\label{form3} {\bf A}_3(t)= \left(\begin{matrix} 1-\frac{1}{2\epsilon^2}&\frac{1}{\epsilon}&\frac{i}{2\epsilon^2}&0 \\
-\frac{1}{\epsilon}&1&\frac{i}{\epsilon}&0\\
\frac{i}{2\epsilon^2}&-\frac{i}{\epsilon}&1+\frac{1}{2\epsilon^2}&0\\0&0&0&1\end{matrix}\right),\qquad \epsilon =\frac{2}{t(1+i)},\\
\label{form4} {\bf A}_4(t)=
\frac12\left(\begin{matrix} \frac{\epsilon_1^2+1}{\epsilon_1}&\frac{1}{\epsilon_1\epsilon_2}&\frac{i}{\epsilon_1\epsilon_2}&
\frac{i(\epsilon_1^2-1)}{\epsilon_1} \\ -\frac{\epsilon_1}{\epsilon_2}&\frac{\epsilon_1^2+1}{\epsilon_1}&\frac{i(\epsilon_1^2-1)}{\epsilon_1}
&\frac{i\epsilon_1}{\epsilon_2}
\\ -\frac{i\epsilon_1}{\epsilon_2}&\frac{i(\epsilon_1^2-1)}{\epsilon_1}&\frac{\epsilon_1^2+1)}{\epsilon_1}&-\frac{\epsilon_1}{\epsilon_2}\\
\frac{i(\epsilon_1^2-1)}{\epsilon_1}&\frac{i}{\epsilon_1\epsilon_2}&\frac{1}{\epsilon_1\epsilon_2} &\frac{\epsilon_1^2+1)}{\epsilon_1}\end{matrix}\right),
\qquad \epsilon_1=e^{i\lambda t},\qquad \epsilon_2=\frac{1}{t}.\end{gather}
The contraction (\ref{form1}) takes $V_{[1,1,1,1]}$ to $V_{[2,11]}$, (\ref{form2}) takes it to $V_{[2,2]}$, and (\ref{form3}) takes it to $V_{[3,1]}$. The contractions (\ref{form4}), on the other hand, takes $V_{[1,1,1,1]}$ to $V_{[2,2]}$ again. Consider though the special case ${\bf H} (\epsilon)$, of (\ref{form4}) where $\epsilon_1=1$, $\epsilon_2=\epsilon$. It, too, maps $V_{[1,1,1,1]}$ to $V_{[2,2]}$, but the composition ${\bf H}(\epsilon){\bf H}(\epsilon^2)$ takes $V_{[1,1,1,1]}$ to $V_{[4]}$. (We note that the composition ${\bf H}(\epsilon){\bf H}(\epsilon^3)$ takes $V_{[1,1,1,1]}$ to $V_{[3,1]}$, showing that, in general, the result of a~composition ${\bf A}(\epsilon_1){\bf B}(\epsilon_2)$ depends on the relationship between~$\epsilon_1$ and~$\epsilon_2$.)

If the matrix ${\bf A}(\epsilon)$ def\/ines a general B\^ocher contraction, by transposing two rows if necessary, we can assume $\det{\bf A}(\epsilon)=1$ for all $\epsilon\ne 0$. Thus, ${\bf A}(\epsilon)$ is a curve on~${\rm SL}(4,\mathbb {C})$. We could use the results of~\cite{Gant} to list all the conjugacy classes of ${\mathfrak{sl}}(4,\mathbb {C})$ to attempt a classif\/ication. However, it would be necessary to check condition~(\ref{Bocher2}) in every case, whereas for special B\^ocher contractions this condition is satisf\/ied automatically.

Both B\^ocher's original recipes and the normal forms given above provide a generating basis for all B\^ocher contractions in two dimensions;
the general contractions are obtained by composing these generators.

\section[Classif\/ication of free abstract nondegenerate quadratic algebras. Identif\/ication of those from free nondegenerate 2nd order superintegrable systems]{Classif\/ication of free abstract nondegenerate quadratic\\ algebras. Identif\/ication of those from free nondegenerate\\ 2nd order superintegrable systems}

\subsection{Free nondegenerate classical quadratic algebras}
Recall from Def\/inition~\ref{def1} that the symmetry algebra of a free 2D superintegrable system on a constant curvature space, $\mathcal{A}$,
is a quadratic algebra which is completely determined by the function~$\mathcal F$. More specif\/ically, it is a Poisson algebra generated by three linearly independent elements $\{\mathcal{L}_1, \mathcal{L}_2,\mathcal{H}\}$ where $\mathcal{H}$ generates the center of $\mathcal{A}$ and the structure equations of the algebra are given by~(\ref{R2eqn}) with
\begin{gather*}
{\mathcal{R}}^2={\mathcal{F}}(\mathcal{H} ,\mathcal{L}_1,\mathcal{L}_2)\end{gather*}
for some third order homogeneous polynomial~$\mathcal{F}$. We call~$R^2$, which is the same as ${\mathcal{F}}(\mathcal{H},\mathcal{L}_1,\mathcal{L}_2)$, the Casimir of $\mathcal{A}$ in terms of $\{\mathcal{L}_1, \mathcal{L}_2,\mathcal{H}\}$. Motivated by the superintegrable case we def\/ine an \textit{abstract free nondegenerate $2D$ classical quadratic algebra} as follows.

\begin{Definition}\label{def5}
A free nondegenerate 2D classical quadratic algebra is a Poisson algebra $\mathcal{A}$ over $\mathbb {C}$ that is generated by
$\{\mathcal{L}_1, \mathcal{L}_2,\mathcal{H}\}$ where $\mathcal{H}$ generates the center of $\mathcal{A}$, \begin{gather*}
\left\{{\mathcal{R}},{\mathcal{L}}_1 \right\}=-\frac{1}{2}\frac{\partial {\mathcal{R}}^2 }{\partial {\mathcal{L}}_2},\qquad
\left\{{\mathcal{R}},{\mathcal{L}}_2 \right\}=\frac{1}{2}\frac{\partial {\mathcal{R}}^2 }{\partial {\mathcal{L}}_1},
\end{gather*}
 ${\mathcal{R}}= \{ {\mathcal{L}}_1, {\mathcal{L}}_2\}$, and ${\mathcal{R}}^2={\mathcal{F}}(\mathcal{H} ,\mathcal{L}_1,\mathcal{L}_2)$
 for some third order homogeneous polynomial $\mathcal{F}$.
\end{Definition}
Below we shall refer to free nondegenerate 2D classical quadratic algebras simply as (abstract) quadratic algebras.
\begin{Remark}
As an associative algebra $\mathcal{A}$ is the quotient of the free $\mathbb {C}$-algebra generated by $\{\mathcal{L}_1, \mathcal{L}_2,\mathcal{H},\mathcal{R}\}$ and its two sided ideal generated by ${\mathcal{R}}^2-{\mathcal{F}}$. For any choice of a polynomial of degree three for $\mathcal{F}$, the above equations def\/ine Lie brackets on~$\mathcal{A}$ that make it a~Poisson algebra, but higher order polynomials will not def\/ine Lie brackets on~$\mathcal{A}$.
\end{Remark}

For any other generating set $\widetilde{\mathcal{L}}_1$, $\widetilde{\mathcal{L}}_2$, $\widetilde{\mathcal{H}}$, $\widetilde{\mathcal{R}}$ of the
same Poisson algebra that satisf\/ies:
\begin{enumerate}\itemsep=0pt
 \item [(i)]The linear span over $\mathbb{C}$ of $\widetilde{\mathcal{L}}_1$, $\widetilde{\mathcal{L}}_2$, $\widetilde{\mathcal{H}}$ coincides with the linear span of ${\mathcal{L}}_1$, ${\mathcal{L}}_2$, ${\mathcal{H}}$.
 \item [(ii)] $\widetilde{\mathcal{H}}$ is in the center of the Poisson algebra, i.e., Poisson commutes with everything.
 \item [(iii)] $\widetilde{\mathcal{R}}= \{ \widetilde{\mathcal{L}}_1, \widetilde{\mathcal{L}}_2\}$.
 \item [(iv)] The generators $\widetilde{\mathcal{L}}_1$, $\widetilde{\mathcal{L}}_2$, $\widetilde{\mathcal{H}}$, $\widetilde{\mathcal{R}}$ satisfy the structure equations, i.e.,
 \begin{gather*}
\{\widetilde{\mathcal{R}},\widetilde{\mathcal{L}}_1 \}=-\frac{1}{2}\frac{\partial \widetilde{\mathcal{R}}^2 }
{\partial \widetilde{\mathcal{L}}_2},\qquad
 \{\widetilde{\mathcal{R}},\widetilde{\mathcal{L}}_2 \}=\frac{1}{2}\frac{\partial \widetilde{\mathcal{R}}^2 }{\partial \widetilde{\mathcal{L}}_1}.
\end{gather*}
\end{enumerate}
It easy to see that
\begin{gather}\label{basischange}
\left(\begin{matrix}
\widetilde{\mathcal{L}}_1\\
\widetilde{\mathcal{L}}_2 \\
\widetilde{\mathcal{H}}
\end{matrix}\right)
=\left(\begin{matrix}
A_{1,1} & A_{1,2}&A_{1,3} \\
A_{2,1}&A_{2,2} &A_{2,3} \\
0 &0 &A_{3,3}
\end{matrix}\right)
\left(\begin{matrix}
{\mathcal{L}}_1\\
{\mathcal{L}}_2 \\
{\mathcal{H}}
\end{matrix}\right)
\end{gather}
for some
\begin{gather}
\label{equa5}
A=\left(\begin{matrix}
A_{1,1} & A_{1,2}&A_{1,3} \\
A_{2,1}&A_{2,2} &A_{2,3} \\
0 &0 &A_{3,3}
\end{matrix}\right)\in {\rm GL}(3,\mathbb {C}).
\end{gather}
For a matrix as above we def\/ine
$A_2=\left(\begin{matrix} A_{1,1}&A_{1,2}\\ A_{2,1}&A_{2,2}\end{matrix}\right)\in {\rm GL}(2,\mathbb{C})$.
We denote the group of matrices of the form~(\ref{equa5}) by~$G$, it is a complex algebraic group. Moreover, if ${\mathcal{R}}^2={\mathcal{F}}$ and
$\tilde { \mathcal{R}}^2=\tilde { \mathcal{F}}$ then there is $A\in G$, such that
\begin{gather}\label{e8}
\widetilde{\mathcal{F}}(\widetilde{\mathcal{L}}_1,\widetilde{\mathcal{L}}_2,\widetilde{\mathcal{H}}) = \det (A_2)^2\mathcal{F}\big(A^{-1} \big(\widetilde{\mathcal{L}}_1,\widetilde{\mathcal{L}}_2,\widetilde{\mathcal{H}}\big) \big).
\end{gather}

Obviously, two quadratic algebras are isomorphic if and only if their Casimirs are related by $A\in G$ via equation~\eqref{e8}. This fact is fundamental for the classif\/ication of quadratic algebras.

Let $\mathbb{C}^{[3]}[x_1,x_2,x_3]$ be the complex algebraic variety of homogeneous polynomials of degree three in the variables~$x_1$, $x_2$, $x_3$. The group $G$ acts on $\mathbb{C}^{[3]}[x_1,x_2,x_3]$ via equation~(\ref{e8}). Obviously there is a bijection between isomorphism classes of quadratic algebras and orbits of~$G$ in $\mathbb{C}^{[3]}[x_1,x_2,x_3]$. We will determine all isomorphism classes of quadratic algebras by classifying all orbits of $G$ in $\mathbb{C}^{[3]}[x_1,x_2,x_3]$. We shall distinguish an element in each orbit that def\/ines the Canonical form for the Casimir of a~given quadratic algebra. Moreover we present an algorithm for f\/inding the canonical form of the Casimir for a given quadratic algebra which gives a~practical way to determine if two given quadratic algebras are isomorphic.

\subsection{The algorithm for casting the Casimir to its the canonical form}

In this section we introduce the notation $X_1=\mathcal{L}_1$, $X_2=\mathcal{L}_2$, $X_3=\mathcal{H}$ and similarly, $\widetilde{X}_1= \widetilde{\mathcal{L}}_1$, $\widetilde{X}_2=\widetilde{\mathcal{L}}_2$, $\widetilde{X}_3=\widetilde{\mathcal{H}}$. For any realization of the Casimir, ${ { {R}}}^2= {\mathcal{F}}(X_1,X_2,X_3)$, there are homogeneous polynomials in $X_1$, $X_2$ of order $j$, ${\mathcal{F}}^{(j)}$, such that
\begin{gather*} {\mathcal{F}}(X_1,X_2,X_3)= {\mathcal{F}}^{(3)}(X_1,X_2)+X_3{\mathcal{F}}^{(2)}(X_1,X_2) +{X_3}^2{\mathcal{F}}^{(1)}(X_1,X_2)+{X_3}^3{\mathcal{F}}^{(0)}.\end{gather*}
For any $f\in \mathbb{C}^{[3]}[X_1,X_2,X_3]$ we shall denote the stabilizer of $f$ in $G$ by $\operatorname{Stab}_{G}\{f\}$. We shall use the notation $\operatorname{Stab}_{G}\{f+O(\mathcal{H})\}$ for the subgroup of $G$ consisting of all elements that do not change the part in $f$ that is~$\mathcal{H}$ independent. That is $g\in \operatorname{Stab}_{G}\{f+O(\mathcal{H})\}$ preserves the lowest order term in $f$ as a polynomial of $\mathcal{H}=X_3$. Similarly $\operatorname{Stab}_{G}\{f+O(\mathcal{H}^2)\}$ stands for the subgroup of $G$ consisting of all elements that preserves the part in $f$ that is a polynomial of degree 1 in $\mathcal{H}$. Similarly we def\/ine $\operatorname{Stab}_{G}\{f+O(\mathcal{H}^3)\}$. For a given $f\in \mathbb{C}^{[3]}[X_1,X_2,X_3]$ we shall denote by $f^{(i)}(X_1,X_2)$ it homogeneous component that are uniquely def\/ined by \begin{gather*} f(X_1,X_2,X_3)= {f}^{(3)}(X_1,X_2)+X_3{f}^{(2)}(X_1,X_2) +{X_3}^2{f}^{(1)}(X_1,X_2)+{X_3}^3{f}^{(0)}.\end{gather*}

 Note that
\begin{gather*}
\operatorname{Stab}_{G}\big\{ f^{(3)} +O(\mathcal{H})\big\}\supseteq \operatorname{Stab}_{G}\big\{ f^{(3)} +{ \mathcal{H}} f^{(2)} +O\big(\mathcal{H}^2\big)\big\} \\
\qquad{} \supseteq \operatorname{Stab}_{G}\big\{ f^{(3)}+{ \mathcal{H}} f^{(2)}+{ \mathcal{H}}^2 f^{(1)}+O\big(\mathcal{H}^3\big)\big\} \supseteq \operatorname{Stab}_{G}\{ f\}.
 \end{gather*}

The algorithm for casting ${{{R}}}^2= {\mathcal{F}}(X_1,X_2,X_3)$ into its canonical form is as follows:
\begin{description}\itemsep=0pt
\item[Step1] Using a certain $g_1\in G$ we transform ${\mathcal{F}}(X_1,X_2,X_3)$ to a~form in which ${\mathcal{F}}^{(3)}$ is in a~canonical
form, ${\mathcal{F}}_c^{(3)}$.
\item[Step2] Using a certain $g_2\in \operatorname{Stab}_{G}\{ {\mathcal{F}}_c^{(3)} +O(\mathcal{H})\}$ we transform
${\mathcal{F}}(X_1,X_2,X_3)$ (that we got in step 1) to a form in which ${\mathcal{F}}^{(3)}+\mathcal{H}{\mathcal{F}}^{(2)}$ is in a canonical
form ${\mathcal{F}}_c^{(3)}+\mathcal{H}{\mathcal{F}}_c^{(2)}$.
\item[Step3] Using a certain $g_3\in \operatorname{Stab}_{G}\{ {\mathcal{F}}_c^{(3)}+\mathcal{H}{\mathcal{F}}_c^{(2)}+O(\mathcal{H}^2)\}$ we
transform ${\mathcal{F}}(X_1,X_2,X_3)$ (that we got in step 2) to a form in which ${\mathcal{F}}^{(3)}+\mathcal{H}{\mathcal{F}}^{(2)}
+\mathcal{H}^2{\mathcal{F}}^{(1)}$ is in a canonical form ${\mathcal{F}}_c^{(3)}+\mathcal{H}{\mathcal{F}}_c^{(2)}+\mathcal{H}^2{\mathcal{F}}_c^{(1)}$.
\item [Step4] Using a certain $g_4\in \operatorname{Stab}_{G}\{ {\mathcal{F}}_c^{(3)}+\mathcal{H}{\mathcal{F}}_c^{(2)}
+\mathcal{H}^2{\mathcal{F}}_c^{(1)}+O(\mathcal{H}^3)\}$ we transform ${\mathcal{F}}(X_1,X_2,X_3)$ (that we got in step 3)
to a form in which ${\mathcal{F}}^{(3)}+\mathcal{H}{\mathcal{F}}^{(2)}+\mathcal{H}^2{\mathcal{F}}^{(1)}+\mathcal{H}^3{\mathcal{F}}^{(0)}$ is
in a canonical form ${\mathcal{F}}_c^{(3)}+\mathcal{H}{\mathcal{F}}_c^{(2)}+\mathcal{H}^2{\mathcal{F}}_c^{(1)}+\mathcal{H}^3{\mathcal{F}}_c^{(0)}$.
This is the canonical form of $ \mathcal{F}$.
\end{description}
At the end of the section we list all possible canonical form of quadratic algebras in a table.

 \subsubsection[The four cases for $\mathcal{F}^{(3)}$]{The four cases for $\boldsymbol{\mathcal{F}^{(3)}}$}
Note that for two presentations of the Casimir of a given quadratic algebra: ${ { {R}}}^2= {\mathcal{F}}(X_1,X_2,X_3)$ and ${\widetilde{\mathcal{R}}}^2=\widetilde{\mathcal{F}}(\widetilde{X}_1,\widetilde{X}_2, \widetilde{X})$ that are related by
 equation~(\ref{basischange}) with
 $A=\left(\begin{matrix}
A_{1,1} & A_{1,2}&0 \\
A_{2,1}&A_{2,2} &0 \\
0 &0 &1
\end{matrix}\right)\in {\rm GL}(3,\mathbb {C})$
and
\begin{gather*}
{\widetilde{\mathcal{R}}}^2=\widetilde{\mathcal{F}}^{(3)}\big(\widetilde{X}_1,\widetilde{X}_2\big)+
 \widetilde{X}_3\widetilde{\mathcal{F}}^{(2)}\big(\widetilde{X}_1,\widetilde{X}_2\big)
 +\widetilde{X}_3^2\widetilde{\mathcal{F}}^{(1)}\big(\widetilde{X}_1,\widetilde{X}_2\big)+\widetilde{X}_3^3\widetilde{\mathcal{F}}^{(0)}
 \end{gather*}
we have
\begin{gather*}
 \widetilde{\mathcal{F}}^{(i)}\big(\widetilde{X}_1,\widetilde{X}_2\big) = \det (A_2)^2\mathcal{F}^{(i)}\big(A_2^{-1}\big(\widetilde{X}_1,\widetilde{X}_2\big) \big).
\end{gather*}
From this we can deduce the following lemma.
\begin{Lemma}\label{prop2}
Given $\mathcal{F}\in \mathbb {C}^{[3]}[x_1,x_2,x_3]$ we can find an explicit matrix $A\in G$ such that
for
\begin{gather*}
\widetilde{\mathcal{F}}\big(\widetilde{\mathcal{L}}_1,\widetilde{\mathcal{L}}_2,\widetilde{\mathcal{H}}\big) =
\det (A_2)^2\mathcal{F}\big(A^{-1}\big(\widetilde{\mathcal{L}}_1,\widetilde{\mathcal{L}}_2,\widetilde{\mathcal{H}}\big) \big)
\end{gather*}
we have $\widetilde{\mathcal{F}}^{(3)}(\widetilde{X}_1,\widetilde{X}_2)=C_I(X_1,X_2)$, where $C_I$ equal to exactly one of the following
\begin{gather*}
0,\qquad C_1(X_1,X_2)=X_1X_2(X_1+X_2),\qquad C_2(X_1,X_2)={X}_1^2 X_2,\qquad C_3(X_1,X_2)={X}_1^3.
\end{gather*}
\end{Lemma}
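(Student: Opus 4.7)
The plan is to reduce the problem to the classical classification of binary cubic forms on $\mathbb{P}^1$ under the $\mathrm{GL}(2,\mathbb{C})$-action. First I would observe, directly from the transformation rule recalled immediately before the lemma, that $\widetilde{\mathcal{F}}^{(3)}(\widetilde{X}_1,\widetilde{X}_2)=\det(A_2)^2\,\mathcal{F}^{(3)}\!\left(A_2^{-1}(\widetilde{X}_1,\widetilde{X}_2)\right)$ depends only on the $2\times 2$ upper-left block $A_2$ of $A$. Consequently, we may freely set $A_{1,3}=A_{2,3}=0$, $A_{3,3}=1$, and the whole question becomes: classify $\mathrm{GL}(2,\mathbb{C})$-orbits of binary cubic forms $\mathcal{F}^{(3)}(X_1,X_2)\in\mathbb{C}[X_1,X_2]$ under the twisted action $f\mapsto \det(A_2)^2\,f(A_2^{-1}(\cdot))$.

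Next, if $\mathcal{F}^{(3)}\equiv 0$ we are already in the canonical form $0$. Otherwise, by the fundamental theorem of algebra, $\mathcal{F}^{(3)}$ factors over $\mathbb{C}$ as a product of three linear forms, so it is classified by the unordered multiset of its roots in $\mathbb{P}^1$, which falls into exactly three cases: three distinct roots, one double plus one simple root, or a triple root. I would handle these in turn by invoking the well-known transitivity properties of $\mathrm{PGL}(2,\mathbb{C})$: sharp 3-transitivity on $\mathbb{P}^1$ for the distinct case, 2-transitivity on ordered pairs of distinct points for the $(2,1)$ case, and 1-transitivity for the triple-root case. In each case a suitable lift to $\mathrm{GL}(2,\mathbb{C})$ sends the roots to the standard configurations $\{[0{:}1],[1{:}0],[1{:}{-1}]\}$, $\{[0{:}1]_{\text{double}},[1{:}0]\}$, or $\{[0{:}1]_{\text{triple}}\}$, respectively, and transforms $\mathcal{F}^{(3)}$ into $c\,X_1X_2(X_1+X_2)$, $c\,X_1^2X_2$, or $c\,X_1^3$ for some nonzero scalar $c\in\mathbb{C}^{*}$.

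To remove $c$, I would exploit the scalar subgroup of $\mathrm{GL}(2,\mathbb{C})$. Taking $A_2=tI$ gives $\det(A_2)^2\,\mathcal{F}^{(3)}(A_2^{-1}X_1,A_2^{-1}X_2)=t^{2}\cdot t^{-3}\mathcal{F}^{(3)}(X_1,X_2)=t^{-1}\mathcal{F}^{(3)}(X_1,X_2)$, so choosing $t=c$ rescales $c\mapsto 1$. Composing this with the previous matrix yields the desired $A_2$, and hence the required $A\in G$, so that $\widetilde{\mathcal{F}}^{(3)}$ coincides with $C_1$, $C_2$, or $C_3$.

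The main conceptual point is simply that the twisted action $f\mapsto \det(A_2)^2 f(A_2^{-1}\cdot)$ still contains all nonzero scalings of $f$ (through $A_2=tI$), so the $\mathrm{GL}(2,\mathbb{C})$-orbits are precisely the $\mathrm{PGL}(2,\mathbb{C})$-orbits of unordered root configurations, of which there are four (including the zero form). The only mild obstacle is checking that the scalar absorption is possible; the computation $t^{2}\cdot t^{-3}=t^{-1}$ surjects onto $\mathbb{C}^{*}$, which disposes of it. Because each step is constructive (factor the cubic, choose the Möbius transformation sending the roots to the standard triple/pair/point, then rescale), the argument also furnishes the explicit $A$ that the lemma asks for, setting up the later steps of the algorithm.
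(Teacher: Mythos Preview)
Your approach is correct and is precisely what the paper has in mind: the sentence ``From this we can deduce the following lemma'' right before the statement refers to the transformation rule $\widetilde{\mathcal{F}}^{(i)}=\det(A_2)^2\,\mathcal{F}^{(i)}(A_2^{-1}(\cdot))$, and the intended argument is exactly the classification of binary cubics by their root multiset on $\mathbb{P}^1$, followed by a scalar normalization. The paper gives no further details, so your write-up is a faithful expansion.

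One small arithmetic slip: for $A_2=tI\in\mathrm{GL}(2,\mathbb{C})$ one has $\det(A_2)=t^2$, hence $\det(A_2)^2=t^{4}$, and the overall factor on a cubic is $t^{4}\cdot t^{-3}=t$, not $t^{2}\cdot t^{-3}=t^{-1}$. This does not affect the argument, since $t\mapsto t$ still surjects onto $\mathbb{C}^{*}$, but you should correct the exponent.
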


\begin{Proposition}
\begin{gather*}
\operatorname{Stab}_{G}(C_1+O(\mathcal{H}))=\left\{\left(\begin{matrix}
A &v\\
0 & c
\end{matrix}\right)|\,A\in\Omega(C_1), \,v\in \mathbb{C}^2, \,c\in \mathbb{C}^* \right\},\\
\end{gather*}
where
\begin{gather*}
\Omega(C_1)=\left\{\left(\begin{matrix}
0 & 1\\
1 & 0
\end{matrix}\right), \left(\begin{matrix}
0 & 1\\
-1 & -1
\end{matrix}\right),\left(\begin{matrix}
1 & 0\\
0 & 1
\end{matrix}\right) \right\} \\
\hphantom{\Omega(C_1)=}{}
\coprod\left\{\left(\begin{matrix}
-1 & -1\\
0 & -1
\end{matrix}\right), \left(\begin{matrix}
1 & 0\\
-1 & -1
\end{matrix}\right),\left(\begin{matrix}
-1 & -1\\
1 & 0
\end{matrix}\right) \right\}, \\
\operatorname{Stab}_{G}(C_2+O(\mathcal{H}))=\left\{\left(\begin{matrix}
a&0 &v_1\\
0&1 & v_2\\
0&0&c
\end{matrix}\right)|\, v_1,v_2\in \mathbb{C}, \,a,c\in \mathbb{C}^* \right\},\\
\operatorname{Stab}_{G}(C_3+O(\mathcal{H}))=\left\{\left(\begin{matrix}
d^2&0 &v_1\\
b&d & v_2\\
0&0&c
\end{matrix}\right)| \,b,v_1,v_2\in \mathbb{C}, \,c,d\in \mathbb{C}^* \right\}.
\end{gather*}
\end{Proposition}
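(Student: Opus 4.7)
The plan is to reduce the problem to a question about the $2\times 2$ upper-left block $A_2$ of $A\in G$, and then classify which $A_2$'s preserve each of the canonical cubics $C_I$ in two variables.

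The first step is to use the block form of $A=\begin{pmatrix} A_2 & v \\ 0 & c\end{pmatrix}\in G$, whose inverse is $A^{-1}=\begin{pmatrix} A_2^{-1} & -A_2^{-1}v/c \\ 0 & 1/c\end{pmatrix}$; in particular the third row of $A^{-1}$ is $(0,0,1/c)$. Consequently, substituting $\widetilde{X}_3=0$ into $A^{-1}\widetilde{X}$ keeps the third component at $0$ and sends the first two to $A_2^{-1}(\widetilde{X}_1,\widetilde{X}_2)^T$. Plugging into equation~\eqref{e8} and extracting the $\widetilde{X}_3$-independent part gives
\[
\widetilde{\mathcal{F}}^{(3)}(\widetilde{X}_1,\widetilde{X}_2) \;=\; \det(A_2)^2\, \mathcal{F}^{(3)}\big(A_2^{-1}(\widetilde{X}_1,\widetilde{X}_2)\big).
\]
Hence $A\in\operatorname{Stab}_{G}(C_I+O(\mathcal{H}))$ iff $\det(A_2)^2\, C_I(A_2^{-1}(X_1,X_2))=C_I(X_1,X_2)$, a condition that involves only $A_2$, while $v\in\mathbb{C}^2$ and $c\in\mathbb{C}^*$ remain free. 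This already produces the ``block'' description of the stabilizer appearing in each case of the proposition, and reduces the task to computing the subgroup $\Omega(C_I)\subset GL(2,\mathbb{C})$ of all such $A_2$.

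The unifying idea is that each $C_I$ is a product of homogeneous linear forms in $X_1,X_2$: $C_3=X_1\cdot X_1\cdot X_1$, $C_2=X_1\cdot X_1\cdot X_2$, and $C_1=X_1\cdot X_2\cdot(X_1+X_2)$. By unique factorization in $\mathbb{C}[X_1,X_2]$, the identity $\det(A_2)^2 C_I(A_2^{-1}X)=C_I(X)$ forces the transformed linear factors to be, up to nonzero scalars, a permutation of the original linear factors preserving multiplicities; the overall scaling condition then pins down those scalars. Concretely, for $C_3=X_1^3$ the only factor is $X_1$, so $(A_2^{-1}X)_1$ must be a scalar multiple of $X_1$; writing $A_2=\begin{pmatrix}a&b\\e&f\end{pmatrix}$ this forces $b=0$, after which the scaling condition $\det(A_2)^2\,[(A_2^{-1})_{1,1}]^3=f^2/a=1$ gives $a=f^2$, producing the parametrization with $a=d^2$. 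For $C_2=X_1^2X_2$, the factors $X_1$ and $X_2$ have distinct multiplicities, so each must be individually preserved; this forces $A_2$ to be diagonal, and the scaling condition pins the $(2,2)$-entry to $1$, leaving the $(1,1)$-entry free in $\mathbb{C}^*$. For $C_1$, the three distinct simple factors may be arbitrarily permuted; enumerating the $6$ permutations of $\{X_1,X_2,X_1+X_2\}$, each determines $A_2^{-1}$ up to a scalar, and the scaling condition uniquely fixes that scalar, yielding six matrices forming a copy of $S_3$ acting on the three linear factors.

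The main bookkeeping, and the one place where the argument becomes case-intensive, is the enumeration for $\Omega(C_1)$: for each of the six permutations one writes the implied $A_2^{-1}$ with an unknown scalar $\lambda$, computes $\det A_2=\lambda^{-2}\cdot(\pm 1)$, and checks that the scaling identity forces a specific value of $\lambda$ so that the resulting matrix preserves $C_1$ exactly. The stabilizers for $C_2$ and $C_3$, by contrast, are straightforward $GL(2,\mathbb{C})$ computations once the block reduction is in place. After these three computations the desired block-matrix descriptions of $\operatorname{Stab}_{G}(C_I+O(\mathcal{H}))$ follow immediately.
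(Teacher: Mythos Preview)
The paper states this proposition without proof, so there is no argument in the paper to compare against. Your approach is correct and is the natural one: reduce via the block structure of $A\in G$ to the condition $\det(A_2)^2\,C_I(A_2^{-1}(X_1,X_2))=C_I(X_1,X_2)$ on the $2\times2$ block $A_2$, and then exploit unique factorization of $C_I$ into linear forms to pin down the possible $A_2$. The computations for $C_2$ and $C_3$ go through exactly as you indicate, and for $C_1$ the fact that a projective transformation of $\mathbb{P}^1$ is uniquely determined by the images of three distinct points gives one matrix (up to scalar) per permutation of the three linear factors, with the scaling condition then fixing the scalar.

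One small point worth flagging: your enumeration for $\Omega(C_1)$ is described but not carried out explicitly, and if you do carry it out you will find that the six matrices form a group isomorphic to $S_3$. The matrix $\left(\begin{smallmatrix}-1&-1\\0&-1\end{smallmatrix}\right)$ appearing in the paper's list has infinite order in $GL(2,\mathbb{C})$ and does \emph{not} preserve $C_1$ under the action \eqref{e8}; a direct check shows $\det(A_2)^2\,C_1(A_2^{-1}X)=-X_1X_2(X_1-X_2)\ne C_1(X)$. The correct sixth matrix produced by your method is $\left(\begin{smallmatrix}-1&-1\\0&1\end{smallmatrix}\right)$, which is an involution and does preserve $C_1$. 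So your argument actually proves the corrected version of the statement; the discrepancy is a typo in the paper's listed $\Omega(C_1)$, not a flaw in your reasoning.
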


\subsection{First case: three distinct roots}
Suppose that
\begin{gather*}
{\mathcal{F}}^{(3)}(X_1,X_2)+\mathcal{H}{\mathcal{F}}^{(2)}(X_1,X_2)= C_1(X_1,X_2)+ { \mathcal{H}}\big(c_5 X_1^2+c_6X_2^2+c_7 X_1X_2\big).
\end{gather*}

Acting with
\begin{gather*}
 A=\left(\begin{matrix}
1&0&-c_6\\
0&1&-c_5\\
0&0&1
\end{matrix}\right)^{-1}\in \operatorname{Stab}_{G}(C_1(X_1,X_2))
\end{gather*}
we get
\begin{gather*}
 C_1(X_1,X_2)+ { \mathcal{H}}\big(c_5 X_1^2+c_6X_2^2+c_7 X_1X_2\big)\\
 \qquad{} \longmapsto
 C_1(X_1,X_2)+{ \mathcal{H}}\big(c'_7 X_1X_2\big)+\mathcal{H}^2(c'_8X_1+c'_9X_2)+c_{10}\mathcal{H}^3
\end{gather*}
for some $c'_7$, $c'_8$, $c'_9$, $c'_{10}$, hence we can assume that the
\begin{gather*}
 {\mathcal{F}}^{(3)}(X_1,X_2)+\mathcal{H}{\mathcal{F}}^{(2)}(X_1,X_2)=C_1(X_1,X_2)+c_7 { \mathcal{H}} X_1X_2+O\big(\mathcal{H}^2\big)
\end{gather*}
using a matrix of the form
\begin{gather*}
 A=\left(\begin{matrix}
1&0&0\\
0&1&0\\
0&0&r
\end{matrix}\right)
\end{gather*}
we can further assume that $c_7\in \{0,1\}$. For the case of $c_7=0$ we obtain the following proposition:
\begin{Proposition}
The stabilizer of the form
\begin{gather*}
{\mathcal{F}}^{(3)}(X_1,X_2)+\mathcal{H}{\mathcal{F}}^{(2)}(X_1,X_2)+O\big(\mathcal{H}^2\big)= C_1(X_1,X_2)+O\big(\mathcal{H}^2\big)
\end{gather*}
is given by
\begin{gather*}
\operatorname{Stab}_{G}\big(C_1(X_1,X_2)+O\big(\mathcal{H}^2\big)\big)=
\left\{\left(\begin{matrix}
A &0_2\\
0 & c
\end{matrix}\right)|\,A\in \Omega (C_1), \,0_2=0\in \mathbb{C}^2, \, c\in \mathbb{C}^* \right\}.
\end{gather*}
\end{Proposition}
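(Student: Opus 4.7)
The plan is to refine the previously computed $\operatorname{Stab}_{G}(C_1+O(\mathcal{H}))$ by imposing the additional constraint that distinguishes $O(\mathcal{H}^2)$ from $O(\mathcal{H})$. Any $g\in \operatorname{Stab}_{G}(C_1+O(\mathcal{H}^2))$ in particular preserves the leading cubic $C_1$, so it automatically has the form $g=\left(\begin{matrix} A & v\\ 0 & c\end{matrix}\right)$ with $A\in \Omega(C_1)$, $v\in \mathbb{C}^2$, $c\in \mathbb{C}^*$. The task is therefore reduced to determining which $v$ have the property that $\widetilde{\mathcal{F}}^{(2)}\equiv 0$ whenever $\mathcal{F}^{(2)}\equiv 0$.

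The next step is to compute $\widetilde{\mathcal{F}}^{(2)}$ directly from~(\ref{e8}). Writing $g^{-1}=\left(\begin{matrix} A^{-1} & -c^{-1}A^{-1}v\\ 0 & c^{-1}\end{matrix}\right)$ and setting $u=A^{-1}(X_1,X_2)^{T}$, $w=-c^{-1}A^{-1}v$, one has $g^{-1}(X_1,X_2,X_3)^{T}=(u+wX_3,\,c^{-1}X_3)^{T}$. For an $\mathcal{F}$ with $\mathcal{F}^{(3)}=C_1$ and $\mathcal{F}^{(2)}=0$, the only source of an $X_3$-linear contribution to $\mathcal{F}(g^{-1}X)$ is the first-order Taylor expansion of the cubic $C_1(u+wX_3)$, since every other summand in $\mathcal{F}$ is already $O(X_3^2)$. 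Multiplying by $\det(A_2)^2$ yields
\begin{gather*}
\widetilde{\mathcal{F}}^{(2)}(X_1,X_2) = -c^{-1}\det(A_2)^2\,\nabla C_1\!\left(A^{-1}(X_1,X_2)^{T}\right)\cdot A^{-1}v.
\end{gather*}

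Finally I would argue that this polynomial vanishes identically in $X_1,X_2$ if and only if $\nabla C_1(u)\cdot A^{-1}v\equiv 0$ in $u_1,u_2$, since $A^{-1}$ is a linear bijection and $c^{-1}\det(A_2)^2\neq 0$. With $C_1(u_1,u_2)=u_1u_2(u_1+u_2)$, the two partials $\partial_{u_1}C_1=u_2(2u_1+u_2)$ and $\partial_{u_2}C_1=u_1(u_1+2u_2)$ are linearly independent quadratic forms, which forces $A^{-1}v=0$ and hence $v=0$. The converse inclusion is immediate: any $g$ with $v=0$ and $A\in \Omega(C_1)$ manifestly preserves both $C_1$ and the vanishing of the $\mathcal{H}$-linear part. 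I expect the main obstacle to be little more than bookkeeping --- carrying the $X_3^{1}$ contribution through the inverse $g^{-1}$ and the Taylor expansion of $C_1$ without conflating the two sets of variables $(X_1,X_2)$ and $u$ --- after which the linear-independence argument for the gradient of $C_1$ closes the proof at once.
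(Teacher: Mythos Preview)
Your proof is correct and follows essentially the same approach as the paper's: both arguments first invoke the previously computed $\operatorname{Stab}_G(C_1+O(\mathcal{H}))$ to conclude $A\in\Omega(C_1)$, and then determine which translation vectors $v$ keep the $\mathcal{H}$-linear part zero. The only tactical difference is that the paper first left-multiplies by $\left(\begin{smallmatrix}M_2^{-1}&0\\0&1\end{smallmatrix}\right)$ to reduce to the case $A=I$ before computing, whereas you carry a general $A\in\Omega(C_1)$ through the Taylor expansion of $C_1$ and close with the linear independence of $\partial_{u_1}C_1$ and $\partial_{u_2}C_1$; both routes land on the same elementary observation that the $\mathcal{H}^1$ coefficient vanishes iff $v=0$.
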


\begin{proof}
It is easy to see that
\begin{gather*}
 \operatorname{Stab}_{G}\big(C_1(X_1,X_2)+O\big(\mathcal{H}^2\big)\big)\supseteq \left\{\left(\begin{matrix}
A &0_2\\
0 & c
\end{matrix}\right)|\,A\in \Omega (C_1), \,0_2=0\in \mathbb{C}^2, \,c\in \mathbb{C}^* \right\}.
\end{gather*}
For inclusion in the other direction, let $M\in \operatorname{Stab}_{G}(C_1(X_1,X_2)+O(\mathcal{H}^2))$ then obviously~$M_2$ has to preserve
$C_1(X_1,X_2)$, i.e., $M_2 \in \Omega(C_1)$. Hence the matrix
\begin{gather*}
 \left(\begin{matrix}
\big(M^{-1}\big)_{1,1} & \big(M^{-1}\big)_{1,2} & 0\\
\big(M^{-1}\big)_{2,1} & \big(M^{-1}\big)_{2,2} &0\\
0&0&1
\end{matrix}\right)M= \left(\begin{matrix}
1 & 0 & M_{1,3}\\
0 & 1 &M_{2,3}\\
0&0&M_{3,3}
\end{matrix}\right)
\end{gather*}
as a product of two matrices in the stabilizer $\operatorname{Stab}_{G}(C_1(X_1,X_2)+O(\mathcal{H}^2))$ is also in the stabilizer. The result of the action of this matrix on $C_1(X_1,X_2)+O(\mathcal{H}^2)$ forces $ M_{1,3}= M_{2,3}=0$.
\end{proof}

For the case of $c_7=1$ we obtain the following proposition:
\begin{Proposition}
The stabilizer of the form
\begin{gather*}
{\mathcal{F}}^{(3)}(X_1,X_2)+\mathcal{H}{\mathcal{F}}^{(2)}(X_1,X_2)+O\big(\mathcal{H}^2\big)= C_1(X_1,X_2)+\mathcal{H}X_1X_2+O\big(\mathcal{H}^2\big)
\end{gather*}
is given by
\begin{gather*}
 \operatorname{Stab}_{G}\big(C_1(X_1,X_2)+\mathcal{H}X_1X_2+O\big(\mathcal{H}^2\big)\big)= \left\{\left(\begin{matrix}
1 & 0&0\\
0 & 1&0\\
0&0&1
\end{matrix}\right) \right\}.
\end{gather*}
\end{Proposition}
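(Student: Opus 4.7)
The plan is to mimic the structure of the immediately preceding proposition (the $c_7=0$ case), reducing the problem to a finite case check.

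First I would observe that, since every element of the stabilizer must in particular preserve the degree--$0$ part of the Casimir in $\mathcal{H}$, we have the chain of inclusions
\begin{gather*}
\operatorname{Stab}_{G}\bigl(C_1(X_1,X_2)+\mathcal{H}X_1X_2+O\bigl(\mathcal{H}^2\bigr)\bigr)
\subseteq \operatorname{Stab}_{G}\bigl(C_1(X_1,X_2)+O\bigl(\mathcal{H}^2\bigr)\bigr).
\end{gather*}
The previous proposition describes the right-hand side explicitly: every element has block form
$M=\left(\begin{smallmatrix}A_2 & 0 \\ 0 & c\end{smallmatrix}\right)$ with $A_2\in\Omega(C_1)$ and $c\in\mathbb{C}^{*}$. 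So the remaining content is to impose the extra condition of preserving the degree--$1$ part $\mathcal{H} X_1X_2$ modulo $\mathcal{H}^2$ and show that only $A_2=I$, $c=1$ survives.

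Second, I would compute the transformation law induced by such an $M$ on $\mathcal{H}X_1X_2$ from equation~\eqref{e8}. Because $v=0$ the top-left block acts purely linearly via $A_2^{-1}$ on $(\widetilde{X}_1,\widetilde{X}_2)$, and $\widetilde{X}_3=c X_3$, so
\begin{gather*}
\mathcal{H}X_1X_2\;\longmapsto\;\frac{\det(A_2)^{2}}{c}\,\widetilde{X}_3\,\bigl[A_2^{-1}\widetilde{X}\bigr]_{1}\bigl[A_2^{-1}\widetilde{X}\bigr]_{2}.
\end{gather*}
Requiring this to equal $\widetilde{X}_3\widetilde{X}_1\widetilde{X}_2$ modulo $\widetilde{X}_3^{\,2}$ yields the single algebraic condition
\begin{gather*}
\frac{\det(A_2)^{2}}{c}\,\bigl[A_2^{-1}\widetilde{X}\bigr]_{1}\bigl[A_2^{-1}\widetilde{X}\bigr]_{2}=\widetilde{X}_1\widetilde{X}_2.
\end{gather*}

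Third, I would evaluate this condition on each of the six explicit matrices listed in $\Omega(C_1)$. For each $A_2\in\Omega(C_1)$ the quadratic form $[A_2^{-1}\widetilde{X}]_1[A_2^{-1}\widetilde{X}]_2$ is a concrete binary quadratic, and comparing coefficients with $\widetilde{X}_1\widetilde{X}_2$ either forces $A_2=I$ (with $c=1$) or yields an inconsistency (a nonzero $\widetilde{X}_1^{\,2}$ or $\widetilde{X}_2^{\,2}$ coefficient that cannot be matched). This is a routine check: each of the five nontrivial elements of $\Omega(C_1)$ corresponds to a nontrivial permutation of the three roots $\{[1\!:\!0],[0\!:\!1],[1\!:\!-1]\}$ of $C_1$, and one reads off directly that the only permutation preserving the pair $\{[1\!:\!0],[0\!:\!1]\}$ (i.e.\ the zero set of $X_1X_2$) in addition to fixing the whole triple is the identity. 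Fixing $A_2=I$ then gives $(1/c)\widetilde{X}_1\widetilde{X}_2=\widetilde{X}_1\widetilde{X}_2$, so $c=1$.

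The main obstacle is the case-by-case verification: one must carry out the six substitutions of $A_2^{-1}\widetilde{X}$ into $[A_2^{-1}\widetilde{X}]_1[A_2^{-1}\widetilde{X}]_2$ and compare with $\widetilde{X}_1\widetilde{X}_2$. This is unavoidable but elementary. The conceptual content is already in the reduction step, which inherits the block structure from the stronger preceding proposition; after that the normalization $c_7=1$ has already used up the rescaling freedom in $\mathcal{H}$, and the extra cubic factor $(X_1+X_2)$ distinguished in $C_1$ forces the action of $A_2$ on $\{X_1,X_2\}$ to be trivial, leaving only the identity.
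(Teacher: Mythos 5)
Your plan follows the same route as the paper's own (very terse) proof: inherit constraints from the $c_7=0$ case and then check the finitely many admissible $2\times 2$ blocks. However, both of the substantive steps fail. First, the inclusion $\operatorname{Stab}_{G}\big(C_1+\mathcal{H}X_1X_2+O(\mathcal{H}^2)\big)\subseteq \operatorname{Stab}_{G}\big(C_1+O(\mathcal{H}^2)\big)$ that you start from is unjustified and in fact false: these are stabilizers of two \emph{different} forms, and an element of the left-hand side must have $M_2\in\Omega(C_1)$ but need not have $v=0$, since the contribution of $v$ to the degree-one-in-$\mathcal{H}$ part (essentially $\nabla C_1\cdot v$) can be absorbed by the change in $X_1X_2$ rather than having to vanish on its own. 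Second, even restricted to block-diagonal elements your root-counting step is wrong: the transposition $A_2=\left(\begin{matrix}0&1\\1&0\end{matrix}\right)\in\Omega(C_1)$ with $c=1$ fixes both $C_1$ and $X_1X_2$ exactly (it swaps the roots $[1\!:\!0]\leftrightarrow[0\!:\!1]$ and therefore preserves the pair $\{[1\!:\!0],[0\!:\!1]\}$ setwise), so it survives your ``routine check'', contrary to your claim that only the identity does.

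These gaps are not repairable, because the proposition itself is false (and the paper's one-line proof, which asserts $M_2=I$, shares the error). Indeed $C_1(X_1,X_2)+\mathcal{H}X_1X_2=X_1X_2(X_1+X_2+\mathcal{H})$ is a product of three linear forms, and the copy of $S_3$ permuting the factors $X_1$, $X_2$, $X_1+X_2+X_3$ up to sign lies entirely inside $G$ and fixes the form exactly, hence a fortiori modulo $O(\mathcal{H}^2)$. Besides the transposition above, an explicit $3$-cycle is $X_1=-\widetilde X_1-\widetilde X_2-\widetilde X_3$, $X_2=\widetilde X_1$, $X_3=\widetilde X_3$, i.e., $A^{-1}=\left(\begin{matrix}-1&-1&-1\\1&0&0\\0&0&1\end{matrix}\right)$, which has $\det(A_2)^2=1$, nonzero $v$, and $A_2\in\Omega(C_1)$; one checks directly that it maps $X_1X_2(X_1+X_2+X_3)$ to itself. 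A correct computation gives $\operatorname{Stab}_{G}\big(C_1+\mathcal{H}X_1X_2+O(\mathcal{H}^2)\big)\cong S_3$, of order six, and consequently the parameters $(c_8,c_9,c_{10})$ of canonical form 1d in Table~\ref{table1} are determined only up to this residual action, not uniquely as the subsequent discussion claims.
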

\begin{proof}
Following the same reasoning as in the previous proof we easily see that for $M\in \operatorname{Stab}_{G}(C_1(X_1,X_2)
+\mathcal{H}X_1X_2+O(\mathcal{H}^2))$ we must have $M_2=\left(\begin{matrix}
1 & 0\\
0 & 1
\end{matrix}\right) $ and then by direct calculation the rest of the proof follows.
\end{proof}

\subsubsection[$\mathcal{F}^{(3)}(X_1,X_2)=C_1(X_1,X_2)$ and $c_7=0$]{$\boldsymbol{\mathcal{F}^{(3)}(X_1,X_2)=C_1(X_1,X_2)}$ and $\boldsymbol{c_7=0}$}

Suppose that
\begin{gather*}
\mathcal{R}^2= {\mathcal{F}}^{(3)}(X_1,X_2)+\mathcal{H}{\mathcal{F}}^{(2)}(X_1,X_2)
+\mathcal{H}^2{\mathcal{F}}^{(1)}(X_1,X_2)+c_{10}\mathcal{H}^3 \\
\hphantom{\mathcal{R}^2}{}=C_1(X_1,X_2)+ { \mathcal{H}}^2 (c_8 X_1+c_9X_2 )+c_{10}\mathcal{H}^3.
\end{gather*}

Acting with $A=\left(\begin{matrix}
\alpha & \beta &0 \\
\gamma & \delta &0\\
0&0&c
\end{matrix} \right)^{-1}\in \operatorname{Stab}_{G}(C_1(X_1,X_2)+O(\mathcal{H}^2))$ on $\mathcal{R}^2$ we will have
\begin{gather*}
\mathcal{R}^2=C_1(X_1,X_2)+ { \mathcal{H}}^2\left(c_8 X_1+c_9X_2\right)+c_{10}\mathcal{H}^3 \\
\hphantom{\mathcal{R}^2}{} \longmapsto C_1(X_1,X_2)+\mathcal{H}^2(c'_8X_1+c'_9X_2)+c'_{10}\mathcal{H}^3,
\end{gather*}
where $c'_8=c^2(\alpha c_8+ \gamma c_9)$, $c'_9=c^2(\beta c_8+\delta c_9)$, $c'_{10}=c^3 c_{10}$, and
 $\left(\begin{matrix}
\alpha & \beta \\
\gamma & \delta
\end{matrix} \right)\in \Omega(C_1)$. Note that the size of the group $\Omega(C_1)$ is~6.

We now describe an algorithm for choosing a canonical form in this case. If $c_{10}\neq 0$ then acting with $\left(\begin{matrix}
1&0 & 0\\
0 & 1 &0\\
0&0&(c_{10})^{\frac{1}{3}}
\end{matrix}\right)$ we obtain $c_{10}'= 1$. Writing
$\left(\begin{matrix}
c_8' \\
c_9'
\end{matrix} \right)=\left(\begin{matrix}
re^{i\theta}\\
\rho e^{i\phi}
\end{matrix} \right)$
with $r, \rho \geq 0$ and $\theta, \phi \in [0,2\pi)$ we choose as our canonical form the expression for $c_8$ and $c_9$ according to the following rules (note that the order is important) f\/irst make $r$ is maximal, then $\theta $ minimal, then $\rho$ minimal, and f\/inally $\phi $ minimal. If $c_{10}=0$ then again we act with $A=\left(\begin{matrix}
\alpha & \beta &0 \\
\gamma & \delta &0\\
0&0&1
\end{matrix} \right)^{-1} $ with $\left(\begin{matrix}
\alpha & \beta \\
\gamma & \delta
\end{matrix} \right)\in \Omega(C_1)$ and choose $c_8$ and $c_9$ as above and then we can act with a matrix of the form
$\left(\begin{matrix}
1&0 & 0\\
0 & 1 &0\\
0&0&c
\end{matrix}\right)$ to normalize $c_8$ to zero or one.

\subsubsection[$\mathcal{F}^{(3)}(X_1,X_2)=C_1(X_1,X_2)$ and $c_7=1$]{$\boldsymbol{\mathcal{F}^{(3)}(X_1,X_2)=C_1(X_1,X_2)}$ and $\boldsymbol{c_7=1}$}
Suppose that
\begin{gather}
\mathcal{R}^2={\mathcal{F}}^{(3)}(X_1,X_2)+\mathcal{H}{\mathcal{F}}^{(2)}(X_1,X_2)+\mathcal{H}^2{\mathcal{F}}^{(1)}(X_1,X_2)+c_{10}\mathcal{H}^3 \nonumber\\
\hphantom{\mathcal{R}^2}{} =C_1(X_1,X_2)+\mathcal{H}X_1X_2+ { \mathcal{H}}^2\left(c_8 X_1+c_9X_2\right)+c_{10}\mathcal{H}^3.\label{e74}
\end{gather}
Since\begin{gather*}
\operatorname{Stab}_{G}\big(C_1(X_1,X_2)+\mathcal{H}X_1X_2+O\big(\mathcal{H}^2\big)\big)= \left\{\left(\begin{matrix}
1 & 0&0\\
0 & 1&0\\
0&0&1
\end{matrix}\right) \right\}
\end{gather*}
then for any $c_8,c_9,c_{10}\in \mathbb{C}$ equation (\ref{e74}) def\/ines a canonical form.

\subsection{Second case: a double root}

Suppose that
\begin{gather*}
{\mathcal{F}}^{(3)}(X_1,X_2)+\mathcal{H}{\mathcal{F}}^{(2)}(X_1,X_3)= C_2(X_1,X_2)+ { \mathcal{H}}\big(c_5 X_1^2+c_6X_2^2+c_7 X_1X_2\big).
\end{gather*}
Acting with
\begin{gather*}
 A=\left(\begin{matrix}
1&0&-\frac{1}{2}c_7\\
0&1&-c_5\\
0&0&1
\end{matrix}\right)^{-1}\in \operatorname{Stab}_{G}(C_2(X_1,X_2))
\end{gather*} on $\mathcal{R}^2$ we have
\begin{gather*}
C_2(X_1,X_2)+ { \mathcal{H}}\big(c_5 X_1^2+c_6X_2^2+c_7 X_1X_2\big) \\ \nonumber
\qquad{} \longmapsto C_2(X_1,X_2)+{ \mathcal{H}}\big(c'_6X_2^2\big)+\mathcal{H}^2(c'_8X_1+c'_9X_2)+c'_{10}\mathcal{H}^3
\end{gather*}
for some $c'_6$, $c'_8$, $c'_9$, $c'_{10}$. Hence we can assume that the
\begin{gather*}
 {\mathcal{F}}^{(3)}(X_1,X_2)+\mathcal{H}{\mathcal{F}}^{(2)}(X_1,X_3)=C_2(X_1,X_2)+c_6 { \mathcal{H}} X^2_2+O\big(\mathcal{H}^2\big)
\end{gather*}
using a matrix of the form $A=\left(\begin{matrix}
1&0&0\\
0&1&0\\
0&0&r
\end{matrix}\right)$
we can further assume that $c_6\in \{0,1\}$. For the case of $c_6=0$ we obtain the following proposition:
\begin{Proposition}
The stabilizer of the form
\begin{gather*}
{\mathcal{F}}^{(3)}(X_1,X_2)+\mathcal{H}{\mathcal{F}}^{(2)}(X_1,X_2)+O\big(\mathcal{H}^2\big)= C_2(X_1,X_2)+O\big(\mathcal{H}^2\big)
\end{gather*}
is given by
\begin{gather*}
\operatorname{Stab}_{G}\big(C_2(X_1,X_2)+O\big(\mathcal{H}^2\big)\big)=\left\{\left(\begin{matrix}
a &0&0\\
0&1&0\\
0&0 & c
\end{matrix}\right)|\,a,c\in \mathbb{C}^* \right\}.
\end{gather*}
\end{Proposition}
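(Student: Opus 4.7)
The plan is to mimic exactly the structure of the two preceding propositions, where the same stabilizer-reduction trick was used for $C_1$. The one direction, $\supseteq$, is a straight check: for $A = \operatorname{diag}(a,1,c)$ one has $A_2 = \operatorname{diag}(a,1)$, $\det(A_2)^2 = a^2$, and $A^{-1}(X_1,X_2,X_3) = (X_1/a, X_2, X_3/c)$, so $\det(A_2)^2 \mathcal{F}(A^{-1}(X)) = a^2 \cdot (X_1/a)^2 X_2 + O(X_3^2) = X_1^2 X_2 + O(X_3^2)$, showing that such a diagonal matrix preserves the form $C_2(X_1,X_2) + O(\mathcal{H}^2)$.

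For the reverse inclusion, I would take $M \in \operatorname{Stab}_G(C_2(X_1,X_2) + O(\mathcal{H}^2))$ and first observe that the $2\times 2$ block $M_2$ must preserve the leading term $C_2(X_1,X_2) = X_1^2 X_2$. Combined with the fact that the full previous proposition for $\operatorname{Stab}_G(C_2 + O(\mathcal{H}))$ has $M_2$-part of the form $\operatorname{diag}(a,1)$, this forces $M_2 = \operatorname{diag}(a,1)$ for some $a\in\mathbb{C}^*$. Then, following the same trick as in the $C_1$ case, the product
\begin{gather*}
\left(\begin{matrix} a^{-1} & 0 & 0 \\ 0 & 1 & 0 \\ 0 & 0 & 1 \end{matrix}\right) M = \left(\begin{matrix} 1 & 0 & M_{1,3} \\ 0 & 1 & M_{2,3} \\ 0 & 0 & M_{3,3} \end{matrix}\right)
\end{gather*}
is a product of two elements of $\operatorname{Stab}_G(C_2 + O(\mathcal{H}^2))$ and therefore itself lies in this stabilizer. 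It thus suffices to show that any such upper-triangular matrix with $1$'s on the top-left $2\times 2$ block that stabilizes $C_2 + O(\mathcal{H}^2)$ must have $M_{1,3} = M_{2,3} = 0$.

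This last step is a short direct computation: for $A$ of the displayed form, $A^{-1}$ sends $(X_1,X_2,X_3)$ to $(X_1 - M_{1,3} X_3/M_{3,3},\, X_2 - M_{2,3} X_3/M_{3,3},\, X_3/M_{3,3})$, and since $\det(A_2)^2 = 1$, the transformed polynomial has the form
\begin{gather*}
X_1^2 X_2 \;-\; \frac{M_{2,3}}{M_{3,3}}\, X_1^2 X_3 \;-\; \frac{2 M_{1,3}}{M_{3,3}}\, X_1 X_2 X_3 \;+\; O\!\big(X_3^2\big).
\end{gather*}
Comparing with $C_2(X_1,X_2) + O(\mathcal{H}^2)$ (and reading off the unique $\mathcal{H}$-linear terms $X_1^2 X_3$ and $X_1 X_2 X_3$) gives $M_{1,3} = M_{2,3} = 0$, completing the inclusion.

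I do not expect a genuine obstacle in this proof: the template from the analogous statements for $C_1$ is designed to be re-run, and the only honest content is the coefficient-matching in the final display. The one place where a touch of care is required is invoking the earlier structural fact that any $M_2$-part preserving $C_2$ already has the block-diagonal form $\operatorname{diag}(a,1)$ — this is where the assumption that $C_2 = X_1^2 X_2$ (a cubic with a double root and a simple root) is used to rule out more general linear substitutions; but this was already established in the preceding proposition on $\operatorname{Stab}_G(C_2 + O(\mathcal{H}))$, so no new work is needed.
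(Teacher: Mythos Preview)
Your proof is correct and follows exactly the template the paper uses for the analogous $C_1$ proposition; the paper in fact states this $C_2$ proposition without proof, leaving precisely this argument to the reader. The one-inclusion check, the reduction via the earlier $\operatorname{Stab}_G(C_2+O(\mathcal{H}))$ result to force $M_2=\operatorname{diag}(a,1)$, and the coefficient-matching to kill $M_{1,3}$ and $M_{2,3}$ are all sound.
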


For the case of $c_6=1$ we obtain the following proposition:
\begin{Proposition}
The stabilizer of the form
\begin{gather*}
{\mathcal{F}}^{(3)}(X_1,X_2)+\mathcal{H}{\mathcal{F}}^{(2)}(X_1,X_2)+O\big(\mathcal{H}^2\big)= C_2(X_1,X_2)+\mathcal{H}X^2_2+O\big(\mathcal{H}^2\big)
\end{gather*}
is given by
\begin{gather*}
\operatorname{Stab}_{G}\big(C_2(X_1,X_2)+\mathcal{H}X^2_2+O\big(\mathcal{H}^2\big)\big)=\left\{\left(\begin{matrix}
r &0 &0\\
0 & 1&0\\
0&0&r^2
\end{matrix}\right)|\,r\in \mathbb{C}^* \right\}.
\end{gather*}
\end{Proposition}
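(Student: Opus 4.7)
The plan is to mimic the strategy used in the two preceding propositions: exploit the inclusion
\[
\operatorname{Stab}_{G}\big(C_2(X_1,X_2)+\mathcal{H}X_2^2+O\big(\mathcal{H}^2\big)\big)\subseteq \operatorname{Stab}_{G}\big(C_2(X_1,X_2)+O\big(\mathcal{H}^2\big)\big)
\]
to reduce to a small parameter family, then enforce invariance of the $\mathcal{H}$-linear term $\mathcal{H}X_2^2$ by a direct computation. Concretely, let $M\in \operatorname{Stab}_{G}(C_2+\mathcal{H}X_2^2+O(\mathcal{H}^2))$. Since $M$ in particular preserves the $\mathcal{H}$-independent part $C_2$, the previous proposition shows that $M_2=\bigl(\begin{smallmatrix}a & 0\\ 0 & 1\end{smallmatrix}\bigr)$ for some $a\in\mathbb{C}^*$, so in fact $M$ lies in $\operatorname{Stab}_{G}(C_2+O(\mathcal{H}))$ and has the form
\[
M=\left(\begin{matrix} a & 0 & v_1\\ 0 & 1 & v_2\\ 0 & 0 & c\end{matrix}\right),\qquad a,c\in\mathbb{C}^*,\quad v_1,v_2\in\mathbb{C}.
\]

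The next step is to apply such an $M$ to $\mathcal{F}(X_1,X_2,X_3)=X_1^2X_2+X_3X_2^2+O(X_3^2)$ via the formula $\widetilde{\mathcal{F}}=\det(A_2)^2\,\mathcal{F}(M^{-1}\widetilde{X})$ and read off the terms of degree $\le 1$ in $\mathcal{H}=X_3$. One computes
\[
M^{-1}\widetilde X=\left(\tfrac{\widetilde X_1}{a}-\tfrac{v_1 \widetilde X_3}{ac},\ \widetilde X_2-\tfrac{v_2\widetilde X_3}{c},\ \tfrac{\widetilde X_3}{c}\right)^{\!T},\qquad \det(A_2)^2=a^2,
\]
and substituting into $X_1^2X_2+X_3X_2^2$ the coefficient of $\widetilde X_3^0$ stays $\widetilde X_1^2\widetilde X_2$, while the coefficient of $\widetilde X_3$ becomes
\[
-\tfrac{v_2}{c}\widetilde X_1^2-\tfrac{2v_1}{c}\widetilde X_1\widetilde X_2+\tfrac{a^2}{c}\widetilde X_2^2.
\]
For $M$ to stabilize the form (modulo $O(\mathcal{H}^2)$) this must equal $\widetilde X_2^2$, forcing $v_2=0$, $v_1=0$, and $a^2=c$. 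Setting $r=a$ gives precisely the matrices claimed.

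The reverse inclusion, i.e., that every $M$ of the stated form indeed stabilizes $C_2+\mathcal{H}X_2^2$ modulo $O(\mathcal{H}^2)$, is then immediate from the same substitution (or already from the computation just performed with $v_1=v_2=0$, $c=a^2$). There is no real obstacle; the only care needed is the bookkeeping in the substitution $\mathcal{F}\circ M^{-1}$, where the term $v_1\widetilde X_3$ inside $(\widetilde X_1/a-v_1\widetilde X_3/(ac))^2$ mixes with the $\widetilde X_2$ factor to produce the $\widetilde X_1\widetilde X_2\widetilde X_3$ coefficient that kills~$v_1$, and similarly $v_2$ in the second factor kills itself against the $\widetilde X_1^2\widetilde X_3$ coefficient. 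Thus the stabilizer is exactly the one-parameter group $\{\operatorname{diag}(r,1,r^2):r\in\mathbb{C}^*\}$, completing the proof.
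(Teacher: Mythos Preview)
Your argument is correct and follows exactly the template the paper uses for the analogous $C_1$ propositions (the paper does not spell out a separate proof here). One small inaccuracy: the displayed inclusion you announce as your plan,
\[
\operatorname{Stab}_{G}\big(C_2+\mathcal{H}X_2^2+O(\mathcal{H}^2)\big)\subseteq \operatorname{Stab}_{G}\big(C_2+O(\mathcal{H}^2)\big),
\]
is true only \emph{a posteriori}; establishing it already requires knowing $v_1=v_2=0$, i.e.\ the very computation you carry out. What you actually use (and correctly justify) is the weaker and obvious inclusion into $\operatorname{Stab}_{G}\big(C_2+O(\mathcal{H})\big)$, which gives the form $M=\bigl(\begin{smallmatrix}a&0&v_1\\0&1&v_2\\0&0&c\end{smallmatrix}\bigr)$; your subsequent coefficient comparison then pins down $v_1=v_2=0$ and $c=a^2$. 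So the logic is sound---just adjust the stated plan to match the (correct) execution.
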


\subsubsection[$\mathcal{F}^{(3)}(X_1,X_2)=C_2(X_1,X_2)$ and $c_6=0$]{$\boldsymbol{\mathcal{F}^{(3)}(X_1,X_2)=C_2(X_1,X_2)}$ and $\boldsymbol{c_6=0}$}

Suppose that
\begin{gather*}
\mathcal{R}^2={\mathcal{F}}^{(3)}(X_1,X_2)+\mathcal{H}{\mathcal{F}}^{(2)}(X_1,X_2)+\mathcal{H}^2{\mathcal{F}}^{(1)}(X_1,X_2)+c_{10}\mathcal{H}^3 \\ \hphantom{\mathcal{R}^2}{} = C_2(X_1,X_2)+ { \mathcal{H}}^2 (c_8 X_1+c_9X_2 )+c_{10}\mathcal{H}^3.
\end{gather*}

Acting with $A=\left(\begin{matrix}
a & 0 &0 \\
0 & 1 &0\\
0&0&c
\end{matrix} \right)^{-1} \in \operatorname{Stab}_{G}(C_2(X_1,X_2)+O(\mathcal{H}^2))$ on $\mathcal{R}^2$ we have
\begin{gather*}
 \mathcal{R}^2=C_2(X_1,X_2)+ { \mathcal{H}}^2 (c_8 X_1+c_9X_2 )+c_{10}\mathcal{H}^3 \\ \nonumber
\hphantom{\mathcal{R}^2}{} \longmapsto C_2(X_1,X_2)+\mathcal{H}^2(c'_8X_1+c'_9X_2)+c'_{10}\mathcal{H}^3,
\end{gather*}
where $c'_8=c^2a^{-1}c_8$, $c'_9=c^2a^{-2} c_9$, $c'_{10}=c^3 a^{-2}c_{10}$. For the canonical form, we normalize the f\/irst two non zero coef\/f\/icients from $c_8$, $c_9$, $c_{10}$ to be equal to 1.

\subsubsection[$\mathcal{F}^{(3)}(X_1,X_2)=C_2(X_1,X_2)$ and $c_6=1$]{$\boldsymbol{\mathcal{F}^{(3)}(X_1,X_2)=C_2(X_1,X_2)}$ and $\boldsymbol{c_6=1}$}

Suppose that
\begin{gather*}
 \mathcal{R}^2=C_2(X_1,X_2)+\mathcal{H}X^2_2+\mathcal{H}^2(c_8X_1+c_9X_2)+c_{10}\mathcal{H}^3.
\end{gather*}
Acting with $A=\left(\begin{matrix}
r& 0 &0 \\
0 & 1 &0\\
0&0&r^2
\end{matrix} \right)^{-1} \in \operatorname{Stab}_{G}(C_2(X_1,X_2)+\mathcal{H}X^2_2+O(\mathcal{H}^2))$ on $\mathcal{R}^2$ we have
\begin{gather*}
\mathcal{R}^2=C_2(X_1,X_2)+\mathcal{H}X^2_2++ { \mathcal{H}}^2 (c_8 X_1+c_9X_2 )+c_{10}\mathcal{H}^3 \\
\hphantom{\mathcal{R}^2}{} \longmapsto C_2(X_1,X_2)+\mathcal{H}X^2_2+\mathcal{H}^2(c'_8X_1+c'_9X_2)+c'_{10}\mathcal{H}^3,
\end{gather*}
where $c'_8=r^3c_8$, $c'_9=r^2 c_9$, $c'_{10}=r^4c_{10}$. We def\/ine the canonical form to be with $c_k=1$, where $k$ is the smallest integer among $\{8,9,10\}$ such that $c_k\neq 0$.

\subsection{Third case: a triple root}

Suppose that
\begin{gather*}
{\mathcal{F}}^{(3)}(X_1,X_2)+\mathcal{H}{\mathcal{F}}^{(2)}(X_1,X_2)= C_3(X_1,X_2)+ { \mathcal{H}}\big(c_5 X_1^2+c_6X_2^2+c_7 X_1X_2\big).
\end{gather*}
Acting with
\begin{gather*}
 A=\left(\begin{matrix}
1&0&-\frac{1}{3}c_5\\
0&1&0\\
0&0&1
\end{matrix}\right)^{-1}\in \operatorname{Stab}_{G}(C_3(X_1,X_2))
\end{gather*}
on $\mathcal{R}^2$ we have
\begin{gather*}
 C_3(X_1,X_2)+ { \mathcal{H}}\big(c_5 X_1^2+c_6X_2^2+c_7 X_1X_2\big)\\ \nonumber
\qquad{} \longmapsto C_3(X_1,X_2)+{ \mathcal{H}}\big(c'_6X_2^2+c'_7 X_1X_2\big)+\mathcal{H}^2(c'_8X_1+c'_9X_2)+c'_{10}\mathcal{H}^3
\end{gather*}
for some $c'_6$, $c'_7$, $c'_8$, $c'_9$, $c'_{10}$. Hence we can assume that the
\begin{gather*}
 {\mathcal{F}}^{(3)}(X_1,X_2)+\mathcal{H}{\mathcal{F}}^{(2)}(X_1,X_3)=C_3(X_1,X_2)+c_6 { \mathcal{H}} X^2_2+ c_7 { \mathcal{H}} X_1X_2
\end{gather*}
using a matrix of the form $A=\left(\begin{matrix}
d^2&0&0\\
0&d &0\\
0&0&r
\end{matrix}\right)$
we can further assume that $c_6, c_7\in \{0,1\}$. For the case of $c_6=c_7=0$ we obtain the following proposition:
\begin{Proposition}
The stabilizer of the form
\begin{gather*}
{\mathcal{F}}^{(3)}(X_1,X_2)+\mathcal{H}{\mathcal{F}}^{(2)}(X_1,X_3)+O\big(\mathcal{H}^2\big)= C_3(X_1,X_2)+O\big(\mathcal{H}^2\big)
\end{gather*}
is given by
\begin{gather*}
 \operatorname{Stab}_{G}\big(C_3(X_1,X_2)+O\big(\mathcal{H}^2\big)\big)=\left\{\left(\begin{matrix}
d^2 &0&0\\
\gamma & d &b\\
0&0&c
\end{matrix}\right)|\, b,\gamma \in \mathbb{C},\, d,c\in \mathbb{C}^* \right\}.
\end{gather*}
\end{Proposition}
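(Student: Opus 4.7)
The plan is to descend from the larger stabilizer $\operatorname{Stab}_G(C_3+O(\mathcal{H}))$ computed in the preceding proposition by imposing the extra condition that the coefficient of $\mathcal{H}^1$ in the transformed Casimir also vanishes. Since $\operatorname{Stab}_G(C_3+O(\mathcal{H}^2))\subseteq \operatorname{Stab}_G(C_3+O(\mathcal{H}))$, any candidate matrix $A$ necessarily has the form
$A=\left(\begin{matrix} d^2 & 0 & v_1\\ \gamma & d & v_2\\ 0 & 0 & c\end{matrix}\right)$
with $d,c\in\mathbb{C}^*$ and $\gamma,v_1,v_2\in\mathbb{C}$, so all that remains is to determine which choices of these parameters additionally kill the linear-in-$\mathcal{H}$ part of $\widetilde{\mathcal{F}}=\det(A_2)^2\mathcal{F}(A^{-1}\tilde X)$ when $\mathcal{F}(X)=X_1^3$.

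The computation is direct. The block-triangular form of $A$ gives $A^{-1}=\left(\begin{matrix} A_2^{-1} & -c^{-1}A_2^{-1}v\\ 0 & c^{-1}\end{matrix}\right)$, and with $A_2^{-1}=\left(\begin{matrix} 1/d^2 & 0\\ -\gamma/d^3 & 1/d\end{matrix}\right)$ the first row of $A^{-1}$ is $(1/d^2,\,0,\,-v_1/(d^2c))$. Hence $(A^{-1}\tilde X)_1=\tilde X_1/d^2-v_1\tilde X_3/(d^2c)$, and since $\det(A_2)^2=d^6$ one finds via equation~\eqref{e8} that $\widetilde{\mathcal{F}}(\tilde X)=\bigl(\tilde X_1-(v_1/c)\tilde X_3\bigr)^3$.

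Reading off the coefficient of $\tilde X_1^2\tilde X_3$ gives $-3v_1/c$, so the $\mathcal{H}^1$-condition is equivalent to $v_1=0$. When $v_1=0$, the transformed Casimir is simply $\tilde X_1^3$, so every higher $\mathcal{H}$-coefficient vanishes automatically and the remaining parameters $\gamma,v_2\in\mathbb{C}$, $d,c\in\mathbb{C}^*$ are unconstrained. Renaming $v_2$ to $b$ then reproduces the asserted set of matrices. I do not foresee any substantive obstacle beyond this cubing computation: the essential observation, nearly forced by inspection, is that the entry $A_{2,3}=v_2$ (as well as $A_{2,1}=\gamma$) never enters the first coordinate of $A^{-1}\tilde X$, so neither can be constrained by the requirement that $X_1^3$ be preserved modulo $\mathcal{H}^2$.
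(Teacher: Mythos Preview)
Your argument is correct and complete. The paper does not actually supply a proof for this particular proposition; it only writes out proofs for the two analogous stabilizer propositions in the $C_1$ case, and leaves the $C_2$ and $C_3$ cases as bare statements. Your method---restrict to the already-known $\operatorname{Stab}_G(C_3+O(\mathcal{H}))$ and then compute directly which elements kill the $\mathcal{H}^1$-part---is exactly the pattern the paper uses in its proof of $\operatorname{Stab}_G(C_1+O(\mathcal{H}^2))$, so your write-up is in the intended spirit and would serve as the missing proof.
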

For the case of $c_6=0$, $c_7=1$ we obtain the following proposition:
\begin{Proposition}
The stabilizer of the form
\begin{gather*}
{\mathcal{F}}^{(3)}(X_1,X_2)+\mathcal{H}{\mathcal{F}}^{(2)}(X_1,X_3)+O\big(\mathcal{H}^2\big)= C_3(X_1,X_2)+\mathcal{H}X_1X_2+O\big(\mathcal{H}^2\big)
\end{gather*}
is given by
\begin{gather*}
 \operatorname{Stab}_{G}(C_3(X_1,X_2)+\mathcal{H}X_1X_2+O\big(\mathcal{H}^2)\big)=\left\{\left(\begin{matrix}
d^2 &0&a\\
-\frac{3a}{d} & d &b\\
0&0&d^3
\end{matrix}\right)|\, a, b \in \mathbb{C}, \, d \in \mathbb{C}^* \right\}.
\end{gather*}
\end{Proposition}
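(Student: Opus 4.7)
The plan is to exploit the inclusion $\operatorname{Stab}_{G}(C_3+\mathcal{H}X_1X_2+O(\mathcal{H}^2))\subseteq \operatorname{Stab}_{G}(C_3+O(\mathcal{H}))$ and then cut the larger stabilizer down by imposing preservation of the $\mathcal{H}$-linear coefficient $X_1X_2$. By the earlier proposition describing $\operatorname{Stab}_{G}(C_3+O(\mathcal{H}))$, any candidate $M$ has $M_{1,1}=d^{2}$, $M_{1,2}=0$, $M_{2,1}=\gamma$, $M_{2,2}=d$, $M_{1,3}=v_1$, $M_{2,3}=v_2$, $M_{3,3}=c$ for some $\gamma,v_1,v_2\in\mathbb{C}$ and $c,d\in\mathbb{C}^{*}$, so the remaining task is to determine which tuples $(d,\gamma,v_1,v_2,c)$ satisfy the extra $\mathcal{H}$-linear condition.

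Next I would compute $\widetilde{\mathcal{F}}(Y)=\det(M_2)^{2}\mathcal{F}(M^{-1}Y)$ for $\mathcal{F}=X_1^{3}+\mathcal{H}X_1X_2$ using $\det(M_2)=d^{3}$ and the explicit inverse entries $(M^{-1})_{1,1}=1/d^{2}$, $(M^{-1})_{1,2}=0$, $(M^{-1})_{2,1}=-\gamma/d^{3}$, $(M^{-1})_{2,2}=1/d$, $(M^{-1})_{1,3}=-v_1/(d^{2}c)$, $(M^{-1})_{3,3}=1/c$. The entry $(M^{-1})_{2,3}$ does not enter the calculation at the $\mathcal{H}$-linear level since it only affects $O(Y_3^{2})$ terms. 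Because $M$ already stabilizes $X_1^{3}$, the $Y_3^{0}$ part of $\widetilde{\mathcal{F}}$ equals $Y_1^{3}$ automatically, so the whole condition reduces to matching the $Y_3$-linear coefficient with $Y_1Y_2$.

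The $Y_3$-linear part of $\widetilde{\mathcal{F}}$ receives two contributions modulo $Y_3^{2}$: from expanding $(X_1')^{3}$ it picks up $3(M^{-1})_{1,3}((M^{-1})_{1,1}Y_1)^{2}Y_3$, and from $\mathcal{H}X_1X_2$ it picks up $(M^{-1})_{3,3}Y_3\cdot (M^{-1})_{1,1}Y_1\cdot\bigl((M^{-1})_{2,1}Y_1+(M^{-1})_{2,2}Y_2\bigr)$. Multiplying the sum by $d^{6}$ and equating with $Y_1Y_2$ yields three scalar equations. The $Y_2^{2}$ equation is trivial since $(M^{-1})_{1,2}=0$; the $Y_1Y_2$ equation simplifies to $d^{3}/c=1$, forcing $c=d^{3}$; and the $Y_1^{2}$ equation simplifies to $-3v_1/c-\gamma d/c=0$, which, combined with $c=d^{3}$, forces $\gamma=-3v_1/d$. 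Setting $a:=v_1$ and $b:=v_2$ then yields precisely the asserted matrix form with free parameters $a,b\in\mathbb{C}$ and $d\in\mathbb{C}^{*}$.

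The converse inclusion, namely that every matrix of the asserted shape indeed stabilizes $C_3+\mathcal{H}X_1X_2$ modulo $\mathcal{H}^{2}$, follows by running the same expansion in the other direction and is a routine substitution. The main obstacle is purely the bookkeeping of the matrix inverse and of the $Y_3$-order truncations; no conceptual difficulty arises beyond the template already established in the analogous $C_1$-case proposition immediately above.
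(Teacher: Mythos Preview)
Your proof is correct and follows precisely the template the paper establishes for the analogous $C_1$-case propositions; in fact the paper states this particular proposition without proof, so your computation fills in exactly the details that are implicitly left to the reader. The only cosmetic difference is notation (the paper writes $b$ where you write $\gamma$ for the $(2,1)$ entry of the generic element of $\operatorname{Stab}_G(C_3+O(\mathcal{H}))$), and your explicit tracking of the $Y_3$-linear coefficients is arguably cleaner than the reduction-by-multiplication trick used in the paper's $C_1$ proofs.
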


For the case of $c_6=1$, $c_7=0$ we obtain the following proposition:
\begin{Proposition}
The stabilizer of the form
\begin{gather*}
{\mathcal{F}}^{(3)}(X_1,X_2)+\mathcal{H}{\mathcal{F}}^{(2)}(X_1,X_3)+O\big(\mathcal{H}^2\big)= C_3(X_1,X_2)+\mathcal{H}X^2_2+O\big(\mathcal{H}^2\big)
\end{gather*}
is given by
\begin{gather*}
\operatorname{Stab}_{G}\big(C_3(X_1,X_2)+\mathcal{H}X^2_2+O\big(\mathcal{H}^2\big)\big)=\left\{\left(\begin{matrix}
d^2 &0&0\\
0 & d &b\\
0&0&d^{4}
\end{matrix}\right)|\, b \in \mathbb{C}, \,d \in \mathbb{C}^* \right\}.
\end{gather*}
\end{Proposition}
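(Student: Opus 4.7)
The plan is to compute the stabilizer in two steps: first reduce to the ambient stabilizer computed in the previous proposition, then impose the extra condition that the $\mathcal{H}X_2^2$ term is preserved.

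First I would observe the obvious containment
\[
\operatorname{Stab}_{G}\big(C_3+\mathcal{H}X_2^2+O(\mathcal{H}^2)\big)\subseteq \operatorname{Stab}_{G}\big(C_3+O(\mathcal{H}^2)\big),
\]
since any element that preserves the full expression modulo $\mathcal{H}^2$ must in particular preserve the $\mathcal{H}$-independent part $C_3(X_1,X_2)=X_1^3$. By the preceding proposition, every matrix in the larger stabilizer has the shape
$M=\left(\begin{smallmatrix} d^2 & 0 & 0\\ \gamma & d & b\\ 0 & 0 & c\end{smallmatrix}\right)$
with $b,\gamma\in\mathbb{C}$ and $c,d\in\mathbb{C}^*$. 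So I only need to determine which of these further preserve the $\mathcal{H}X_2^2$ coefficient modulo $\mathcal{H}^2$.

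Second, I would apply the transformation rule $\widetilde{\mathcal{F}}(\widetilde X)=\det(A_2)^2\,\mathcal{F}(A^{-1}\widetilde X)$ with $A=M$. Computing $A^{-1}$ explicitly gives
\[
A^{-1}\widetilde X=\bigl(\widetilde X_1/d^2,\; -\gamma\widetilde X_1/d^3+\widetilde X_2/d-b\widetilde X_3/(cd),\; \widetilde X_3/c\bigr),
\]
and $\det(A_2)^2=d^6$. Substituting into $\mathcal{F}=X_1^3+X_3X_2^2+O(X_3^2)$ and expanding modulo $\widetilde X_3^2$ yields
\[
\widetilde{\mathcal{F}}(\widetilde X)=\widetilde X_1^3+\frac{\widetilde X_3}{c}\bigl(\gamma^2\widetilde X_1^2-2\gamma d^2\widetilde X_1\widetilde X_2+d^4\widetilde X_2^2\bigr)+O\big(\widetilde X_3^2\big).
\]
Matching coefficients against the target $\widetilde X_1^3+\widetilde X_3\widetilde X_2^2+O(\widetilde X_3^2)$ forces $\gamma^2/c=0$, $-2\gamma d^2/c=0$, and $d^4/c=1$, giving the unique constraints $\gamma=0$ and $c=d^4$. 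This proves the $\subseteq$ inclusion.

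Finally, the reverse inclusion $\supseteq$ is verified by direct substitution of $\gamma=0$, $c=d^4$ into the expression above; the surviving term is $(d^4/c)\widetilde X_3\widetilde X_2^2=\widetilde X_3\widetilde X_2^2$, so the form is indeed preserved. There is no serious obstacle in this argument — it is essentially bookkeeping — but the one place to be careful is keeping the $b\widetilde X_3/(cd)$ contribution in $X_2$ quarantined: when squared, it produces terms of order $\widetilde X_3^2$ and $\widetilde X_3^3$, which lie in $O(\widetilde X_3^2)$ and therefore play no role, so $b$ remains a free parameter, as the statement asserts.
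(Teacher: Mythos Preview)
Your computation is correct and the conclusion matches the paper's (which states this proposition without proof), but there is a gap in your reduction step. You assert
\[
\operatorname{Stab}_G\big(C_3+\mathcal{H}X_2^2+O(\mathcal{H}^2)\big)\subseteq \operatorname{Stab}_G\big(C_3+O(\mathcal{H}^2)\big)
\]
and justify it by saying that such an element ``must in particular preserve the $\mathcal{H}$-independent part $C_3$.'' But preserving the $\mathcal{H}$-independent part only places the element in $\operatorname{Stab}_G\big(C_3+O(\mathcal{H})\big)$, which is the \emph{larger} group
\[
\left\{\begin{pmatrix} d^2 & 0 & v_1\\ \gamma & d & v_2\\ 0 & 0 & c\end{pmatrix} : v_1,v_2,\gamma\in\mathbb{C},\; c,d\in\mathbb{C}^*\right\},
\]
with an extra parameter $v_1$ in the $(1,3)$-slot. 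A matrix with $v_1\neq 0$ sends $C_3=X_1^3$ to something whose $\mathcal{H}$-linear piece contains $-3v_1 c^{-1}X_1^2\mathcal{H}$, and there is no a~priori reason this could not be cancelled by a corresponding change in $g\cdot(\mathcal{H}X_2^2)$. So the inclusion you state does not follow from the reason you give; the two stabilizers $\operatorname{Stab}_G(C_3+O(\mathcal{H}))$ and $\operatorname{Stab}_G(C_3+O(\mathcal{H}^2))$ genuinely differ.

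The fix is immediate: start instead from $\operatorname{Stab}_G\big(C_3+O(\mathcal{H})\big)$, carry the parameter $v_1$ through your expansion, and match the $\mathcal{H}X_1^2$-coefficient. You get $(-3v_1+\gamma^2)/c=0$, and combined with $\gamma=0$ from the $\mathcal{H}X_1X_2$-coefficient this forces $v_1=0$, recovering exactly the form you assumed. After this correction everything else in your argument goes through unchanged.
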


For the case of $c_6=1$, $c_7=1$ we obtain the following proposition:
\begin{Proposition}
The stabilizer of the form
\begin{gather*}
{\mathcal{F}}^{(3)}(X_1,X_2)+\mathcal{H}{\mathcal{F}}^{(2)}(X_1,X_3)= C_3(X_1,X_2)+\mathcal{H}X^2_2+\mathcal{H}X_1X_2+O\big(\mathcal{H}^2\big)
\end{gather*}
is given by
\begin{gather*}
\operatorname{Stab}_{G}\big(C_3(X_1,X_2)+\mathcal{H}X_1X_2+\mathcal{H}X^2_2+O\big(\mathcal{H}^2\big)\big)\\
\qquad{}=\left\{\left(\begin{matrix}
d^2 &0&\frac{d^2}{12}(d^2-1)\\
\frac{1}{2}d(1-d) & d &b\\
0&0&d^4
\end{matrix}\right)| \,b \in \mathbb{C}, \,d \in \mathbb{C}^* \right\}.
\end{gather*}
\end{Proposition}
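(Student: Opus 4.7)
The plan is to follow exactly the template established by the three preceding propositions in this subsection: we are intersecting the known stabilizer of $C_3(X_1,X_2)$ with the additional constraint that the $\mathcal{H}$-linear part $\mathcal{H}X_2^2 + \mathcal{H}X_1X_2$ be preserved. Since any matrix in $\mathrm{Stab}_G(C_3+\mathcal{H}X_2^2+\mathcal{H}X_1X_2+O(\mathcal{H}^2))$ must in particular lie in $\mathrm{Stab}_G(C_3)$, I start from the form
$$A=\begin{pmatrix} d^2 & 0 & v_1\\ b & d & v_2\\ 0 & 0 & c\end{pmatrix}, \qquad d,c\in\mathbb{C}^*,\; b,v_1,v_2\in\mathbb{C},$$
computed earlier for $\mathrm{Stab}_G(C_3+O(\mathcal{H}))$, and determine which of these matrices also preserve the $\mathcal{H}$-linear piece.

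First I would write down $A^{-1}$ explicitly; because $A$ has block-triangular form with $\det A_2=d^3$, the inverse is
$$A^{-1}=\begin{pmatrix} 1/d^2 & 0 & -v_1/(cd^2)\\ -b/d^3 & 1/d & (bv_1-d^2v_2)/(cd^3)\\ 0 & 0 & 1/c\end{pmatrix}.$$
This gives explicit linear expressions $X_i=\sum_j (A^{-1})_{ij}\widetilde X_j$ to substitute into $\mathcal{F}(X_1,X_2,X_3)=X_1^3+X_3X_2^2+X_3X_1X_2+O(X_3^2)$, which I then multiply by $\det(A_2)^2=d^6$ according to equation~\eqref{e8}.

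The key step is to collect the coefficients of the three monomials $\widetilde X_1^2\widetilde X_3$, $\widetilde X_1\widetilde X_2\widetilde X_3$, $\widetilde X_2^2\widetilde X_3$ in the transformed polynomial (monomials of order $\geq 2$ in $\widetilde X_3$ can be ignored since we work modulo $O(\mathcal{H}^2)$). Matching these against the target coefficients $0$, $1$, $1$ yields a triangular system of three equations in the unknowns $c$, $b$, $v_1$: the $\widetilde X_2^2\widetilde X_3$ equation forces $c=d^4$; substituting into the $\widetilde X_1\widetilde X_2\widetilde X_3$ equation forces $b=\tfrac12 d(1-d)$; and finally the $\widetilde X_1^2\widetilde X_3$ equation determines $v_1=\tfrac{d^2(d^2-1)}{12}$. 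Since $v_2$ does not appear in any of the coefficients of $\mathcal{H}$-linear terms (it only contributes to $O(\mathcal{H}^2)$), it remains a free parameter, which I then rename to $b$ to match the statement. Conversely, any matrix of the displayed shape satisfies the three equations, so it stabilizes the given form; this is a direct substitution.

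The main obstacle is purely computational rather than conceptual: one must carry out the cubic substitution carefully and avoid sign errors when solving the triangular system, because the coefficients $c$, $b$, $v_1$ depend nontrivially on $d$ and each one feeds into the next equation. A useful sanity check is to verify that setting $d=1$ recovers the unipotent subgroup $\{b=v_1=0,\ c=1\}$ fixing the cubic and the quadratic piece pointwise, and that the resulting family is one-parameter in $d$ together with one free parameter $v_2$, as dictated by dimension count (the stabilizer of the preceding form $C_3+\mathcal{H}X_2^2+O(\mathcal{H}^2)$ was two-dimensional, and imposing the extra coefficient condition $\mathcal{H}X_1X_2$ removes one dimension, consistent with the parametrization by $(d,b)$ above).
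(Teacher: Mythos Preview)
Your approach is correct and is exactly the computation the paper has in mind: the proposition is stated in the paper without proof, being one of four parallel results in the triple-root subsection that are left to the reader after the template has been established by the two worked-out propositions in the three-distinct-roots case. Your triangular solve (first $c=d^4$ from the $\widetilde X_2^2\widetilde X_3$ coefficient, then $b=\tfrac12 d(1-d)$ from $\widetilde X_1\widetilde X_2\widetilde X_3$, then $v_1=\tfrac{d^2(d^2-1)}{12}$ from $\widetilde X_1^2\widetilde X_3$) checks out line by line.

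One small slip in your closing sanity-check paragraph: the stabilizer of $C_3+\mathcal{H}X_2^2+O(\mathcal{H}^2)$ is already two-dimensional (parameters $d,b$), and the stabilizer you are computing here is \emph{also} two-dimensional (parameters $d$ and the free $v_2$). So ``imposing the extra coefficient condition $\mathcal{H}X_1X_2$ removes one dimension'' is not accurate; the two stabilizers are distinct two-dimensional subgroups of the five-dimensional group $\operatorname{Stab}_G(C_3+O(\mathcal{H}))$, neither contained in the other. The correct dimension count is $5-3=2$, starting from $\operatorname{Stab}_G(C_3+O(\mathcal{H}))$ and imposing the three coefficient equations. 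This does not affect the proof itself, only the heuristic remark.
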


 \subsubsection[$\mathcal{F}^{(3)}(X_1,X_2)=C_3(X_1,X_2)$, $c_6=0$, and $c_7=0$]{$\boldsymbol{\mathcal{F}^{(3)}(X_1,X_2)=C_3(X_1,X_2)}$, $\boldsymbol{c_6=0}$, and $\boldsymbol{c_7=0}$}
Suppose that
\begin{gather*}
\mathcal{R}^2={\mathcal{F}}^{(3)}(X_1,X_2)+\mathcal{H}{\mathcal{F}}^{(2)}(X_1,X_2)+\mathcal{H}^2{\mathcal{F}}^{(1)}(X_1,X_2)+c_{10}\mathcal{H}^3 \\
\hphantom{\mathcal{R}^2}{}= C_3(X_1,X_2)+ { \mathcal{H}}^2\left(c_8 X_1+c_9X_2\right)+c_{10}\mathcal{H}^3.
\end{gather*}

Acting with $A=\left(\begin{matrix}
d^2& 0 &0 \\
\gamma & d &b\\
0&0&c
\end{matrix} \right)^{-1} \in \operatorname{Stab}_{G}(C_3(X_1,X_2) +O(\mathcal{H}^2))$ on $\mathcal{R}^2$ we have
\begin{gather*}
 \mathcal{R}^2=C_3(X_1,X_2)+ { \mathcal{H}}^2 (c_8 X_1+c_9X_2 )+c_{10}\mathcal{H}^3 \\ \nonumber
\hphantom{\mathcal{R}^2}{} \longmapsto C_3(X_1,X_2)+\mathcal{H}^2(c'_8X_1+c'_9X_2)+c'_{10}\mathcal{H}^3,
\end{gather*}
where $c'_8=c^2 (d^{-4}c_8+d^{-6}\gamma c_9 )$, $c'_9=c^2d^{-5} c_9$, $c'_{10}=d^{-6} (c^2bc_9+c^3c_{10} )$. If $c_9=0$ and $c_8\neq 0$ we def\/ine the canonical form to be with $c_8=1$ and $c_{10}=re^{i\theta}$ with $r\geq 0$ and $\theta \in [0,\pi)$. If $c_9=0$ and $c_8=0$ we def\/ine the canonical form to be with $c_{10}\in \{0,1\}$. If $c_9\neq 0$ then the canonical form is given by $\mathcal{R}^2=C_3(X_1,X_2)+ { \mathcal{H}}^2 X_2$.

 \subsubsection[$\mathcal{F}^{(3)}(X_1,X_2)=C_3(X_1,X_2)$, $c_6=0$, and $c_7=1$]{$\boldsymbol{\mathcal{F}^{(3)}(X_1,X_2)=C_3(X_1,X_2)}$, $\boldsymbol{c_6=0}$, and $\boldsymbol{c_7=1}$}
Suppose that
\begin{gather*}
\mathcal{R}^2={\mathcal{F}}^{(3)}(X_1,X_2)+\mathcal{H}{\mathcal{F}}^{(2)}(X_1,X_2)+\mathcal{H}^2{\mathcal{F}}^{(1)}(X_1,X_2)+c_{10}\mathcal{H}^3 \\ \hphantom{\mathcal{R}^2}{}= C_3(X_1,X_2)+\mathcal{H}X_1X_2+ { \mathcal{H}}^2 (c_8 X_1+c_9X_2 )+c_{10}\mathcal{H}^3.
\end{gather*}

Acting with $A=\left(\begin{matrix}
d^2 &0&a\\
-\frac{3a}{d} & d &b\\
0&0&d^3
\end{matrix}\right)^{-1} \in \operatorname{Stab}_{G}(C_3(X_1,X_2)+\mathcal{H}X_1X_2 +O(\mathcal{H}^2))$ on $\mathcal{R}^2$ we have
\begin{gather*}
\mathcal{R}^2=C_3(X_1,X_2)+\mathcal{H}X_1X_2+ { \mathcal{H}}^2 (c_8 X_1+c_9X_2 )+c_{10}\mathcal{H}^3 \\
 \hphantom{\mathcal{R}^2}{} \longmapsto C_3(X_1,X_2)+\mathcal{H}X_1X_2+\mathcal{H}^2(c'_8X_1+c'_9X_2)+c'_{10}\mathcal{H}^3,
\end{gather*}
where $c'_8= \frac{b}{d}+d^2c_8-3\frac{a}{d}c_9$, $c'_9=\frac{a}{d^2}+c_9d$, $c'_{10}=\frac{a^3}{d^6}+\frac{ab}{d^3}+ac_8+bc_9+d^3c_{10}$. Hence we can always arrange that $c_8=c_9=0$ and $c_{10}\in \{0,1\}$ and this will be the canonical form in this case.

 \subsubsection[$\mathcal{F}^{(3)}(X_1,X_2)=C_3(X_1,X_2)$, $c_6=1$, and $c_7=0$]{$\boldsymbol{\mathcal{F}^{(3)}(X_1,X_2)=C_3(X_1,X_2)}$, $\boldsymbol{c_6=1}$, and $\boldsymbol{c_7=0}$}
Suppose that
\begin{gather*}
\mathcal{R}^2={\mathcal{F}}^{(3)}(X_1,X_2)+\mathcal{H}{\mathcal{F}}^{(2)}(X_1,X_2)+\mathcal{H}^2{\mathcal{F}}^{(1)}(X_1,X_2)+c_{10}\mathcal{H}^3\\ \hphantom{\mathcal{R}^2}{}=C_3(X_1,X_2)+\mathcal{H}X_2^2+ { \mathcal{H}}^2 (c_8 X_1+c_9X_2 )+c_{10}\mathcal{H}^3.
\end{gather*}

Acting with $A=\left(\begin{matrix}
d^2 &0&0\\
0 & d &b\\
0&0&d^4
\end{matrix}\right)^{-1} \in \operatorname{Stab}_{G}(C_3(X_1,X_2)+\mathcal{H}X^2_2 +O(\mathcal{H}^2))$ on $\mathcal{R}^2$ we have
\begin{gather*}
 \mathcal{R}^2=C_3(X_1,X_2)+\mathcal{H}X^2_2+ { \mathcal{H}}^2 (c_8 X_1+c_9X_2 )+c_{10}\mathcal{H}^3 \\
\hphantom{\mathcal{R}^2}{} \longmapsto C_3(X_1,X_2)+\mathcal{H}X^2_2+\mathcal{H}^2(c'_8X_1+c'_9X_2)+c'_{10}\mathcal{H}^3,
\end{gather*}
where $c'_8= d^4c_8$, $c'_9=2\frac{b}{d}+c_9d^3$, $c'_{10}=\frac{b^2}{d^2}+d^2bc_9+d^6c_{10}$. Hence we can always arrange that $c_9=0$ and either $c_8=0$ and $c_{10}\in \{0,1\}$ or $c_8=1$ and $c_{10}=re^{i\theta}$ with $r\geq 0$ and $\theta\in [0,\frac{\pi}{2})$.

 \subsubsection[$\mathcal{F}^{(3)}(X_1,X_2)=C_3(X_1,X_2)$, $c_6=1$, and $c_7=1$]{$\boldsymbol{\mathcal{F}^{(3)}(X_1,X_2)=C_3(X_1,X_2)}$, $\boldsymbol{c_6=1}$, and $\boldsymbol{c_7=1}$}
Suppose that
\begin{gather*}
\mathcal{R}^2={\mathcal{F}}^{(3)}(X_1,X_2)+\mathcal{H}{\mathcal{F}}^{(2)}(X_1,X_2)+\mathcal{H}^2{\mathcal{F}}^{(1)}(X_1,X_2)+c_{10}\mathcal{H}^3 \\ \hphantom{\mathcal{R}^2}{}=C_3(X_1,X_2)+\mathcal{H}X_1X_2+\mathcal{H}X_2^2+ { \mathcal{H}}^2 (c_8 X_1+c_9X_2 )+c_{10}\mathcal{H}^3.
\end{gather*}

Acting with $A=\left(\begin{matrix}
d^2 &0&\frac{d^2}{12}(d^2-1)\\
\frac{1}{2}d(1-d) & d &b\\
0&0&d^4
\end{matrix}\right)^{-1} \in \operatorname{Stab}_{G}(C_3(X_1,X_2)+\mathcal{H}X_1X_2+\mathcal{H}X^2_2 +O(\mathcal{H}^2))$ on $\mathcal{R}^2$ we have
\begin{gather*}
\mathcal{R}^2=C_3(X_1,X_2)+\mathcal{H}X_1X_2+\mathcal{H}X^2_2+ { \mathcal{H}}^2 (c_8 X_1+c_9X_2 )+c_{10}\mathcal{H}^3 \\
\hphantom{\mathcal{R}^2}{} \longmapsto C_3(X_1,X_2)+\mathcal{H}X_1X_2+\mathcal{H}X^2_2+\mathcal{H}^2(c'_8X_1+c'_9X_2)+c'_{10}\mathcal{H}^3,
\end{gather*}
where $c'_8= \frac{b}{d} -\frac{1}{48}(d^2-1)(d-1)^2+d^4c_8+\frac{1}{2}d^3(1-d)c_9$, $c'_9=\frac{1}{12}d^3(d^2-1)+2\frac{b}{d}+c_9d^3$, $c'_{10}=\frac{1}{12^3}(d^2-1)^3+\frac{1}{12}(d^2-1)b+\frac{b^2}{d^2}+\frac{1}{12}d^4(d^2-1)c_8+ d^2bc_9+d^6c_{10}$. Hence we can assume that $c_9=c_8=0$ and the canonical form is given by
 \begin{gather*}
\mathcal{R}^2= X_1^3+\mathcal{H}X_1X_2+\mathcal{H}X^2_2+c_{10}\mathcal{H}^3
\end{gather*}
with $c_{10}\in \mathbb{C}$.

\subsection[Fourth case: $\widetilde{\mathcal{F}}^{(3)}=0$]{Fourth case: $\boldsymbol{\widetilde{\mathcal{F}}^{(3)}=0}$}
A similar (but simpler) calculation to the one that was done in the previous section leads to the possibilities for the canonical forms for $\mathcal{F}\in \mathbb{C}^{[3]}[x_1,x_2,x_3]$ with a~vanishing~${\mathcal{F}}^{(3)}$. For example it easy to show the following lemma.
\begin{Lemma}
Given $\mathcal{F}\in \mathbb{C}^{[3]}[x_1,x_2,x_3]$ with a vanishing~${\mathcal{F}}^{(3)}$ we can find an explicit matrix $A\in G$ such that the ${\mathcal{F}}^{(2)}$ part of $A\cdot \mathcal{F} $ is equal to exactly one of the following three cases: $X_1^2$, ${X}_1 X_2$, $0$.
\end{Lemma}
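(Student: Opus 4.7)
The plan is to reduce the statement to the classification of binary quadratic forms in $X_1,X_2$ under a natural action of $\mathrm{GL}(2,\mathbb{C}) \times \mathbb{C}^*$, completely parallel to the reduction that produced the four canonical cases $0,C_1,C_2,C_3$ for $\mathcal{F}^{(3)}$ in Lemma~\ref{prop2}.

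The first step is to determine how $\mathcal{F}^{(2)}$ transforms under a general $A \in G$. Writing $A$ in block form with upper-left block $A_2$ and lower-right entry $A_{3,3}$, and substituting $X_3 = A_{3,3}^{-1}\widetilde{X}_3$ together with $X_i = (A_2^{-1})_{ij}\widetilde{X}_j + w_i\widetilde{X}_3$ for $i=1,2$ (where $w$ is determined by the off-diagonal block of $A^{-1}$) into the expansion $\mathcal{F} = X_3\mathcal{F}^{(2)} + X_3^2\mathcal{F}^{(1)} + X_3^3\mathcal{F}^{(0)}$, one sees that the $X_3^2$ and $X_3^3$ terms contribute only to $\widetilde{X}_3^k$ for $k\ge 2$. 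Hence the coefficient of $\widetilde{X}_3^1$ in $A\cdot\mathcal{F}$ comes solely from the $X_3\mathcal{F}^{(2)}$ piece and reads
\[
\widetilde{\mathcal{F}}^{(2)}(\widetilde{X}_1,\widetilde{X}_2) = \det(A_2)^2\, A_{3,3}^{-1}\,\mathcal{F}^{(2)}\bigl(A_2^{-1}(\widetilde{X}_1,\widetilde{X}_2)\bigr).
\]
In particular the entries $A_{1,3},A_{2,3}$ play no role for this component, and the vanishing $\widetilde{\mathcal{F}}^{(3)}=0$ is automatically preserved.

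The second step is the orbit analysis. Since $A_2\in\mathrm{GL}(2,\mathbb{C})$ and $A_{3,3}\in\mathbb{C}^*$ can be chosen independently, the induced action on $\mathcal{F}^{(2)}$, viewed as a binary quadratic form over $\mathbb{C}$, is by an arbitrary invertible linear substitution followed by an arbitrary nonzero rescaling. Classical linear algebra classifies such forms by rank: rank $0$ is the zero form; rank $1$ means $\mathcal{F}^{(2)}=(\alpha X_1+\beta X_2)^2$, which after sending $\alpha X_1+\beta X_2$ to $X_1$ and rescaling becomes $X_1^2$; rank $2$ means $\mathcal{F}^{(2)}=L_1L_2$ for two linearly independent linear forms $L_1,L_2$, which under $\widetilde{X}_1=L_1,\widetilde{X}_2=L_2$ combined with an appropriate rescaling becomes $X_1X_2$. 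These three orbits are manifestly distinct since rank is an invariant of the action. The explicit matrix $A \in G$ in each case is then assembled by taking $A_{1,3}=A_{2,3}=0$, letting $A_2$ implement the linear change of variables described above, and choosing $A_{3,3}$ to absorb the overall scaling factor $\det(A_2)^2$.

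There is no serious obstacle here beyond careful bookkeeping; the algebraic content is the well-known classification of binary quadratic forms over an algebraically closed field, and the only subtlety is checking that higher-order pieces of $\mathcal{F}$ cannot contaminate the $\widetilde{X}_3^1$ coefficient, which was handled in the first step. As in the preceding cases, subsequent normalization steps for $\mathcal{F}^{(1)}$ and $\mathcal{F}^{(0)}$ must be performed inside the stabilizer of the chosen canonical form for $\mathcal{F}^{(2)}$, but that analysis lies outside the present lemma.
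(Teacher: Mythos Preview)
Your proof is correct and follows essentially the same approach the paper intends: the paper states only that ``a similar (but simpler) calculation to the one that was done in the previous section'' establishes this lemma, leaving the details implicit. Your argument makes those details explicit---computing the transformation law $\widetilde{\mathcal{F}}^{(2)}=\det(A_2)^2 A_{3,3}^{-1}\,\mathcal{F}^{(2)}\circ A_2^{-1}$ and then invoking the rank classification of binary quadratic forms over $\mathbb{C}$---which is exactly the parallel of the cubic-form analysis behind Lemma~\ref{prop2}.
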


\begin{table}[t]\footnotesize\centering
\caption{List of canonical forms of $\mathcal{R}^2$ for the nondegenerate free quadratic algebras.}\label{table1}\vspace{1mm}

\begin{tabular}{|l|l|l|}
\hline \multicolumn{2}{|c|}{Canonical forms of $\mathcal{R}^2$ for the nondegenerate free quadratic algebras\tsep{2pt}} &
\\ \hline& $\mathcal{R}^2$\tsep{2pt} & domain of parameters
\\ \hline 1a& $X_1X_2(X_1+X_2)+c_8X_1\mathcal{H}^2+c_9X_2\mathcal{H}^2+\mathcal{H}^3$\tsep{2pt} & $c_8,c_9\in \mathbb{C}$, see remark below
\\ \hline 1b& $X_1X_2(X_1+X_2)+X_1\mathcal{H}^2+c_9X_2\mathcal{H}^2$\tsep{2pt} & $c_9\in \mathbb{C}$, see remark below
\\ \hline 1c& $X_1X_2(X_1+X_2)$\tsep{2pt} & \\ \hline
1d& $X_1X_2(X_1+X_2)+\mathcal{H}X_1X_2+c_8X_1\mathcal{H}^2+c_9X_2\mathcal{H}^2+c_{10}\mathcal{H}^3$\tsep{2pt} & $c_8,c_9,c_{10}\in \mathbb{C}$\\ \hline
2a&$X_1^2X_2+X_1\mathcal{H}^2+X_2\mathcal{H}^2+c_{10}\mathcal{H}^3$\tsep{2pt} & $c_{10}\in \mathbb{C}$\\ \hline
2b&$X_1^2X_2+c_9X_2\mathcal{H}^2+c_{10}\mathcal{H}^3$\tsep{2pt} & $c_9,c_{10}\in \{0,1\}$\\ \hline
2c&$X_1^2X_2+\mathcal{H}X_2^2 +X_1\mathcal{H}^2+c_9X_2\mathcal{H}^2+c_{10}\mathcal{H}^3$\tsep{2pt} & $c_9,c_{10}\in \mathbb{C}$\\ \hline
2d&$X_1^2X_2+\mathcal{H}X_2^2 +X_2\mathcal{H}^2+c_{10}\mathcal{H}^3$\tsep{2pt} & $c_{10}\in \mathbb{C}$\\ \hline
2e&$X_1^2X_2+\mathcal{H}X_2^2 +c_{10}\mathcal{H}^3$\tsep{2pt} & $c_{10}\in \{0,1\}$\\ \hline
3a&$X_1^3+X_1\mathcal{H}^2 +c_{10}\mathcal{H}^3$\tsep{2pt} & $c_{10}\in \mathbb{C}$\\ \hline
3b&$X_1^3+\mathcal{H}^3$\tsep{2pt} &\\ \hline
3c&$X_1^3+X_2\mathcal{H}^2 $\tsep{2pt} & \\ \hline
3d&$X_1^3+\mathcal{H}X_1X_2 +c_{10}\mathcal{H}^3$\tsep{2pt} & $c_{10}\in \{0,1\}$\\ \hline
3e&$X_1^3+\mathcal{H}X_2^2 +c_{10}\mathcal{H}^3$\tsep{2pt} & $c_{10}\in \{0,1\}$\\ \hline
3f&$X_1^3+\mathcal{H}X_2^2 +X_1\mathcal{H}^2+re^{i\theta}\mathcal{H}^3$\tsep{2pt} & $r\geq 0$, $\theta\in [0,\frac{\pi}{2})$\bsep{2pt}\\ \hline
3g&$X_1^3+\mathcal{H}X_1X_2+\mathcal{H}X_2^2 +c_{10}\mathcal{H}^3$\tsep{2pt} & $c_{10}\in \mathbb{C}$\\ \hline
4a&$\mathcal{H}X_1^2+\mathcal{H}^2X_2 $\tsep{2pt}&\\ \hline
4b&$\mathcal{H}X_1^2+\mathcal{H}^2X_1+c_{10}\mathcal{H}^3 $\tsep{2pt}&$c_{10}\in\mathbb{C}$\\ \hline
4c&$\mathcal{H}X_1^2+c_{10}\mathcal{H}^3 $\tsep{2pt}&$c_{10}\in\{0,1\}$\\ \hline
4d&$\mathcal{H}X_1X_2+\mathcal{H}^2(X_1+X_2)+c_{10}\mathcal{H}^3 $\tsep{2pt}&$c_{10}\in\mathbb{C}$\\ \hline
4e&$\mathcal{H}X_1X_2+c_8\mathcal{H}^2X_1+c_{10}\mathcal{H}^3 $\tsep{2pt}&$c_8,c_{10}\in\{0,1\}$\\ \hline
4f&$\mathcal{H}^2X_1 $\tsep{2pt}&\\ \hline
4g&$c_{10}\mathcal{H}^3 $\tsep{2pt}&$c_{10}\in \{0,1\}$\\ \hline
\end{tabular}
\end{table}

\begin{Remark}
For each value of the parameter in the f\/irst two lines of Table~\ref{table1} if
 \begin{gather*}
c'_8=c^2(\alpha c_8+ \gamma c_9),\qquad c'_9=c^2(\beta c_8+\delta c_9),\qquad c'_{10}=c^3 c_{10},
\end{gather*}
for $c\in \mathbb{C}^*$ and
\begin{gather*}
\left(\begin{matrix}
\alpha &\beta\\
\gamma & \delta
\end{matrix}\right)\in \Omega(C_1)=\left\{\left(\begin{matrix}
0 & 1\\
1 & 0
\end{matrix}\right), \left(\begin{matrix}
0 & 1\\
-1 & -1
\end{matrix}\right),\left(\begin{matrix}
1 & 0\\
0 & 1
\end{matrix}\right) \right\} \\ \nonumber
\hphantom{\left(\begin{matrix}
\alpha &\beta\\
\gamma & \delta
\end{matrix}\right)\in \Omega(C_1)=}{}
 \coprod\left\{\left(\begin{matrix}
-1 & -1\\
0 & -1
\end{matrix}\right), \left(\begin{matrix}
1 & 0\\
-1 & -1
\end{matrix}\right),\left(\begin{matrix}
-1 & -1\\
1 & 0
\end{matrix}\right) \right\}
\end{gather*}
then the system with parameters $c_8$, $c_9$, $c_{10}$ isomorphic to the one with $c'_8$, $c'_9$, $c'_{10}$.
\end{Remark}

\subsection{Comparison of geometric and abstract nondegenerate quadratic algebras}\label{Comparison}

There is a close relationship between the canonical forms of abstract quadratic algebras and St\"ackel equivalence classes of nondegenerate superintegrable systems. To demonstrate this we treat one example in detail. The superintegrable system~$S9$, with nondegenerate potential, can be def\/ined by
\begin{gather*}{\mathcal R}^2= {\mathcal L}_1^2{\mathcal L}_2+{\mathcal L}_1{\mathcal L}_2^2+{\mathcal L}_1{\mathcal L}_2({\mathcal H}-a_4)
-a_2({\mathcal H}-a_4)^2 -2a_2{\mathcal L}_1({\mathcal H}-a_4)\\
\hphantom{{\mathcal R}^2=}{}-2a_2{\mathcal L}_2({\mathcal H}-a_4) -(a_3+a_2){\mathcal L}_1^2 -(a_3+3a_2+a_1){\mathcal L}_1{\mathcal L}_2
 -(a_2+a_1){\mathcal L}_2^2\\
\hphantom{{\mathcal R}^2=}{} +\big(2a_2a_3+2a_2^2+2a_1a_2\big)({\mathcal H}-a_4) +2\big(a_2^2+a_2a_3+a_1a_2\big){\mathcal L}_1\\
\hphantom{{\mathcal R}^2=}{}+2\big(a_2^2+a_2a_3+a_1a_2\big){\mathcal L}_2 +2a_1a_2a_3-2a_1a_2^2-2a_2^2a_3-a_2a_3^2-a_2a_1^2-a_2^3,
 \end{gather*}
where the $a_j$ are the parameters in the potential. To perform a general St\"ackel transform of this system with nonsingular transform matrix $C=(c_{jk})$: 1) we set $a_j=\sum\limits_{k=1}^4 c_{jk}b_k$, $k=1,\dots,4$ where the~$b_k$ are the new parameters, 2)~we make the replacements ${\mathcal H}\to -b_4$, $b_4\to -{\mathcal H}$ and 3)~we then set all parameters $b_j=0$ to determine the free quadratic algebra. The result is
\begin{gather*}
{\mathcal R}^2=c_{24}\big(c_{14}^2+2c_{14}c_{24}-2c_{14}c_{34}+2c_{14}c_{44}+c_{24}^2+2c_{24}c_{34}+2c_{24}c_{44}+c_{34}^2+2c_{34}c_{44} \\
\hphantom{{\mathcal R}^2=}{}+c_{44}^2\big){\mathcal H}^3 +(2c_{24}(c_{14}+c_{24}+c_{34}+c_{44}){\mathcal L}_1+2c_{24}(c_{14}+c_{24}+c_{34}+c_{44}){\mathcal L}_2){\mathcal H}^2\\
\hphantom{{\mathcal R}^2=}{} +(c_{24}+c_{34}){\mathcal L}_1^2{\mathcal H} +(c_{14}+c_{24}){\mathcal L}_2^2{\mathcal H}+(c_{14}+3c_{24}+c_{34}+c_{44}){\mathcal L}_2{\mathcal L}_1{\mathcal H} \\
\hphantom{{\mathcal R}^2=}{} +{\mathcal L}_1^2{\mathcal L}_2+{\mathcal L}_1{\mathcal L}_2^2.
 \end{gather*}
We put this in canonical form by making the choices ${\mathcal L}_1=X_1+(c_{24}+c_{14}){\mathcal H}$, ${\mathcal L}_2=X_2+(c_{34}+c_{24}){\mathcal H}$. The f\/inal result is
\begin{gather*}
 [1111]\colon \quad {\mathcal R}^2= X_1^2X_2+X_1X_2^2 +A_1X_1{\mathcal H}^2+A_2X_2{\mathcal H}^2+A_3X_1X_2{\mathcal H} +A_4{\mathcal H}^3, \end{gather*}
 where
\begin{alignat*}{3}
& A_1=(c_{24}-c_{34})(c_{14}+c_{44}),\qquad && A_2=(c_{34}+c_{44})(c_{24}-c_{14}),& \\
& A_3=-c_{14}-c_{24}-c_{34}+c_{44},\qquad && A_4=(c_{14}-c_{24}+c_{34}+c_{44})(c_{14}c_{34}+c_{24}c_{44}).&
\end{alignat*}
The \looseness=1 possible canonical forms in Table 1 associated with the equivalence class $[1111]$ depend on the possible choices of $c_{ij}$ with $\det C\ne 0$. The possible canonical forms are $1a$, $1b$, $1d$ all cases.

The superintegrable system $E1$, with nondegenerate potential, can be def\/ined by
\begin{gather*}
 {\mathcal R}^2={\mathcal L}_1{\mathcal L}_2({\mathcal H}-a_4)+{\mathcal L}_2^2{\mathcal L}_1-a_3({\mathcal H}-a_4)^2
-2a_3{\mathcal L}_2({\mathcal H}-a_4)\\
\hphantom{{\mathcal R}^2=}{} -(a_3+a_2){\mathcal L}_2^2-a_1{\mathcal L}_1^2+4a_1a_2a_3.\end{gather*}
 Going through the same procedure as above, we obtain the equivalence class
 \begin{alignat*}{3}
 & [211]\colon \quad && {\mathcal R}^2=-X_2X_1^2 +\big(2c_{14}c_{24}+2c_{14}c_{34}+\tfrac14 c_{44}^2\big)X_2{\mathcal H}^2+c_{44}(-c_{34}+c_{24})X_1{\mathcal H}^2& \\
 &&& \hphantom{{\mathcal R}^2=}{} +c_{14}X_2^2{\mathcal H} \big({-}2c_{14}c_{24}c_{34}+c_{14}c_{24}^2+c_{14}c_{34}^2+\tfrac12 c_{44}^2c_{24}+\tfrac12 c_{34}c_{44}^2\big){\mathcal H}^3.& \end{alignat*}
 The canonical forms associated with this equivalence class are $2a$, $ 2b$, $2c$, $2d$, $2e$, all cases.

The superintegrable system $E8$, with nondegenerate potential, can be def\/ined by
\begin{gather*} {\mathcal R}^2={\mathcal L}_2^2{\mathcal L}_1-a_2({\mathcal H}-a_4){\mathcal L}_2+4a_1a_3{\mathcal L}_1+a_1({\mathcal H}-a_4)^2-a_3a_2^2.\end{gather*}
 The equivalence class is
 \begin{gather*}
 [22]\colon \quad {\mathcal R}^2= X_1^2X_2 -c_{24}c_{44}X_1{\mathcal H}^2+4c_{14}c_{34}X_2{\mathcal H}^2
 +\big({-}c_{14}c_{44}^2+c_{34}c_{24}^2\big){\mathcal H}^3.\end{gather*}
 The canonical form associated with this equivalence class is $2a$: all cases.

The superintegrable system $E2$ can be def\/ined by
\begin{gather*}{\mathcal R}^2= {\mathcal L}_1^3 +{\mathcal L}_1{\mathcal H}^2-2{\mathcal L}_1^2{\mathcal H}
+(-2a_4{\mathcal L}_1-a_2{\mathcal L}_2){\mathcal H}+2a_4{\mathcal L}_1^2+\big(a_2{\mathcal L}_2+4a_1a_3+a_4^2\big){\mathcal L}_1\\
\hphantom{{\mathcal R}^2=}{}
+4a_1{\mathcal L}_2^2 +a_2a_4{\mathcal L}_2-\tfrac14 a_2^2a_3.\end{gather*}
The equivalence class is
\begin{alignat*}{3}
& [31]\colon \quad && {\mathcal R}^2=X_1^3 +\big(c_{14}X_1X_2-4c_{34}X_2^2+c_{44}X_1^2\big){\mathcal H}+4c_{34}c_{24}X_1{\mathcal H}^2&\\
&&& \hphantom{{\mathcal R}^2=}{} +\tfrac14 c_{24}\big(c_{14}^2+16c_{34}c_{44}\big){\mathcal H}^3.&
\end{alignat*}
The canonical forms associated with this equivalence class are $3d$: all cases, $3e$: $c_{10}=0$, $3f$: all cases, $3g$: $c_{10}=0$.

The superintegrable system $E10$ can be def\/ined by
\begin{gather*}{\mathcal R}^2= {\mathcal L}_1^3 +2a_1{\mathcal L}_1^2 -a_3{\mathcal L}_1{\mathcal L}_2+ a_3({\mathcal H}-a_4)^2+2a_2 {\mathcal L}_1({\mathcal H}-a_4)\\
\hphantom{{\mathcal R}^2=}{}
 +2a_1a_2({\mathcal H}-a_4)+a_1^2{\mathcal L}_1+a_2^2{\mathcal L}_2.\end{gather*}
 The equivalence class contains
\begin{alignat*}{3}
& [4]\colon \quad&& {\mathcal R}^2=X_1^3+c_{34}X_1X_2{\mathcal H} +\left(c_{24}^2+\frac23 c_{14}c_{34}\right)X_2{\mathcal H}^2&\\
&&& \hphantom{{\mathcal R}^2=}{} +\frac{1}{27}\frac{\big(8 c_{14}^3c_{34}+9c_{14}^2c_{24}^2 +54c_{14}c_{24}c_{34}c_{44} +54c_{24}^3c_{44}-27c_{34}^2c_{44}^2\big)}{c_{34}}{\mathcal H}^3,& \end{alignat*}
if $c_{34}\ne 0$. If $c_{34}=0$, $c_{24}\ne 0$ it contains
\begin{gather*}[4]'\colon \quad {\mathcal R}^2=X_1^3-2 c_{14}^2X_1^2{\mathcal H}+c_{24}^2X_2{\mathcal H}^2+2c_{14}c_{24}c_{44}{\mathcal H}^3,\end{gather*}
and if $c_{34}=c_{24}=0$ it contains
\begin{gather*}[4]''\colon \quad {\mathcal R}^2=X_1^3+ c_{14}^2X_1{\mathcal H}^2-2c_{14}X_1^2{\mathcal H}.\end{gather*}
 The canonical form associated with $[4]$ is $3d$ all cases. The canonical form associated with $[4]'$ is $3c$: all cases, and the canonical form associated with $[4]''$ is $3a$: $c_{10}\ne 0$.

The superintegrable system $E3'$ can be def\/ined by
\begin{gather*} {\mathcal R}^2= -4a_1\big({\mathcal L}_1^2+{\mathcal L}_2^2-{\mathcal L}_2{\mathcal H}\big) -2a_2a_3{\mathcal L}_1
+\big(a_2^2-a_3^2-4a_1a_4\big){\mathcal L}_2 -a_3^2a_4+a_3^2{\mathcal H}.\end{gather*}

The canonical form is
\begin{gather*}
[0]\colon \quad {\mathcal R}^2=4c_{14}\big(X_1^2+X_2^2\big){\mathcal H} -
\frac{\big(4c_{14}c_{44}-c_{24}^2-c_{34}^2\big)^2}{16c_{14}}{\mathcal H}^3,\end{gather*}
if $c_{14}\ne 0$; if $c_{14}=0$ it is
\begin{gather*} [0]'\colon \quad {\mathcal R}^2=-2c_{24}c_{34}X_1{\mathcal H}^2+\big(c_{24}^2-c_{34}^2\big)X_2{\mathcal H}^2+c_{34}^2c_{44}{\mathcal H}^3.\end{gather*}
The canonical forms associated with $[0]$ are $4d$: all cases, $4e$: all cases, and the canonical forms associated with $[0]'$ are $ 4f$: all cases.

 {\bf Heisenberg systems.} In addition there are systems that can be obtained from the geometric systems above by contractions from ${\mathfrak{so}}(4,\mathbb{C})$ to ${\mathfrak{e}}(3,\mathbb{C})$. These are not B\^ocher contractions and the contracted systems are not superintegrable, because the Hamiltonians become singular. However, they do form quadratic algebras and many have the interpretation of time-dependent Schr\"odinger equations in 2D spacetime, so we also consider them geometrical. Some of these were classif\/ied in \cite{KM2014} where they were called Heisenberg systems since they appeared in quadratic algebras formed from 2nd order elements in the Heisenberg algebra with generators ${\cal M}_1=p_x$, ${\cal M}_2=xp_y$, ${\cal E}=p_y$, where ${\cal E}^2={\cal H}$. The systems are all of type~4. We will devote a future paper to their study. The ones classif\/ied so far are~$4a$: all cases, $4c$: $c_{10}=0$, $4e$: $c_{10}=0$, $4f$: all cases, $4g$: all cases.

All these results relating geometric systems to abstract systems are summarized in Table~\ref{table2}.

\begin{table}[t]\footnotesize\centering
\caption{Matching of geometric with abstract quadratic algebras.}\label{table2}\vspace{1mm}

\begin{tabular}{|l|l|l|l|}
\hline Class&\multicolumn{3}{c|}{Canonical form}\tsep{1pt}\bsep{1pt}\\ \hline
1& $ a$: all cases & $b$: all cases &$c$: no \tsep{1pt}\bsep{1pt}\\ \hline
1& $d$: all cases&&\tsep{1pt}\bsep{1pt}\\ \hline
2& $a$: all cases& $b$ all cases &$c$: all cases\tsep{1pt}\bsep{1pt}\\ \hline
 2& $d$: all cases & $e$: all cases& \tsep{1pt}\bsep{1pt}\\ \hline
3& $a$: $c_{10}\ne 0$ & $b$: no & $c$: all cases \tsep{1pt}\bsep{1pt}\\ \hline
3&$d$: all cases& $e$: $c_{10}=0$ & $f$: all cases \tsep{1pt}\bsep{1pt}\\ \hline
3& $g$: $c_{10}=0 $ &&\tsep{1pt}\bsep{1pt}\\ \hline
4& $a$: all cases & $b$: no & $c$: $c_{10}=0$ \tsep{1pt}\bsep{1pt}\\ \hline
4&$d$: all cases & $e$: all cases& $f$: all cases \tsep{1pt}\bsep{1pt}\\ \hline
4& $4g$: all cases &&\tsep{1pt}\bsep{1pt}\\ \hline
\end{tabular}
\end{table}

\section[The quadratic algebras of the free 2D second order superintegrable systems]{The quadratic algebras of the free 2D second order\\ superintegrable systems}

In this section we list all canonical forms of the Casimirs of the quadratic algebras of free nondegenerate 2D superintegrable systems on a constant curvature space or a Darboux space. We list the canonical forms arising from superintegrable systems on a constant curvature spaces in Table~\ref{CanND} and those arising from superintegrable systems on a Darboux space in Table~\ref{CanNDD}. In the next section we study contractions between these quadratic algebras.

\begin{table}[h!]\footnotesize\centering
\caption{Canonical forms of the Casimirs of quadratic algebras of free nondegenerate 2D superintegrable systems that lie
inside $\mathcal{U}({\mathfrak{so}}(3,\mathbb {C}))$ and
$\mathcal{U}({\mathfrak{e}}(2,\mathbb {C}))$.}\label{CanND}\vspace{1mm}
\begin{tabular}{|l|l|}
\hline
System & Canonical forms of $\mathcal{R}^2$\tsep{2pt}\bsep{1pt}
\\ \hline
 $\widetilde{E}_{17}$& $\mathcal{L}_1^2 \mathcal{L}_2$ \tsep{2pt}\bsep{1pt}\\ \hline
 $\widetilde{E}_{16}$ & $\mathcal{L}_1^2 \mathcal{L}_2+\mathcal{H}\mathcal{L}_2^2$ \tsep{2pt}\bsep{1pt}\\ \hline
 $\widetilde{E}_{1}$ & $\mathcal{L}_1^2 \mathcal{L}_2+\mathcal{H}^2\mathcal{L}_2$ \tsep{2pt}\bsep{1pt}\\ \hline
 $\widetilde{E}_{8}$ & $\mathcal{L}_1^2 \mathcal{L}_2$ \tsep{2pt}\bsep{1pt}\\ \hline
 $\widetilde{E}_{3}'$ & $0$ \tsep{2pt}\bsep{1pt}\\ \hline
 $\widetilde{E}_{2}$ & $\mathcal{L}_1^3 +\mathcal{H}^2\mathcal{L}_1+\frac{2i}{3\sqrt{3}}\mathcal{H}^3$ \tsep{2pt}\bsep{1pt}\\ \hline
 $\widetilde{E}_{7}$ & $\mathcal{L}_1^2\mathcal{L}_2 $, $\forall\, a$ \tsep{2pt}\bsep{1pt}\\ \hline
 $\widetilde{E}_{9}$ & $\mathcal{L}_1^3 +\mathcal{H}^2\mathcal{L}_1+\frac{2i}{3\sqrt{3}}\mathcal{H}^3$ \tsep{2pt}\bsep{1pt}\\ \hline
 $\widetilde{E}_{11}$ & $\mathcal{H}^2\mathcal{L}_1 $ \tsep{2pt}\bsep{1pt}\\ \hline
 $\widetilde{E}_{10}$ & $\mathcal{L}_1^3$\tsep{2pt}\bsep{1pt}\\ \hline
 $\widetilde{E}_{15}$ & $\mathcal{L}_1^3$ \tsep{2pt}\bsep{1pt}\\ \hline
 $\widetilde{E}_{20}$ & $\mathcal{H}\mathcal{L}_1\mathcal{L}_2$ \tsep{2pt}\bsep{1pt}\\ \hline
 $\widetilde{E}_{19}$ & $\mathcal{L}_1^2 \mathcal{L}_2+\mathcal{H}^2\mathcal{L}_2$ \tsep{2pt}\bsep{1pt}\\ \hline
 $\widetilde{S}_{9}$ & $\mathcal{L}_1 \mathcal{L}_2(\mathcal{L}_1+ \mathcal{L}_2)+\mathcal{H}\mathcal{L}_1 \mathcal{L}_2$ \tsep{2pt}\bsep{1pt}\\ \hline
 $\widetilde{S}_{4}$ & $\mathcal{L}_1^2 \mathcal{L}_2$ \tsep{2pt}\bsep{1pt}\\ \hline
$\widetilde{S}_{7}$ & $\mathcal{L}_1 \mathcal{L}_2(\mathcal{L}_1+ \mathcal{L}_2)+\mathcal{H}\mathcal{L}_1 \mathcal{L}_2
-\frac{1}{4}\mathcal{H}^2\mathcal{L}_1-\frac{1}{4}\mathcal{H}^2\mathcal{L}_2-\frac{1}{4}\mathcal{H}^3$ \tsep{2pt}\bsep{1pt}\\ \hline
$\widetilde{S}_{8}$ & $\mathcal{L}_1 \mathcal{L}_2(\mathcal{L}_1+ \mathcal{L}_2)+\mathcal{H}\mathcal{L}_1 \mathcal{L}_2$ \tsep{2pt}\bsep{1pt}\\ \hline
$\widetilde{S}_{2}$ & $\mathcal{L}_1^2 \mathcal{L}_2$ \tsep{2pt}\bsep{1pt}\\ \hline
$\widetilde{S}_{1}$ & $\mathcal{L}_1^3$ \tsep{2pt}\bsep{1pt}\\ \hline
\end{tabular}
\end{table}

\newpage

\begin{table}[t!]\footnotesize\centering
\caption{Canonical forms of the Casimirs of quadratic algebras of free nondegenerate 2D Darboux superintegrable systems.}\label{CanNDD} \vspace{1mm}
\begin{tabular}{|l|l|}
\hline
System & Canonical forms of $\mathcal{R}^2$\tsep{2pt}\bsep{1pt}
\\ \hline
 $\widetilde{D}1A$, $b=0$& $\mathcal{L}_1^3 +\mathcal{H}\mathcal{L}_1\mathcal{L}_2$ \tsep{2pt}\bsep{1pt}\\ \hline
 $\widetilde{D}1A$, $b\neq 0$& $\mathcal{L}_1^3 +\mathcal{H}\mathcal{L}_1\mathcal{L}_2+\mathcal{H}^3$ \tsep{2pt}\bsep{1pt}\\ \hline
 $\widetilde{D}1B$& $\mathcal{L}_1^3 +\mathcal{H}\mathcal{L}_1\mathcal{L}_2$ \tsep{2pt}\bsep{1pt}\\ \hline
 $\widetilde{D}1C$& $\mathcal{H}^2\mathcal{L}_1$ \tsep{2pt}\bsep{1pt}\\ \hline
 $\widetilde{D}2A$& $\mathcal{L}_1^3 +\mathcal{H}^2\mathcal{L}_1+\frac{2i}{3\sqrt{3}}\mathcal{H}^3$ \tsep{2pt}\bsep{1pt}\\ \hline
 $\widetilde{D}2B$& $\mathcal{L}_1^2\mathcal{L}_2 +\mathcal{H}^2\mathcal{L}_1+\mathcal{H}^2\mathcal{L}_2+i\mathcal{H}^3$ \tsep{2pt}\bsep{1pt}\\ \hline
 $\widetilde{D}2C$& $\mathcal{L}_1^2\mathcal{L}_2 +\mathcal{H}\mathcal{L}_2^2+\mathcal{H}^2\mathcal{L}_2$ \tsep{2pt}\bsep{1pt}\\ \hline
 $\widetilde{D}3A$& $\mathcal{H}\mathcal{L}_1\mathcal{L}_2+\mathcal{H}^3$ \tsep{2pt}\bsep{1pt}\\ \hline
 $\widetilde{D}3B$& $\mathcal{L}_1^2\mathcal{L}_2 +\mathcal{H}\mathcal{L}_2^2+\mathcal{H}^2\mathcal{L}_2$ \tsep{2pt}\bsep{1pt}\\ \hline
 $\widetilde{D}3C$& $\mathcal{L}_1^2\mathcal{L}_2 +\mathcal{H}\mathcal{L}_2^2+\mathcal{H}^2\mathcal{L}_2$ \tsep{2pt}\bsep{1pt}\\ \hline
 $\widetilde{D}3D$& $\mathcal{L}_1^2\mathcal{L}_2+\mathcal{H}\mathcal{L}_1^2+\mathcal{H}\mathcal{L}_2^2+i3\sqrt{2}\mathcal{H}^3$ \tsep{2pt}\bsep{1pt}\\ \hline
 $\widetilde{D}4A$& $\mathcal{L}_1^2\mathcal{L}_2$ \tsep{2pt}\bsep{1pt}\\ \hline
 $\widetilde{D}4(b)B$, $b\neq0$& $\mathcal{L}_1\mathcal{L}_2(\mathcal{L}_1+\mathcal{L}_2)
 +\mathcal{H}\mathcal{L}_1\mathcal{L}_2+\frac{b^2-4}{4b^2}\mathcal{H}^2\mathcal{L}_1$ \tsep{2pt}\bsep{1pt}\\ \hline
 $\widetilde{D}4(b)B$, $b=0$& $\mathcal{L}_1\mathcal{L}_2(\mathcal{L}_1+\mathcal{L}_2)+\mathcal{H}^2\mathcal{L}_1$ \tsep{2pt}\bsep{1pt}\\ \hline
 $\widetilde{D}4(b)C$, $b\neq0$& $\mathcal{L}_1\mathcal{L}_2(\mathcal{L}_1+\mathcal{L}_2)
 +\mathcal{H}\mathcal{L}_1\mathcal{L}_2+\frac{1}{b^2}\mathcal{H}^2\mathcal{L}_1$ \tsep{2pt}\bsep{1pt}\\ \hline
 $\widetilde{D}4(b)C$, $b=0$& $\mathcal{L}_1\mathcal{L}_2(\mathcal{L}_1+\mathcal{L}_2)+\mathcal{H}^2\mathcal{L}_1$ \tsep{2pt}\bsep{1pt}\\ \hline
 \end{tabular}
\end{table}

\section[Abstract contractions of nondegenerate quadratic algebras arising from 2D second order superintegrable systems on constant curvature spaces and Darboux spaces]{Abstract contractions of nondegenerate quadratic algebras\\ arising from 2D second order superintegrable systems\\ on constant curvature spaces and Darboux spaces}

We f\/irst recall the def\/inition of contraction of quadratic algebras.
\begin{Definition}
Let $\mathcal{A}$ and $\mathcal{A}_0$ be quadratic algebras with generating sets $\{\mathcal{H},\mathcal{L}_1,\mathcal{L}_2 \}$ and $\{\mathcal{H}^0,\mathcal{L}_1^0,\mathcal{L}_2^0 \}$ respectively, satisfying the conditions of Def\/inition~\ref{def5}. Let $\mathcal{F}(\mathcal{H}, \mathcal{L}_1, \mathcal{L}_2)$ be the realization of the Casimir of $\mathcal{A}$ in the generating set $\{\mathcal{H}, \mathcal{L}_1, \mathcal{L}_2 \}$ and similarly $\mathcal{F}^0(\mathcal{H}^0,\mathcal{L}_1^0,\mathcal{L}_2^0)$ the Casimir of $\mathcal{A}^0$ in the generating set $\{\mathcal{H}^0, \mathcal{L}^0_1, \mathcal{L}^0_2 \}$. We say that $\mathcal{A}_0$ is a contraction of $\mathcal{A}$ if there is a continuous curve
\begin{gather*}
(0,1]\longrightarrow G,\qquad
 \epsilon \longmapsto A(\epsilon)=\left(\begin{matrix}
A_{1,1}(\epsilon) & A_{1,2}(\epsilon)&A_{1,3}(\epsilon) \\
A_{2,1}(\epsilon)&A_{2,2}(\epsilon) &A_{2,3}(\epsilon) \\
0 &0 &A_{3,3}(\epsilon)
\end{matrix}\right)
\end{gather*}
such that
\begin{gather*}
\lim_{\epsilon \longrightarrow 0^+}A(\epsilon)\cdot F(X_1,X_2,X_3)=F^0(X_1,X_2,X_3).
\end{gather*}
Note that the action of $G$ is def\/ined in (\ref{e8}).
\end{Definition}

Note that if $\mathcal{A}_0$ is a contraction of $\mathcal{A}$ then $\mathcal{A}_0$ is in the closure of the orbit of $G$ that contains~$\mathcal{A}$.

\subsection{Contractions of quadratic algebras}
In this section we study contractions between the quadratic algebras that arise from free nondegenerate 2D second order superintegrable system on a~constant curvature space or a Darboux space. As we shall see below there are essentially 18 relevant quadratic algebras for classif\/ication purposes. For any two such quadratic algebras one can ask weather there is a contraction from one to the other. In principal there are $324=18^2$ cases to consider. We have studied most of these cases but our results do not give a complete classif\/ication. We discus our results in more details below. We shall give several contractions explicitly and write all those contractions that we were able to f\/ind in a diagram. At the end of this section we shall compare abstract contractions with B\^ocher contractions.

\subsubsection{The relevant quadratic algebras}\label{sec721}
We f\/irst note that some quadratic algebras of dif\/ferent superintegrable systems coincide:
\begin{enumerate}\itemsep=0pt
 \item[1)] $\mathcal{L}_1 \mathcal{L}_2(\mathcal{L}_1+ \mathcal{L}_2)+\mathcal{H}\mathcal{L}_1 \mathcal{L}_2$:
 $\widetilde{S}_{8}$, $\widetilde{S}_{9}$ , $\widetilde{D}4(b=\pm 2)C$,
\item[2)] $\mathcal{L}_1 \mathcal{L}_2(\mathcal{L}_1+ \mathcal{L}_2)+\mathcal{H}^2\mathcal{L}_1$: $\widetilde{D}4(b=0)B$, $\widetilde{D}4(b=0)C$,
\item[3)] $\mathcal{L}_1 \mathcal{L}_2(\mathcal{L}_1+ \mathcal{L}_2)+\mathcal{H}\mathcal{L}_1 \mathcal{L}_2+\gamma\mathcal{H}^2\mathcal{L}_1$:
$\widetilde{D}4(\gamma=b^{-2})B$, $\widetilde{D}4(\gamma =\frac{b^2-4}{4b^2})C$,
 \item[4)] $\mathcal{L}_1^2\mathcal{L}_2 +\mathcal{H}\mathcal{L}_2^2+\mathcal{H}^2\mathcal{L}_2$: $\widetilde{D}2C$,
 $\widetilde{D}3B$, $\widetilde{D}3C$,
 \item[5)] $\mathcal{L}_1^2 \mathcal{L}_2$: $\widetilde{E}_{17}$,$\widetilde{E}_{8}$, $\widetilde{S}_{2}$, $\widetilde{S}_{4}$, $\widetilde{E}_{7}$,
 $\widetilde{D}4A$,
 \item[6)] $\mathcal{L}_1^2 \mathcal{L}_2+\mathcal{H}^2\mathcal{L}_2$: $\widetilde{E}_{1}$, $\widetilde{E}_{19}$,
 \item[7)] $\mathcal{L}_1^3$: $\widetilde{E}_{10}$, $\widetilde{E}_{15}$, $\widetilde{S}_{1}$,
 \item[8)] $\mathcal{L}_1^3+\mathcal{H}^2\mathcal{L}_1+i\frac{2}{3\sqrt{3}}\mathcal{H}^3$: $\widetilde{E}_{2}$, $\widetilde{E}_{9}$, $\widetilde{D}2A$,
 \item[9)] $\mathcal{L}_1^3+\mathcal{H}\mathcal{L}_1\mathcal{L}_2$: $\widetilde{D}1A(b=0)$, $\widetilde{D}1B$,
 \item[10)] $\mathcal{H}^2\mathcal{L}_1$: $\widetilde{E}_{11}$, $\widetilde{D}1C$.
\end{enumerate}
Hence it is enough to consider the eighteen quadratic algebras:
\begin{gather*}
\widetilde{E}_{17},\quad \widetilde{E}_{16}, \quad \widetilde{E}_{1}, \quad \widetilde{E}_{3}', \quad \widetilde{E}_{2}, \quad \widetilde{E}_{11}, \widetilde{E}_{10}, \quad \widetilde{E}_{20}, \quad \widetilde{S}_{9}, \quad \widetilde{S}_{7}, \quad \widetilde{D}4C \ (b\neq 0),\\
\widetilde{D}4C \ (b=0), \quad \widetilde{D}2B,\quad \widetilde{D}2C, \quad \widetilde{D}1A \ (b\neq0), \quad \widetilde{D}1A \ (b=0), \quad \widetilde{D}3A, \quad \widetilde{D}3D.
\end{gather*}
We divide the quadratic algebras into four sets according to the highest non-vanishing $F^{(i)}$ term in the decomposition
\begin{gather*} {\mathcal{R}}^2={\mathcal{F}}(\mathcal{H} ,\mathcal{L}_1,\mathcal{L}_2)=
{\mathcal{F}}^{(3)}({\mathcal{L}}_1,{\mathcal{L}}_2)+{\mathcal{H}}{\mathcal{F}}^{(2)}({\mathcal{L}}_1,{\mathcal{L}}_2)
 +{\mathcal{H}}^2{\mathcal{F}}^{(1)}({\mathcal{L}}_1,{\mathcal{L}}_2)+{\mathcal{H}}^3{\mathcal{F}}^{(0)}.\end{gather*}
Explicitly we def\/ine
\begin{itemize}\itemsep=0pt
 \item subset $A$: $F^{(3)}\neq 0$: $\widetilde{E}_{17}$, $\widetilde{E}_{16}$, $\widetilde{E}_{1}$, $\widetilde{E}_{2}$, $\widetilde{E}_{10}$, $\widetilde{S}_{9}$, $\widetilde{S}_{7}$, $\widetilde{D}4C$ $(b\neq 0)$, $\widetilde{D}4C$ $(b=0)$, $\widetilde{D}2B$, $\widetilde{D}2C$, $\widetilde{D}1A$ $(b\neq 0)$, $\widetilde{D}1A$ $(b=0)$, $\widetilde{D}3D$,
 \item subset $B$: $F^{(3)}=0$, $F^{(2)}\neq 0$: $\widetilde{E}_{20}$, $\widetilde{D}3A$,
 \item subset $C$: $F^{(3)}=F^{(2)}=0, F^{(1)}\neq 0$: $\widetilde{E}_{11}$,
 \item subset $D$: $F^{(3)}=F^{(2)}=F^{(1)}= 0$: $\widetilde{E}_{3}'$.
\end{itemize}
Since $F^{(3)}$ is a homogeneous polynomial of degree three in two variables, it has exactly three roots (zeros) on $\mathbb{C}\mathbb{P}^1$ counting multiplicities. We divide subset $A$ according to the number of dif\/ferent roots of $F^{(3)}$ as follows
 \begin{itemize}\itemsep=0pt
 \item three distinct roots, subset $A_1$: $\widetilde{S}_{9}$, $\widetilde{S}_{7}$, $\widetilde{D}4C$ $(b\neq 0)$, $\widetilde{D}4C$ $(b=0)$,
 \item a repeated root, subset $A_2$: $\widetilde{E}_{17}$, $\widetilde{E}_{16}$, $\widetilde{E}_{1}$, $\widetilde{D}2B$, $\widetilde{D}2C$,
 $\widetilde{D}3D$,
 \item a triple root, subset $A_3$: $\widetilde{E}_{2}$, $\widetilde{E}_{10}$, $\widetilde{D}1A$ $(b\neq 0)$, $\widetilde{D}1A$ $(b=0)$.
 \end{itemize}

 \subsubsection{Some general observations on contractions of quadratic algebras}
Note that the group \begin{gather*}
G=\left\{\left(\begin{matrix}
A_{1,1} & A_{1,2}&A_{1,3} \\
A_{2,1}&A_{2,2} &A_{2,3} \\
0 &0 &A_{3,3}
\end{matrix}\right)\in {\rm GL}(3,\mathbb {C})\right\}
\end{gather*} is a complex algebraic group. The formula
\begin{gather*}
\left( A\cdot \mathcal{F} \right)(x_1,x_2,x_3) = \det (A_2)^2{\mathcal{F}}\big(A^{-1}(x_1,x_2,x_3) \big)
 \end{gather*}
def\/ines an algebraic action of $G$ on the complex algebraic variety $\mathbb {C}^{[3]}[x_1,x_2,x_3]$, of homogeneous polynomials of degree three in three variables. It is well known (see, e.g., \cite[Section~1.8]{Borel}) that any orbit is an algebraic variety and the boundary of any orbit is also an algebraic variety of a smaller dimension. From this consideration it is clear that if $O_1$ and $O_2$ are two orbits such that $O_2\subset \overline{O_1}\setminus {O_1}$ then $O_1\nsubseteq \overline{O_2}$. This imply that we have a partial order by inclusion of orbit closure. In our language this implies that if a quadratic algebra $B$ is a contraction of a quadratic algebra~$A$ and~$A$ and~$B$ are not isomorphic then~$A$ is not a contraction of~$B$. Hence for any contraction of
quadratic algebras between non isomorphic ones we automatically get a proof of the nonexistence of a contraction in the opposite direction.

Furthermore, under the action of $G$ on $\mathbb {C}^{[3]}[x_1,x_2,x_3]$ the sets $A$, $A_1$, $A_2$, $A_3$, $B$, $C$, $D$ are stable and hence consists
of a union of orbits. It is easy to see that the hierarchy of the orbits allow us to consider contractions only in the following direction
\begin{gather*}
 A_1 \longrightarrow A_2 \longrightarrow A_3 \longrightarrow B \longrightarrow C \longrightarrow D.
\end{gather*}

We further note that every quadratic algebra can be contracted to $\widetilde{E}_{3}'$ and $\widetilde{E}_{3}'$ can not be contracted further, hence we we shall ignore this system. In the rest of this section we realize many contraction of quadratic algebras and demonstrate how one can prove that some contractions do not exist. At the end of the section we summarize our results in a diagram.

 \subsection{Explicit contractions}\label{sec7.3}
Using matrices of the form
\begin{gather*}
A(\epsilon)=\left(\begin{matrix}
1& 0&0 \\
0&1 &0 \\
0 &0 &\epsilon
\end{matrix}\right)^{-1}, \qquad A(\epsilon)=\left(\begin{matrix}
1& 0&0 \\
0&1 &0 \\
0 &0 &\epsilon
\end{matrix}\right)^{-1}, \qquad A(\epsilon)=\left(\begin{matrix}
1& 0&0 \\
\epsilon^{-2}&\epsilon^{-1} &0 \\
0 &0 &\epsilon^{-3}
\end{matrix}\right)^{-1},\\
A(\epsilon)=\left(\begin{matrix}
\epsilon^{-2} & \epsilon^{-1}/\sqrt{2}&0 \\
\epsilon^{-2} &-\epsilon^{-1}/\sqrt{2} &0 \\
0 &0 &1
\end{matrix}\right)^{-1}, \qquad A(\epsilon)=\left(\begin{matrix}
\epsilon^{-1} & 0&0 \\
0&1 &0 \\
0 &0 &\epsilon^{-3}
\end{matrix}\right)^{-1}
\end{gather*}
we can (respectively) realize contractions of the following forms:
\begin{alignat*}{3}
& \mathcal{L}_1^2\mathcal{L}_{2}+O(\mathcal{H})\longrightarrow \mathcal{L}_1^2\mathcal{L}_{2}\colon \quad &&
 {D}_{3D}, \ {D}_{2C}, \ {D}_{2B}, \ {E}_{16}, \ {E}_{1} \longrightarrow {E}_{17},&\\
&\mathcal{L}_1^3+O(\mathcal{H})\longrightarrow \mathcal{L}_1^3\colon \quad && {D}_{1A}, \ {D}_{1A}, \ {E}_{2}\longrightarrow {E}_{10},&\\
& \mathcal{L}_1^2\mathcal{L}_{2}+O(\mathcal{H})\longrightarrow \mathcal{L}_1^3\colon \quad&&
{D}_{3D}, \ {D}_{2C}, \ {D}_{2B}, \ {E}_{16}, \ {E}_{1}, \ {E}_{17} \longrightarrow {E}_{10},&\\
& \mathcal{L}_1\mathcal{L}_{2}(\mathcal{L}_1+\mathcal{L}_{2})+O(\mathcal{H})\longrightarrow \mathcal{L}_1^3\colon \quad &&
S_{9}, \ {S}_{7}, \ D_{4C}, \ {D}_{4C} \longrightarrow {E}_{10},&\\
& \mathcal{L}_1\mathcal{L}_{2}(\mathcal{L}_1+\mathcal{L}_{2})+O(\mathcal{H})\longrightarrow
 \mathcal{L}_1^2\mathcal{L}_{2}\colon \quad &&S_{9}, \ {S}_{7}, \ D_{4C}, \ {D}_{4C} \longrightarrow \ {E}_{17}.&
\end{alignat*}

To get an idea of the type of contractions that exist, below we list realizations of all other abstract contractions of~$S_9$ that we have found.

 \textbf{Contraction of $\boldsymbol{S_9}$ to $\boldsymbol{E_{20}}$}:
$A(\epsilon)=\left(\begin{matrix}
\epsilon& 0&0 \\
0&\epsilon &0 \\
0 &0 &\epsilon^{2}
\end{matrix}\right)$.

 \textbf{Contraction of $\boldsymbol{S_9}$ to $\boldsymbol{E_1}$}:
$A(\epsilon)=\left(\begin{matrix}
\epsilon^{-1}& 0&-i\epsilon^{-1} \\
0&1&0\\
0 &0 &2i\epsilon^{-1}
\end{matrix}\right)^{-1}$.

 \textbf{Contraction of $\boldsymbol{S_9}$ to $\boldsymbol{E_{11}}$}:
$A(\epsilon)=\left(\begin{matrix}
\epsilon^{-1}& 0&0 \\
0&\epsilon^{-1}&-\epsilon^{-3/2} \\
0 &0 &1
\end{matrix}\right)^{-1}$.

 \textbf{Contraction of $\boldsymbol{S_9}$ to $\boldsymbol{E_2}$ }:
$A(\epsilon)=\left(\begin{matrix}
64 \epsilon^2&64 \epsilon^2&64\epsilon^2+i\frac{128}{\sqrt{3}}\epsilon \\
i8 \epsilon&-i8 \epsilon&0 \\
0 &0 &-i128\sqrt{3}\epsilon
\end{matrix}\right)$.

\subsection{Non-contractions}
Here we demonstrate how one can show that there are some quadratic algebras that can not be contracted to some others.

\textbf{Non-contraction of $\boldsymbol{E_{10}}$ to $\boldsymbol{E_{11}}$.} Under a transformation of the form
\begin{gather*}
\left(\begin{matrix}
\mathcal{L}_1 \\
\mathcal{L}_2 \\
\mathcal{H}
\end{matrix}\right) =\left(\begin{matrix}
\alpha(\epsilon) & \beta(\epsilon)&a(\epsilon) \\
\gamma(\epsilon)&\delta(\epsilon) &b(\epsilon) \\
0 &0 &c(\epsilon)
\end{matrix}\right)\left(\begin{matrix}
\mathcal{L}_1^{\epsilon} \\
\mathcal{L}_2^{\epsilon} \\
\mathcal{H}^{\epsilon}
\end{matrix}\right)=A\left(\begin{matrix}
\mathcal{L}_1^{\epsilon} \\
\mathcal{L}_2^{\epsilon} \\
\mathcal{H}^{\epsilon}
\end{matrix}\right).
\end{gather*}
We let $(\alpha\delta-\beta \gamma)=|A|$ and we denote the coef\/f\/icient of $\mathcal{L}_1^i\mathcal{L}_2^j\mathcal{L}_3^k$ in the transformed expression for $\mathcal{R}^2$ by $C_{i,j,k}$. Then we see that
\begin{gather*} C_{3,0,0}=\frac{\alpha^3}{A^2}\longrightarrow 0,\qquad
C_{2,0,1}=\frac{3a^2 \alpha}{A^2}\longrightarrow 1,\qquad
C_{0,0,3}=\frac{a^3}{A^2}\longrightarrow 0,\end{gather*}
which imply that $\frac{\alpha}{a}\longrightarrow 0$, $\frac{a}{\alpha}\longrightarrow 0$, which is a contradiction.

 All abstract contractions relating free constant curvature and Darboux quadratic algebras are listed in Diagram~1. There is an abstract contraction of~$Q(A)$ to $Q(B)$ if and only if there is an arrow in the diagram pointing from~$A$ to~$B$.

 \begin{figure}[t]\centering
 $$\xymatrix{
&S9\ar[ddddddrr]\ar[rrrrd]\ar@/^/[rrdd]\ar@/_/[rrrddd]\ar@/^/[rrdddd]\ar[ddddd] & D4C(b\neq0)\ar[ddddddr]\ar[rdd]\ar[rdddd]\ar[dddddl] & &D4C(b=0)\ar[ldd]\ar[ldddd]\ar[ddddddl]&S7\ar[llld]\ar@/_/[lldd]\ar[dddl]\ar[ddddll]\ar@/^/[ddddddll] \\
D3D\ar@/_/[rrrddd]\ar[rrrd]\ar[dddddrrr]&D2C\ar[rrd]\ar[r]\ar[rrddd]\ar[rrrdd]\ar[dddddrr]& E16\ar[rd]\ar[rddd]\ar@/^/[rrdd]\ar[ddddl]\ar[dddddr]& &D2B\ar[lddd]\ar[ld]\ar[dd]\ar[dddddl]& E1\ar[lld]\ar[llddd]\ar[ldd]\ar[lllldddd]\ar[dddddll]\\
&&&E17\ar[dd]\ar[rd]\ar@/^/[dddd]&&\\
{D1A}(b\neq0)\ar[rrrddd]\ar[rrrd]\ar@/^/[rrrr]\ar[ddr]\ar[dd]& &{D1A}(b=0)\ar[rd]\ar[ddl]\ar@/_/[rddd]& &E2\ar[ld]\ar[lddd]\\
&&&E10&&\\
D3A\ar[r]\ar[rrrd] & E20\ar[rrd] &&&&\\
&&&E11&&
}$$
{\small \textbf{Diagram 1.} Abstract contractions relating free nondegenerate 2D quadratic algebras.}
\end{figure}

\subsection{Comparison between abstract contractions and B\^ocher contractions}
In this section we compare abstract contractions and B\^ocher contractions. In previous sections we studied abstract contractions between the quadratic algebras of the free 2D nondegenerate second order superintegrable systems:
\begin{gather*}
\widetilde{E}_{17}, \ \widetilde{E}_{16}, \ \widetilde{E}_{1}, \ \widetilde{E}_{3}', \ \widetilde{E}_{2}, \ \widetilde{E}_{11}, \
\widetilde{E}_{10}, \ \widetilde{E}_{20}, \ \widetilde{S}_{9}, \ \widetilde{S}_{7}, \ \widetilde{D}4C \ (b\neq 0), \ \widetilde{D}4C \ (b=0), \ \widetilde{D}2B,\\
 \widetilde{D}2C, \ \widetilde{D}1A \ (b\neq0), \ \widetilde{D}1A \ (b=0), \ \widetilde{D}3A, \ \widetilde{D}3D.
\end{gather*}
By abuse of notation we denoted a superintegrable system and its corresponding free quadratic algebra by the same symbol (one of those 18 options above). It should be noted that dif\/ferent superintegrable systems may have the same free quadratic algebra, as was shown in Section~\ref{sec721}. For this section we shall use the symbol~$\widetilde{S}9$ to denote the superintegrable system on the complex two sphere and use the symbol $Q(\widetilde{S}9)$ to denote the free quadratic algebra of~$\widetilde{S}9$. Similar conventions will be used for all other systems. For example,
\begin{gather*}
Q\big(\widetilde{E}_{17}\big)=Q\big(\widetilde{E}_{8}\big)=Q\big(\widetilde{S}_{2}\big)=Q\big(\widetilde{S}_{4}\big) =Q\big(\widetilde{E}_{7}\big)=Q\big(\widetilde{D}4A\big).
\end{gather*}
As we just observed superintegrable systems that share the same free quadratic algebra can still live on dif\/ferent manifolds. Note that in general superintegrable systems with identical free quadratic algebras are not even related by a St\"ackel transform. In the above mentioned cases, $\widetilde{E}_{17}$, $\widetilde{E}_{8}$, and $\widetilde{E}_{7}$ belong to the same St\"ackel equivalence class which is not the St\"ackel equivalence class of the (St\"ackel equivalent) systems $\widetilde{S}_{2}$, $\widetilde{S}_{4}$, and $\widetilde{D}4A$. Since the classif\/ication of abstract contractions of abstract quadratic algebras is not complete we cannot simply compare B\^ocher contractions and abstract contractions of quadratic algebras. Instead we are led to ask the following.

\begin{Question}
Let $A$ and $B$ be 2D second order nondegenerate superintegrable systems. Suppose that there is a contraction of free abstract quadratic
algebras $Q(A)\longrightarrow Q(B)$. Are there necessarily superintegrable systems $A'$ and $B'$ such that
\begin{enumerate}\itemsep=0pt
\item[1)] $Q(A)=Q(A')$, $Q(B)=Q(B')$,
\item[2)] there is a B\^ocher contraction from $A'$ to $B'$.
\end{enumerate}
\end{Question}

The answer is no. Indeed the following 7 abstract contractions have no geometric counterpart as B\^ocher contractions:
\begin{alignat*}{3}
&1)\ &&Q(S7)\to Q(E16),&\\
&2)\ &&Q(D4C)=Q(D4B)\to Q(E20),&\\
&3)\ &&Q(D2C)=Q(D3B)=Q(D3C)\to Q(E16),&\\
&4)\ &&Q(E16)\to Q(E20),&\\
&5)\ &&Q(E17)=Q(E8)=Q(S2)=Q(S4)=Q(E7)=Q(D4A)\to Q(E20),&\\
&6)\ &&Q(D1A)\to Q(D3A),&\\
&7)\ &&Q(D3A)\to Q(E20).&
\end{alignat*}
These contractions are indicated in Diagram~1. In~\cite[Table~1]{KMS2016} all B\^ocher contractions of these systems are given. In these cases there is no chain of B\^ocher contractions linking any of the origin systems to the target system. However, there are ways that these abstract contractions can have practical signif\/icance. In the paper~\cite{Spost} Post shows that the structure equations for all of the quantum 2D quadratic algebras can be represented by either dif\/ferential or dif\/ference operators depending on one complex variable.

\looseness=1 In some cases a model of one quadratic algebra contracts to a model of another quadratic algebra, even though there is no geometrical counterpart. An example of this can be found in~\cite{KMP2014} where the Askey scheme is described through contraction of a~dif\/ference operator model of~$S9$ to dif\/ferential and dif\/ference operator models of other quadratic algebras, see Fig.~\ref{caption1}. This is the part of the scheme related to contractions of nondegenerate systems, the top half. The bottom half corresponds to restrictions of nondegenerate to degenerate systems, contractions of degenerate systems and contractions to Heisenberg (singular) systems. On the left side are the orthogonal polynomials that realize f\/inite-dimensional representations of the quadratic algebras and on the right those that realize inf\/inite-dimensional bounded below representations. Note that some of the contractions go from a superintegrable system to itself in a nontrivial manner. We did not explicitly mention these in our classif\/ication since they are so numerous, but they are pointed out in references~\cite{KMS2016} and~\cite{KM2014}. All of the contractions of the quadratic algebra representations are induced by geometric contractions of the corresponding superintegrable systems except for the~2 on the left and~2 on the right with the longest arrows, contractions of~$E1$ to~$E3'$. The limits of Hahn and dual Hahn polynomials to Krawtchouk polynomials and continuous Hahn and dual Hahn polynomials to Meixner--Pollaczek polynomials are abstract contractions of~$E1$ to~$E3'$ not induced by geometric contractions. This is an example of how abstract quadratic algebra contractions can be realized and shown to have practical signif\/icance.

\begin{figure}[t]\centering
\includegraphics[scale=0.607]{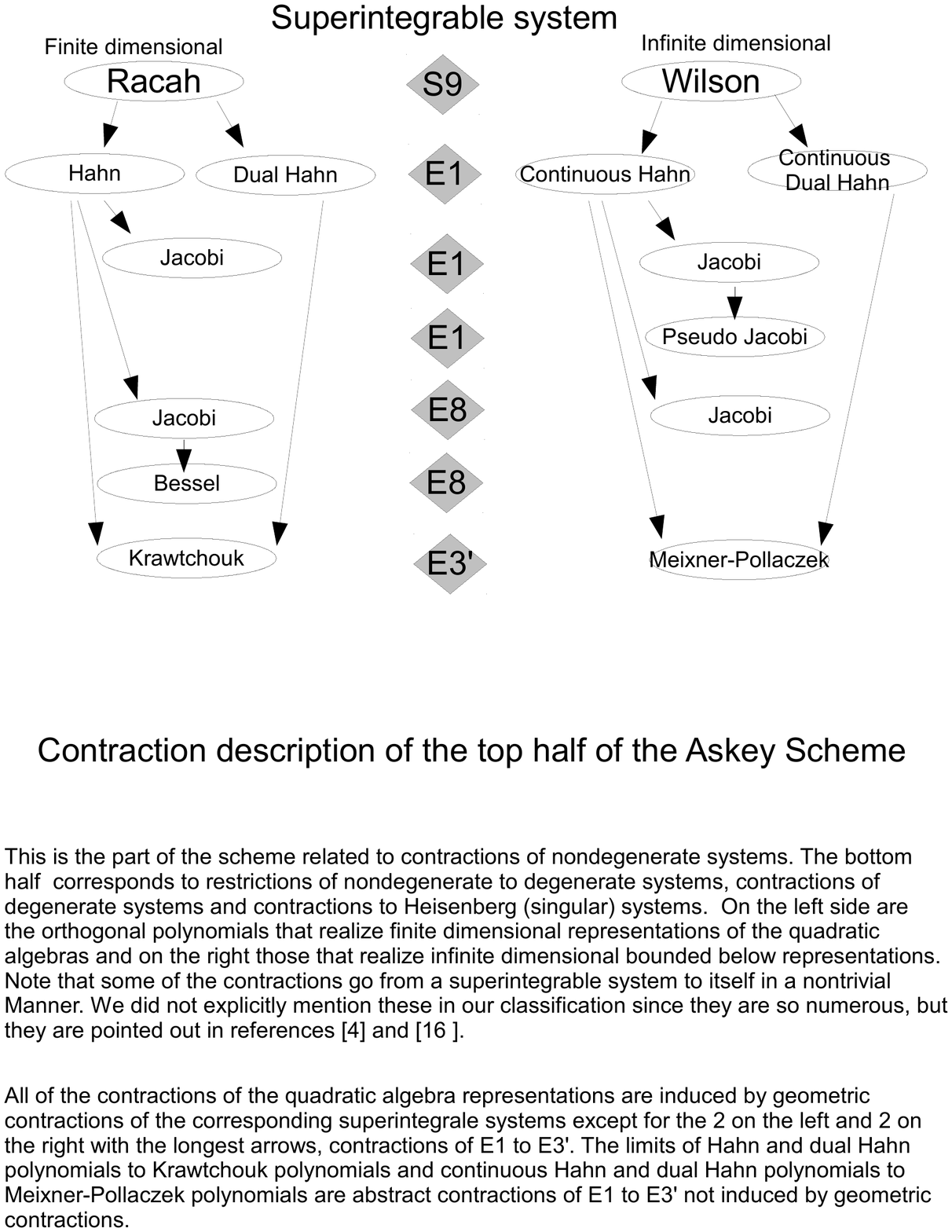}
\caption{Contractions of nondegenerate systems and the top half of the Askey scheme.}\label{caption1}
\end{figure}

\subsection[Contractions between geometric quadratic algebras and abstract quadratic algebras]{Contractions between geometric quadratic algebras\\ and abstract quadratic algebras}

In Section~\ref{Comparison} we identif\/ied the canonical forms of the geometric quadratic algebras inside the space of all canonical forms of abstract quadratic algebras. In this section we give examples for contractions between geometric and abstract quadratic algebras.

\subsubsection{Contraction of an abstract quadratic algebra to a geometric one} There are plenty of such contractions. The canonical forms of the geometric system $\widetilde{E}_{17}$ is given by $\mathcal{L}_1^2 \mathcal{L}_2$. As noted in Section~\ref{Comparison} (and following the labeling of Table~\ref{table1}), the case of $2a$, that is, a canonical form that is given by
\begin{gather*}\mathcal{L}_1^2\mathcal{L}_2+\mathcal{L}_1\mathcal{H}^2+\mathcal{L}_2\mathcal{H}^2+c_{10}\mathcal{H}^3 \end{gather*}
with $c_{10}\in \mathbb{C}$ is not arising from any free 2D, second order nondegenerate superintegrable system. The matrices $A(\epsilon)=\operatorname{diag}(1,1,\epsilon^{-1})$ contract any of the systems above to the geometric system~$\mathcal{L}_1^2 \mathcal{L}_2$. Similarly, the same matrices realize contractions from the non-geometric quadratic algebras with canonical forms $3a$ with $c_{10}=0$: $\mathcal{L}_1^3+\mathcal{L}_1\mathcal{H}^2 $, $3b$: $\mathcal{L}_1^3+\mathcal{H}^3$, and $3e$ with $c_{10}=1$: $\mathcal{L}_1^3+\mathcal{H}\mathcal{L}_2^2 + \mathcal{H}^3$ to $\mathcal{L}_1^3$ that arises from the superintegrable system~$\widetilde{E}_{10}$.

\subsubsection{Contraction of a geometric quadratic algebra to a non-geometric one}
As noted in Section~\ref{Comparison} the canonical form $1c$, $\mathcal{L}_1\mathcal{L}_2(\mathcal{L}_1+\mathcal{L}_2)$ is not arising from any free~2D, second order nondegenerate superintegrable system. The matrices $A(\epsilon)=\operatorname{diag}(1,1,\epsilon^{-1})$ realize contractions from the geometric quadratic algebras~$\widetilde{D}4(b)B$, $\widetilde{D}4(b)C$ (with any value of~$b$), $\widetilde{S}_{7}$~and~$\widetilde{S}_{9}$ to $\mathcal{L}_1\mathcal{L}_2 (\mathcal{L}_1+\mathcal{L}_2)$. There are many other examples.

\section{Conclusions and discussion}
In this paper we have solved the problem of classifying all 2D nondegenerate free abstract quadratic algebras, and have made major steps in determining which of these can be realized as the symmetry algebras of 2D 2nd order superintegrable systems with nondegenerate potential. We have given a precise def\/inition and classif\/ication of B\^ocher contractions, which are the principle mechanisms for relating superintegrable systems via limit relations. We have made major steps toward a classif\/ication of contractions of abstract quadratic algebras and determining which of these can be realized as B\^ocher contractions. In each case we have found some abstract algebras and contractions that cannot be realized geometrically as superintegrable systems or as B\^ocher contractions. We know that some of these cases correspond to contractions of models irreducible representations of quadratic algebras belonging to superintegrable systems where the algebraic representations contract, but the geometrical systems do not. They already occur in the Askey scheme. However, other cases are as yet unclear. In his theory B\^ocher introduces and some of the authors developed a limit procedure for obtaining so-called type~2 separable coordinate systems, see~\cite{KMR1984}, which can be interpreted as limits where the null cone is preserved but the action is nonlinear. This may f\/ill in gaps in our classif\/ication but has not been worked out.

Up to now we have only classif\/ied abstract contractions of quadratic algebras that arise from superintegrable systems on constant curvature and Darboux spaces. We have not yet solved the problem of classifying contractions of abstract quadratic algebras that do not arise in this way, though the B\^ocher contractions are known.

One can see from the tables in \cite{KMS2016} that in general there are often multiple distinct contractions that link two geometric quadratic algebras, even multiple distinct contractions that take a~quadratic algebra to itself. The abstract contractions classif\/ied here should be though of as providing existence proofs that a contraction between to abstract quadratic algebras does or does not exist, not giving information on the multiplicities
of such contractions.

In a paper under preparation we classify all abstract 2D 2nd order superintegrable systems with degenerate potential and, in this case, work out {\it all} possible abstract contractions and compare the results with those for B\^ocher contractions of geometric superintegrable systems.

All of the concepts introduced here are clearly also applicable for dimensions $n\ge 3$~\cite{CKP2015}. Already we have used the special B\^ocher contractions for $n=3$ to derive new families of superintegrable systems in 3 dimensions~\cite{EM2017}. This paper can be considered as part of the preparation for these more complicated cases.

\subsection*{Acknowledgements}
This work was partially supported by a grant from the Simons Foundation (\#~208754 to Willard Miller Jr.\ and by CONACYT grant (\# 250881 to M.A.~Escobar-Ruiz). The author M.A.~Escobar-Ruiz is grateful to ICN UNAM for the kind hospitality during his visit, where a~part of the research was done, he was supported in part by DGAPA grant IN108815 (Mexico).

\pdfbookmark[1]{References}{ref}
\LastPageEnding

\end{document}